\DeclareMathAlphabet\mathbfcal{OMS}{cmsy}{b}{n}
\newcommand{\too}{\! \! \to \! \!}
\definecolor{darkgreen}{rgb}{0,.5,0}
\newcolumntype{R}{>{\raggedleft\arraybackslash}X}
\newcolumntype{C}{>{\centering\arraybackslash}X}
\newcolumntype{L}{>{\raggedright\arraybackslash}X}
\tikzset{dotted pattern/.style args={#1 and #2}{
   postaction=decorate,
   decoration={
    markings,
    mark=
    between positions 0 and 1 step #2
      with
      {
       \fill[radius=#1,\circlecolor] (0,0) circle;
      }
    }
  },
  dotted pattern/.default={1pt and 1.5mm},
}
\newcommand{\ket}[1]{\left| #1 \right>} 
\newcommand{\beq}{\begin{equation}}
\newcommand{\eeq}{\end{equation}}
\newcommand{\LOSR}[0]{\ifmmode\textup{\upshape LOSR}\else{\textup{\upshape LOSR}}\fi}
\newcommand{\LOSE}[0]{\ifmmode\textup{\upshape LOSE}\else{\textup{\upshape LOSE}}\fi}
\newcommand{\DetLOSR}[0]{\ifmmode\textup{\upshape LDO}\else{\textup{\upshape LDO}}\fi}
\newcommand{\LDO}[0]{\ifmmode\textup{\upshape LDO}\else{\textup{\upshape LDO}}\fi}
\newcommand{\LSO}[0]{\ifmmode\textup{\upshape LSO}\else{\textup{\upshape LSO}}\fi}
\newcommand{\LDTNO}[0]{\ifmmode\textup{\upshape LDTNO}\else{\textup{\upshape LDTNO}}\fi}
\newcommand{\LO}[0]{\ifmmode\textup{\upshape LO}\else{\textup{\upshape LO}}\fi}
\newcommand{\LOCC}[0]{\ifmmode\textup{\upshape LOCC}\else{\textup{\upshape LOCC}}\fi}
\newcommand{\NS}[0]{\ifmmode\textup{\upshape NS}\else{\textup{\upshape NS}}\fi}
\newcommand{\SEP}[0]{\ifmmode\textup{\upshape SEP}\else{\textup{\upshape SEP}}\fi}
\newcommand{\GPTCC}[0]{\ifmmode\textup{\upshape GPTCC}\else{\textup{\upshape GPTCC}}\fi}
\newcommand{\nhphantom}[1]{\sbox0{#1}\hspace{-\the\wd0}}
\newcommand*{\LOSEconv}{\xmapsto{\LOSE}}
\newcommand*{\LOSEinterconv}{\xleftrightarrow{\LOSE}}
\newtheorem{theo}{Theorem}
\newtheorem{thm}[theo]{Theorem}
\newtheorem{opq}{Open Question}
\newtheorem{cor}[theo]{Corollary}
\newtheorem{defn}{Definition}
\theoremstyle{definition} 
\theoremstyle{plain}
\providecommand{\customgenericname}{}
\newcommand{\newcustomtheorem}[2]{%
  \newenvironment{#1}[1]
  {%
   \renewcommand\customgenericname{#2}%
   \renewcommand\theinnercustomgeneric{##1}%
   \innercustomgeneric
  }
  {\endinnercustomgeneric}
}
\begin{document}
\title{Postquantum common-cause channels: the resource theory of local operations and shared entanglement}
\author{David Schmid}
\affiliation{Perimeter Institute for Theoretical Physics, 31 Caroline St. N, Waterloo, Ontario, N2L 2Y5, Canada}
\affiliation{Institute for Quantum Computing and Dept. of Physics and Astronomy, University of Waterloo, Waterloo, Ontario N2L 3G1, Canada}
\email{dschmid@perimeterinstitute.ca}
\author{Haoxing Du}
\affiliation{Perimeter Institute for Theoretical Physics, 31 Caroline St. N, Waterloo, Ontario, N2L 2Y5, Canada}
\author{Maryam Mudassar}
\affiliation{Perimeter Institute for Theoretical Physics, 31 Caroline St. N, Waterloo, Ontario, N2L 2Y5, Canada}
\author{Ghi Coulter-de Wit}
\affiliation{Perimeter Institute for Theoretical Physics, 31 Caroline St. N, Waterloo, Ontario, N2L 2Y5, Canada}
\author{Denis Rosset}
\affiliation{Perimeter Institute for Theoretical Physics, 31 Caroline St. N, Waterloo, Ontario, N2L 2Y5, Canada}
\author{Matty J. Hoban}
\affiliation{Department of Computing, Goldsmiths, University of London, New Cross, London SE14 6NW, United Kingdom}
\begin{abstract}
We define the type-independent resource theory of local operations and shared entanglement (LOSE). 
This allows us to formally quantify postquantumness in common-cause scenarios such as the Bell scenario.
Any nonsignaling bipartite quantum channel which cannot be generated by LOSE operations requires a {\em postquantum common cause} to generate, and constitutes a valuable resource. Our framework allows LOSE operations that arbitrarily transform between different types of resources, which in turn allows us to undertake a systematic study of the different manifestations of postquantum common causes. 
Only three of these have been previously recognized, namely postquantum correlations, postquantum steering, and `non-localizable' channels, all of which are subsumed as special cases of resources in our framework.
Finally, we prove several fundamental results regarding how the type of a resource determines what conversions into other resources are possible, and also places constraints on the resource's ability to provide an advantage in distributed tasks such as nonlocal games, semiquantum games, steering games, etc. 
 \end{abstract}
\maketitle


\section{Introduction}

In space-like separated experiments,
it is well-known that quantum theory can generate correlations which violate Bell inequalities~\cite{Bellreview},  witnessing the fact that they cannot be explained by a classical common-cause process~\cite{Wood2015}. 
In a common-cause scenario, such correlations can only be explained by a common cause described by a more general theory, such as a quantum common cause~\cite{Allenetal,Lorenz2019,CostaShrapnel} (which for our purposes can be viewed  simply as a composite quantum system in an entangled state). 

However, it is also well-known that in common-cause scenarios, quantum theory cannot generate correlations which are {\em maximally} nonclassical, as was pointed out by Tsirelson~\cite{Cirel'son1980} and by Popescu and Rohrlich~\cite{Popescu1994}. That is, nonsignaling correlations which achieve the logically maximal violation of a Bell inequality generally cannot be achieved by local measurements on entangled states, and yet {\em are} achievable by {\em postquantum} common causes~\cite{barrettGPT}, that is, those described by a generalized probabilistic theory~\cite{hardy01,barrettGPT} (GPT) beyond quantum theory.


Correlations, often termed box-type resources, or simply `boxes'~\cite{wolfe2020quantifying}, are not the only postquantum resources. More general types of postquantum resources\footnote{Throughout this work, the adjective postquantum refers to the common cause required to generate the given resource, while the resources themselves are always channels taking quantum states to quantum states (as opposed to channels which act on postquantum states, which we do not consider).
} have been studied in a few previous works~\cite{causallocaliz,nosigboxes,PerinottiLOSEex,Hoban_2018,BobWI}, which found instances of channels which cannot be generated by local operations and shared entanglement (LOSE)~\cite{Gutoski2008}.

Ref.~\cite{causallocaliz} was the first to show that the set of nonsignaling channels is strictly larger than the set of channels realizable by local operations and shared entanglement. 
The authors refer to the former set of channels as `causal' and to the latter set of channels as `localizable', but we will not adopt this terminology (largely because the term `causal' is ambiguous and overloaded). 
A key motivation in Ref.~\cite{causallocaliz} (which also applies to our work) is to better understand the restrictions on operations that are imposed by special relativity.
Subsequently, Refs.~\cite{nosigboxes,PerinottiLOSEex} made some progress in characterizing the set of postquantum channels, and Ref.~\cite{PerinottiLOSEex} showed that there exist postquantum channels that are not entanglement-breaking.

Additionally, Ref.~\cite{BobWI} studied the set of postquantum resources which arise in generalized steering scenarios in the bipartite case. First, Ref.~\cite{PostquantumSteering} noted that all logically possible bipartite quantum steering assemblages can be generated using local operations and shared entanglement, and hence that postquantum common causes do not give an advantage for steering of quantum states. Initially this motivated the study of multipartite postquantum steering~\cite{PostquantumSteering,LocalMeasurement,Hoban_2018}.  However, returning to the bipartite setting, Ref.~\cite{BobWI} considered a generalized type of steering scenario in which the steered party (Bob) has a classical input system. These were termed Bob-with-input steering assemblages, and Ref.~\cite{BobWI} provides examples of such resources which cannot be generated using local operations and shared entanglement.

In fact, there are a wide variety of nontrivial postquantum resources, including boxes, distributed measurements, and Bob-with-input steering assemblages as special cases, as well as measurement-device-independent steering assemblages, channel-steering assemblages, ensemble-preparing channels, and so on.

We introduce a unified framework for the study of postquantum common cause channels, which subsumes all of these types of resources as special cases. This framework is the resource theory~\cite{resthry} of local operations and shared entanglement. Any common-cause process which cannot be realized by LOSE operations is a valuable resource and a signature of a postquantum common cause. 
Furthermore, by considering transformations between resources that can be enacted using LOSE operations, one can quantitatively characterize the postquantumness of any given resource.

The study of such postquantum resources sheds light on the space of logically conceivable processes which are not realizable with quantum common causes. As evidenced by the example of the Popescu-Rohrlich (PR) box, such processes serve as useful foils~\cite{chiribella_spekkens_2016} for teaching us about quantum theory. Furthermore, postquantum processes provide clues for where one might find physics beyond quantum theory~\cite{hardy2005probability}.
Conversely, a recent line of research involves the search for reasonable physical principles which distinguish nonlocal quantum correlations from postquantum correlations: macroscopic locality~\cite{Navascues2010}, information causality~\cite{Pawlowski2009,Pawlowski2016}, limits on communication complexity~\cite{Brassard2006} or nonlocal computation~\cite{Linden2007, Broadbent2016}, local orthogonality~\cite{Fritz2013a} or the Specker principle~\cite{Cabello2012,Gonda2018}.
Almost quantum correlations~\cite{Navascues2015,Henson2015} were originally developed to provide an example of a postquantum theory that nevertheless satisfies several of the reasonable physical principles above.
This postquantum theory was first studied in the context of nonlocal boxes, and was then extended to generalized steering scenarios~\cite{PostquantumSteering,BobWI} and beyond~\cite{Hoban_2018}.
Discriminating quantum from postquantum (as opposed to discriminating nonclassical from classical) can also be useful in the field of device-independent quantum information~\cite{NPA}, such as for self-testing \cite{branciard, shca}, random certification \cite{Colbeckthesis,Colbeck_2011,pironio, Supic2017}, and the study of nonlocal games~\cite{Wehner}.
We also note recent experimental explorations of the boundary of quantum theory~\cite{Mazurek2019,Liang2019}.

Most of the results in this article refer not to specific instances of postquantum resources, but rather to what can be said about a postquantum resource merely from knowing its type (e.g. whether it is a box, an assemblage, a distributed measurement, etc). Most of our analysis and results have close analogues in Ref.~\cite{schmid2020type}, which undertook a similar study for the type-independent resource theory of local operations and shared randomness (LOSR), which is the appropriate resource theory for quantifying {\em nonclassicality} of common-cause processes~\cite{wolfe2020quantifying} (to be contrasted with the resource theory of LOSE studied here, which is appropriate for quantifying {\em postquantumness} of common-cause processes). Both of these works are grounded conceptually in the idea that correlations should be deemed classical or nonclassical depending on what sort of causal (and inferential) theories are required to reproduce them while respecting the conservative network structure~\cite{Wood2015,wolfe2020quantifying,schmid2020unscrambling}.

In Section~\ref{envelp}, we introduce the types of resources we consider; that is, the enveloping set of resources in our resource theory. 
In Section~\ref{sectypes}, we introduce the free resources and free transformations. We then characterize the types of bipartite resources which can possibly be nonfree---that is, which can exhibit postquantumness. 
In Section~\ref{freetrans}, we discuss how the free transformations induce a preorder over all resources, which quantitatively determines how valuable each resource is.
In Section~\ref{comppq}, we state some results (proved in Appendix~\ref{sec:examples}) about the relative value of various postquantum resources that have been considered in the literature.
In Section~\ref{encodingsec}, we discuss type encodings, which formalize the idea that some resource types are able to exhibit all of the same manifestations of postquantumness as another type of resource. We begin a systematic study of which types of resources can encode the postquantumness of all resources of other types, and we show that the distributed measurement type can encode the postquantumness of {\em all} resources of all types. 
In Section~\ref{unifiedgames}, we recall the framework of distributed games, introduced in Ref.~\cite{schmid2020type}, which unifies and generalizes many traditional distributed tasks and games, including entanglement witnessing, nonlocal games, semiquantum games, steering games, teleportation games, and so on. We then discuss how players can implement type-changing LOSE operations on whatever shared resources they have access to in order to generate an optimal strategy for the game they are playing.
In Section~\ref{implictypetoperf}, we discuss implications from the type of a shared resource to how well it can perform at various games. 
In Section~\ref{secopq}, we conclude with some interesting open questions.

\section{The enveloping theory} \label{envelp}

Following Ref.~\cite{schmid2020type,rosset2019characterizing}, the set of {\bf resources} we consider in this work are the multipartite, completely-positive~\cite{NielsenAndChuang,Schmidcausal} trace preserving (CPTP) maps that are nonsignaling~\cite{rosset2019characterizing} between every pair of parties. We will focus on the bipartite case for simplicity; however, most of our framework and results generalize immediately to arbitrarily many parties.

Different {\bf types} of resources are distinguished by whether the input and output systems of each party are trivial, classical, or quantum. As in Ref.~\cite{rosset2019characterizing}, a single party's system is said to be trivial if it has dimension one, is said to be classical if all operators on its Hilbert space are diagonal, and is otherwise said to be quantum. If a single party has a collection of input (output) systems, we will consider them as a single effective input (output) system.
We will focus on bipartite resources shared between two parties, Alice and Bob, and will denote Alice's input and output systems by $X$ and $A$, respectively, and Bob's input and output systems by $Y$ and $B$, respectively. Each of these systems has trivial ($\mathsf{I}$), classical ($\mathsf{C}$), or quantum ($\mathsf{Q}$) type, and so we can denote the {\bf global type} of the resource by $T := \mathbf{Type}[\mathsf{X}]\mathbf{Type}[\mathsf{Y}] \too \mathbf{Type}[\mathsf{A}] \mathbf{Type}[\mathsf{B}]$. The {\bf partition-type} of a single party is specified by the types of its inputs and outputs, e.g. in Alice's case we write $T_A := \mathbf{Type}[\mathsf{X}] \too \mathbf{Type}[\mathsf{A}]$.

The most commonly studied types of resources, depicted in Figure~\ref{types} and discussed in more detail in Refs.~\cite{schmid2020type,rosset2019characterizing}, are quantum states, of type $\mathsf{II} \too \mathsf{QQ}$, steering assemblages~\cite{Einstein1935, schrodinger_1935,wisesteer, Skrzypczyk2014, Gallego2015, Piani2015a, Cavalcanti2017, Uola2019,Pusey2013}, of type $\mathsf{CI} \too \mathsf{CQ}$, teleportages~\cite{telep, PhysRevA.99.032334,Hoban_2018}, of type $\mathsf{QI} \too \mathsf{CQ}$, nonsignaling boxes~\cite{Barrett2005, brunner2013Bell}, of type $\mathsf{CC} \too \mathsf{CC}$, { distributed measurements} (or { semiquantum channels})~\cite{sq,Supic2017,Hoban_2018}, of type $\mathsf{QQ} \too \mathsf{CC}$, { MDI-steering channels}~\cite{steer}, of type $\mathsf{CQ} \too \mathsf{CC}$, channel steering assemblages~\cite{channelsteer}, of type $\mathsf{CQ} \too \mathsf{CQ}$, { Bob-with-input steering assemblages}~\cite{Hoban_2018,BobWI}, of type $\mathsf{CC} \too \mathsf{CQ}$, { distributed ensemble-preparing channels}~\cite{causallocaliz}, of type $\mathsf{CC} \too \mathsf{QQ}$, and generic nonsignaling bipartite { quantum channels}, of type $\mathsf{QQ} \too \mathsf{QQ}$.
There are also five types of nontrivial bipartite resources that have not (to our knowledge) been previously studied, namely $\mathsf{QC} \too \mathsf{CQ}$, $\mathsf{CQ} \too \mathsf{QQ}$, $\mathsf{IQ} \too \mathsf{QQ}$, $ \mathsf{QQ} \too \mathsf{CQ}$, and $\mathsf{CI} \too \mathsf{QQ}$.

\begin{figure}[htb!]
\centering
\includegraphics[width=0.499\textwidth]{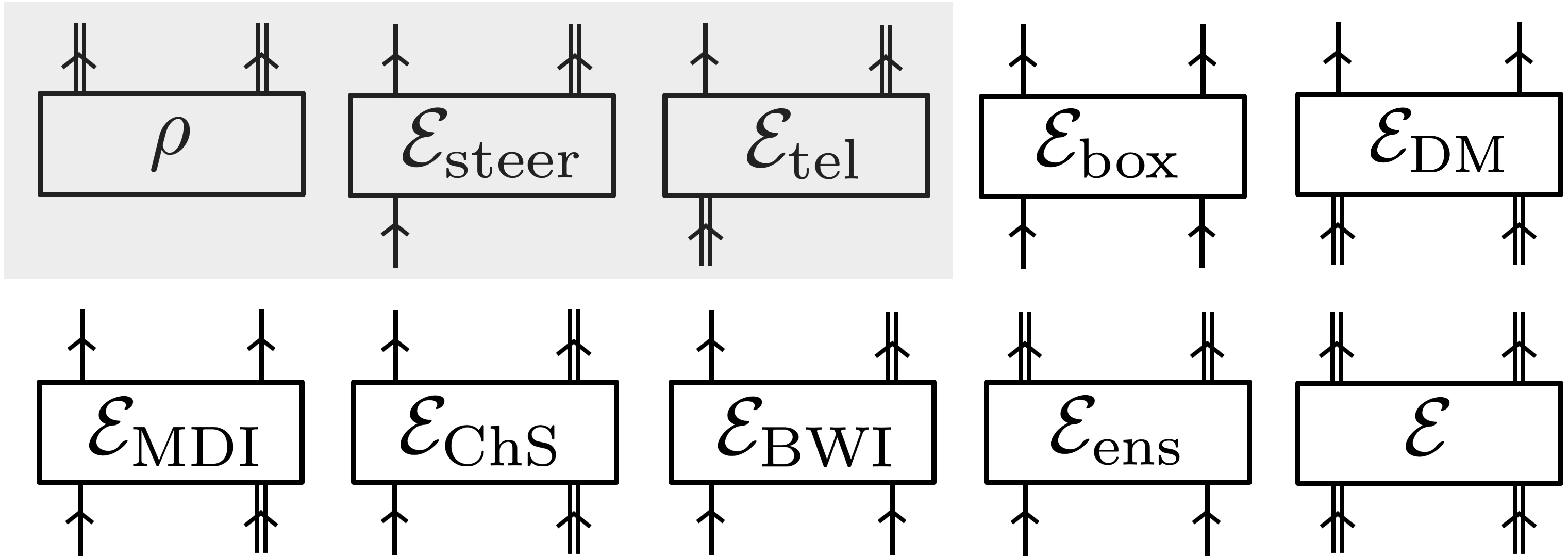}
\caption{Common types of nonsignaling resources, where classical systems are represented by single wires and quantum systems are represented by double wires. As we will show, every resource of the three shaded types is LOSE-free.
From left to right, top to bottom, we depict: quantum states, steering assemblages, teleportages, boxes, distributed measurements (or semiquantum channels), measurement-device-independent steering assemblages, channel steering assemblages, Bob-with-input steering assemblages, distributed ensemble-preparations, and generic nonsignaling bipartite quantum channels. 
}\label{types}
\end{figure}

We denote the set of all nonsignaling resources (of all types) as $\mathbf{E}_{\rm ns}$, and we take this as our enveloping theory.
One could consider a different set of multipartite channels as one's enveloping theory, such as the set $\mathbf{E}_{\rm gcc}$ of multipartite CPTP maps which can be realized by GPT common causes, or the set $\mathbf{E}_{\rm sig}$ of all CPTP maps (including signaling maps). 
Note that $\mathbf{E}_{\rm gcc} \subset \mathbf{E}_{\rm ns} \subset \mathbf{E}_{\rm sig}$. 

It is an interesting open question whether $\mathbf{E}_{\rm gcc}$ is a {\em strict} subset of $\mathbf{E}_{\rm ns}$, or whether the two are equal.
\begin{opq} \label{opq1}
Do there exist bipartite nonsignaling quantum channels which cannot be realized by GPT common causes?
\end{opq}

If the two sets coincide, then one would have a simple characterization of the set of quantum channels realizable by a GPT common-cause, which a priori appears to be a difficult set to characterize. 
For the special case of box-type resources, it was shown in Ref.~\cite{barrettGPT} that GPT common-cause processes {\em can} generate all nonsignaling boxes. On the other hand, it is known that in generic causal scenarios, the set of correlations (over some observed classical variables) that can be generated by unobserved GPT common causes is constrained not only by equality constraints, but also by nontrivial inequality constraints~\cite{Henson2014} (beyond positivity, normalization, and nonsignaling constraints). This makes it plausible that even in the simple common-cause scenario we are considering, the set of processes over quantum systems might also be constrained beyond the nonsignaling constraint.

Apart from its intrinsic interest, characterizing the set $\mathbf{E}_{\rm gcc}$ is worth exploring because it constitutes the most natural choice for an enveloping set of processes in our resource theory. Given a nonsignaling channel that cannot be generated by LOSE operations, there are two possible explanations: either the channel was generated by a postquantum common cause, or it was generated by fine-tuned cause-effect influences.\footnote{The former view is in the spirit of Ref.~\cite{wolfe2020quantifying,Allenetal}, while the latter, endorsed (e.g.) in Ref.~\cite{nosigboxes}, seems to be more traditional.} If the causal structure in a given scenario is a common-cause structure (with no cause-effect relations between the parties), then {\em only} the postquantum common-cause explanation is viable. Such an assumption on the causal structure can be grounded, for example, by relativity theory in the context of space-like separated parties. We will take this approach, and will refer to valuable LOSE channels as {\em resources of postquantum common cause}, or simply `postquantum resources'. As such, the enveloping theory we are really interested in is $\mathbf{E}_{\rm gcc}$. If there do exist any processes in $\mathbf{E}_{\rm ns}$ that are not in $\mathbf{E}_{\rm gcc}$, then these could {\em only} be explained by a resource of fine-tuned cause-effect influence, not by a resource of postquantum common cause, and hence our interpretation of such resources would be inconsistent. As there is no evidence that such processes exist, we will for the purpose of this paper assume they do not, allowing us to take the well-characterized set $\mathbf{E}_{\rm ns}$ as our enveloping theory.\footnote{Subsequent work~\cite{gptccchannels} by Selby et. al. has shown that indeed, no such processes exist, answering Open Question 1 in the negative.}

The third option for the enveloping set, $\mathbf{E}_{\rm sig}$, is also sensible, as it is the most general set of channels that is natural in our context. However, this added generality makes the resulting resource theory more difficult to characterize. In any case, the choice of the enveloping set is not especially consequential, since the value of resources depends only on the free transformations, not on the enveloping set of resources. Hence, we have chosen the set which leads to the most interesting insights, namely $\mathbf{E}_{\rm ns}$.

\section{ LOSE resources and transformations} \label{sectypes}

The key component of any resource theory is a set of {\bf free} processes, that is, resources and transformations which can be implemented at no cost.
Here, we wish to quantify the notion of {\em postquantum common causes} as resources. Consequently, the set of free processes are taken to be all and only those that can be generated by quantum common causes---that is, by shared entanglement. We imagine that all local quantum operations are free, but that all forms of communication between parties are inaccessible. In other words, the free processes are defined by the set of local operations and shared entanglement, or {\bf LOSE operations}. 

 A generic free resource is depicted in Fig.~\ref{freeset2}(a), while free resources of three specific types are shown in Fig.~\ref{freeset2}(b), Fig.~\ref{freeset2}(c), and Fig.~\ref{freeset2}(d).
Any resource in the enveloping theory that {\em cannot} be generated in this manner is {\bf nonfree}: a valuable resource which requires a postquantum common-cause to generate. We provide various explicit examples of valuable resources (of several different types) in Appendix~\ref{sec:examples}.

A generic free transformation is depicted in Fig.~\ref{typechange}. We elaborate on the set of free transformations in Section~\ref{freetrans}, as well as how these free transformations generate the essential structure of our resource theory, that is, the preorder which determines the value of every resource in the enveloping theory. We give three examples of (type-changing) free transformations in Appendix~\ref{sec:examples}.

\begin{figure}[htb!]
\centering
\includegraphics[width=0.499\textwidth]{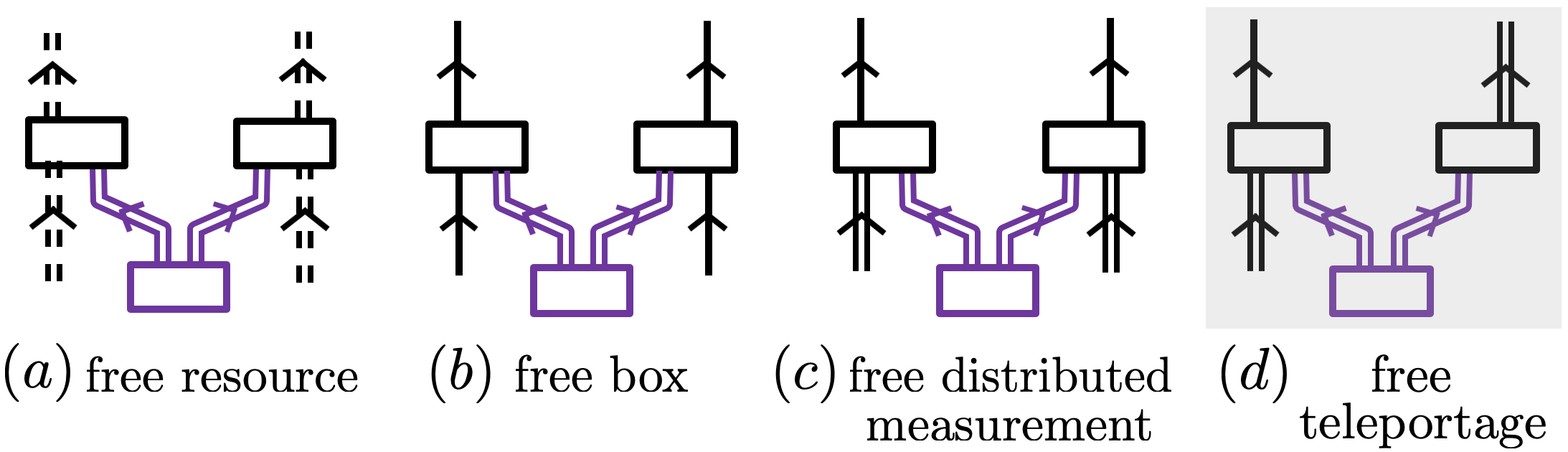}
\caption{Dashed wires depict systems with arbitrary type. A free resource (of arbitrary type) is one which can be generated as in (a), by shared entanglement (in purple) and local operations (in black). Free boxes are those which decompose as in (b), and free distributed measurements are those which decompose as in (c). As a consequence of Theorem~\ref{trivialtypes}, all teleportages are free, since they admit of a decomposition as in (d).}\label{freeset2}
\end{figure}

Many types of resources are trivial, in the sense that quantum theory can generate every resource of that kind, which in turn implies that there are no postquantum resources of that type. 
\begin{defn}
A resource type is {\bf LOSE-trivial} if every resource of that type can be generated by $\LOSE$ operations.
\end{defn}
\noindent Of the resource types in Fig.~\ref{types}, entangled states are all free (by definition), while steering assemblages and teleportages are (less obviously) also all free. We now provide an elegant characterization of the bipartite resource types that can give a postquantum advantage.

\begin{thm} \label{trivialtypes}
A bipartite resource type is LOSE-trivial if and only if it has at least one trivial input or output.
\end{thm}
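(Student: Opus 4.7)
The plan is to prove the two directions separately. For the forward direction, I would handle the cases of a trivial system on Alice's side; the cases on Bob's side follow by symmetry. \textit{Trivial output:} if Alice's output $A$ is trivial, then nonsignaling from Alice to Bob forces the full output (which lives on $B$ alone) to be independent of Alice's input $X$; the resource then reduces to Alice discarding $X$ while Bob applies a local channel $Y \to B$, which is LOSE with no shared resource at all.

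\textit{Trivial input:} if Alice's input $X$ is trivial, the resource is a channel $\mathcal{N}: Y \to AB$ whose marginal on $A$ is a fixed state $\rho_A$. I would compute its Choi state $J_{ABY'}$ and check that nonsignaling forces the $AY'$-marginal to factor as $\rho_A \otimes \id_{Y'}/d_Y$ (the cross terms vanish because $\Tr_B \mathcal{N}(|i\rangle\langle j|) = \delta_{ij}\rho_A$ by polarization from the fact that $\Tr_B \mathcal{N}(|\psi\rangle\langle\psi|) = \rho_A$ on every pure input). Taking a purification $|\Psi\rangle_{ABY'R}$ and using the standard fact that a purification of a product state factors on a suitable redecomposition of the purifying system, we can write $|\Psi\rangle = |\phi_A\rangle_{AC} \otimes |\phi_{Y'}\rangle_{Y'D}$ with $CD \cong BR$ as Hilbert spaces. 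This immediately yields a LOSE implementation: Alice and Bob share $|\phi_A\rangle_{AC}$ (Bob holds $C$); Bob applies to his input $Y$ the isometry $V: Y \to D$ whose Choi state is $|\phi_{Y'}\rangle_{Y'D}$, reinterprets the resulting state on $CD$ as a state on $BR$, and traces out $R$ to output $B$. A direct calculation using the reconstruction $\mathcal{N}(\rho_Y) = d_Y \Tr_{Y'}[(\id_{AB} \otimes \rho_Y^T) J_{ABY'}]$ confirms that the output is indeed $\mathcal{N}(\rho_Y)$.

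For the converse, given any type whose four systems are all nontrivial, I would lift a postquantum nonsignaling box $P: \mathsf{CC} \too \mathsf{CC}$ (e.g.\ a PR box): precompose each quantum input with a computational-basis measurement, postcompose each quantum output with a computational-basis preparation, and leave classical ports untouched, obtaining a resource $R$ of the target type. If $R$ admitted a LOSE implementation, then sandwiching it locally between computational-basis preparations on its classical inputs and computational-basis measurements on its classical outputs would recover $P$ as a LOSE resource, contradicting the postquantumness of $P$. The main obstacle is the purification-factorization step in the trivial-input case, which hinges on the fact that a purification of a product state factors as a tensor product on a suitable redecomposition of the purifying Hilbert space, together with the Choi--Jamio\l kowski conversion of $|\phi_{Y'}\rangle_{Y'D}$ into an actual isometry $V: Y \to D$ that Bob applies to his input; the remaining cases (trivial output, the symmetric cases on Bob's side, and the lifting argument) are essentially bookkeeping.
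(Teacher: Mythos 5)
Your proof is correct, and for the substantive half of the theorem it takes a genuinely different route from the paper. The paper disposes of both the trivial-output and trivial-input cases by invoking the Eggeling--Schlingemann--Werner result that semicausal channels are semilocalizable: a channel nonsignaling from one party to the other can be realized with a one-way quantum message, which in the trivial-output case collapses to a trace (hence a product of local channels) and in the trivial-input case becomes a shared bipartite state followed by a local operation on Bob's side --- manifestly LOSE. You instead prove the needed special cases directly: for a trivial output, you observe that nonsignaling forces the surviving output to depend only on Bob's input; for a trivial input, you show by polarization that the Choi state's $AY'$-marginal factorizes as $\rho_A \otimes \id_{Y'}/d_Y$, purify, and use the factorization of purifications of product states (up to an isometry $W: CD \to BR$ on the purifying system --- your ``$CD \cong BR$'' should really be an isometric embedding, but this is cosmetic) to extract exactly the shared-entangled-state-plus-local-isometry decomposition that the general theorem would hand you. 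What your approach buys is self-containedness: the full semicausal-implies-semilocalizable theorem is nontrivial, whereas the special case with a trivial port on one side admits this short elementary argument; what the paper's approach buys is brevity and a uniform treatment of both cases from a single cited result. For the other direction both arguments are the same in essence --- embed the PR box into any all-nontrivial type --- though you are more explicit than the paper in spelling out the dephasing/preparation lift and the sandwiching argument showing that a LOSE realization of the lifted resource would yield a LOSE realization of the PR box itself.
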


\begin{proof}
First, note that there exist examples of boxes which are not LOSE-free; a well-known example of such a box is the PR box~\cite{Popescu1994}, defined in Appendix~\ref{sec:examples}. Since every type of bipartite resource whose inputs and outputs are all nontrivial has the PR box as a special case, it follows that none of these types is LOSE-trivial.

Conversely, we prove that every bipartite type which has a trivial input or output is LOSE-trivial. 
The argument relies on the fact, proven in Ref.~\cite{Eggeling_2002} that channels which are nonsignaling from one party to a second can always be realized without any causal influences from the first to the second.
Consider the case where an output of the bipartite channel is trivial; taking this to be Alice's output without loss of generality, the  channel (denoted $\mathcal{E}$) is depicted on the left-hand-side of Fig.~\ref{Eggproof}(a). As we are assuming that $\mathcal{E}$ is nonsignaling from Alice's input to Bob's output, the result of Ref.~\cite{Eggeling_2002} states that this channel can be decomposed into two channels---one of which has no output---as shown. But the only CPTP channel with no outputs is the trace, which factorizes, as shown in the second equality. The resulting channel is manifestly constructed from LOSE operations, and hence $\mathcal{E}$ can be generated by LOSE operations.
Next, consider the case where an input of the bipartite channel is trivial; taking this to be Alice's input without loss of generality, the channel (denoted $\mathcal{E}'$) is depicted on the left-hand-side of Fig.~\ref{Eggproof}(b). As we are assuming that $\mathcal{E}'$ is nonsignaling from Bob's input to Alice's output, the result of Ref.~\cite{Eggeling_2002} states that this channel can be decomposed as shown. But this decomposition requires only LOSE operations, and hence $\mathcal{E}'$ can also be generated by LOSE operations.
\begin{figure}[htb!]
\centering
\includegraphics[width=0.4\textwidth]{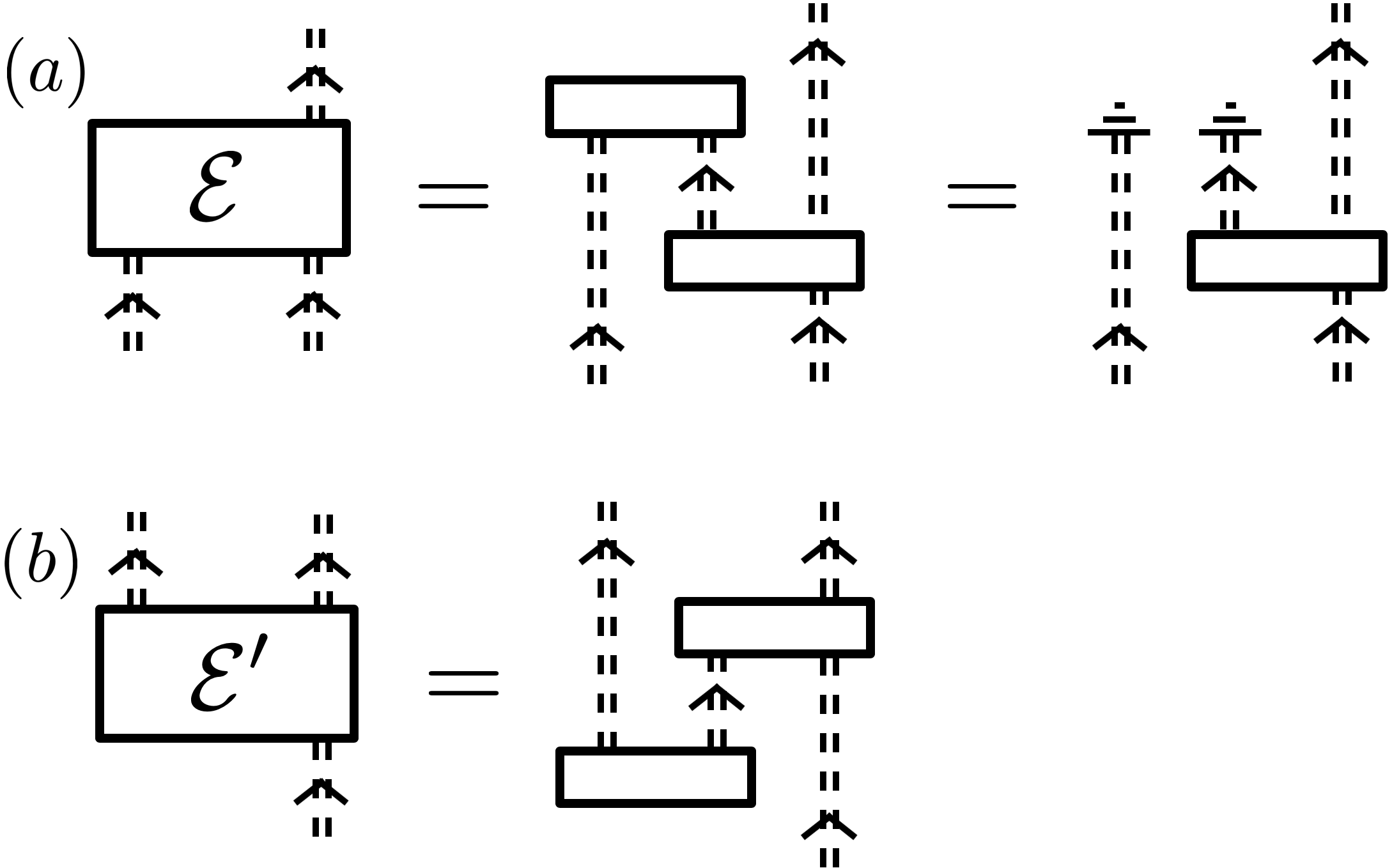}
\caption{ 
(a) Every channel $\mathcal{E}$ of the form on the left-hand-side which is nonsignaling from Alice to Bob can be decomposed into LOSE operations, as on the right-hand-side.
(b)  Every channel $\mathcal{E}'$ of the form on the left-hand-side which is nonsignaling from Bob to Alice can be decomposed into LOSE operations, as on the right-hand-side.
} \label{Eggproof}
\end{figure}
\noindent Since this sequence of operations is manifestly LOSE-free, it follows that $\mathcal{E}$ is necessarily LOSE-free, merely by virtue of its type.

\end{proof}

Theorem~\ref{trivialtypes} has the following corollary, which is formalized in Section~\ref{implictypetoperf}.

\begin{cor} \label{nopqadv}
No GPT can outperform quantum theory at steering or teleporting quantum states; that is, at any steering games~\cite{schmid2020type,Cavalcanti2017} or teleportation games~\cite{schmid2020type,telep,LipkaBartosik2019}.
\end{cor}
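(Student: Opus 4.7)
The plan is to deduce the corollary as a direct application of Theorem~\ref{trivialtypes}, the essential observation being that in a steering scenario (resp.\ a teleportation scenario), Bob is passively on the receiving end of a steered quantum state and therefore has a trivial input. More precisely, I would first recall that steering assemblages have type $\mathsf{CI} \too \mathsf{CQ}$ and teleportages have type $\mathsf{QI} \too \mathsf{CQ}$, so in both cases Bob's input system $Y$ is trivial. Theorem~\ref{trivialtypes} then immediately implies that every resource of either type is LOSE-free, i.e.\ realizable by local quantum operations on shared entanglement.

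Next I would translate this type-level triviality into a statement about games. A steering game (resp.\ teleportation game) is a distributed task in which the players' strategy is determined by the shared resource together with the local wirings into and out of it; the resource that is ``plugged in'' to the game is, by definition of the game, of steering (resp.\ teleportage) type. Any purported postquantum strategy therefore corresponds to some steering assemblage (resp.\ teleportage) that has been generated by a GPT common cause together with local wirings. Since every such resource is LOSE-free by the argument above, there exists a quantum strategy--using local operations on an entangled state followed by the very same wirings--that produces exactly the same steering assemblage (resp.\ teleportage), and hence exactly the same performance at the game.

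Finally I would note the (rigorous but essentially notational) step that the performance functional of a distributed game depends on the induced input--output behavior of the resource together with the game's scoring rule, so equality of the induced resources implies equality of the scores. This closes the chain: postquantum performance $=$ quantum performance, for both classes of games.

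The main conceptual ``obstacle'' is really just the second paragraph, namely being careful that the resource plugged into a steering or teleportation game is of the stated type in the first place; this is a matter of the framework's definitions (elaborated in Section~\ref{unifiedgames}) rather than a substantive difficulty, and is precisely what makes the corollary follow so cleanly from Theorem~\ref{trivialtypes}. No calculation is required; the corollary is a type-level consequence of the LOSE-triviality of any bipartite type with a trivial input or output.
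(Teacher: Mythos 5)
Your proof is correct and follows essentially the same route as the paper: the paper likewise derives the corollary from Theorem~\ref{trivialtypes} by noting that steering assemblages ($\mathsf{CI} \too \mathsf{CQ}$) and teleportages ($\mathsf{QI} \too \mathsf{CQ}$) have a trivial input, hence every strategy for such games is LOSE-free and the optimal score is attainable with shared entanglement alone (the paper additionally remarks that this is formalized via Theorem~\ref{proporder}, but the substance is identical to your argument).
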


  \subsection{Type-changing $\LOSE$ operations} \label{freetrans}

In keeping with the discussion above, the set of free transformations are all and only those that can be generated by LOSE operations. We denote this set by $\LOSE$, an element of this set by $\tau \in \LOSE$, and a generic resource of arbitrary type by $R$, where the action of $\tau$ on $R$ is denoted $\tau \circ R$.

Using the fact that the most general map taking a quantum channel to a quantum channel is a {\em quantum comb}~\cite{qcombs08,qcombs09}, we can decompose an arbitrary (bipartite) free transformation as in Fig.~\ref{typechange}. That is, any free transformation can be generated by each party locally implementing some quantum comb, where furthermore each local comb has as input one subsystem of an arbitrary shared entangled state. (Note that this entangled state can also be fed down the side channel of each local comb, and hence is effectively an input to both the pre-processings and the post-processings that constitute the local comb.) The set of LOSE operations is manifestly convex.

\begin{figure}[htb!]
\centering
\includegraphics[width=0.3\textwidth]{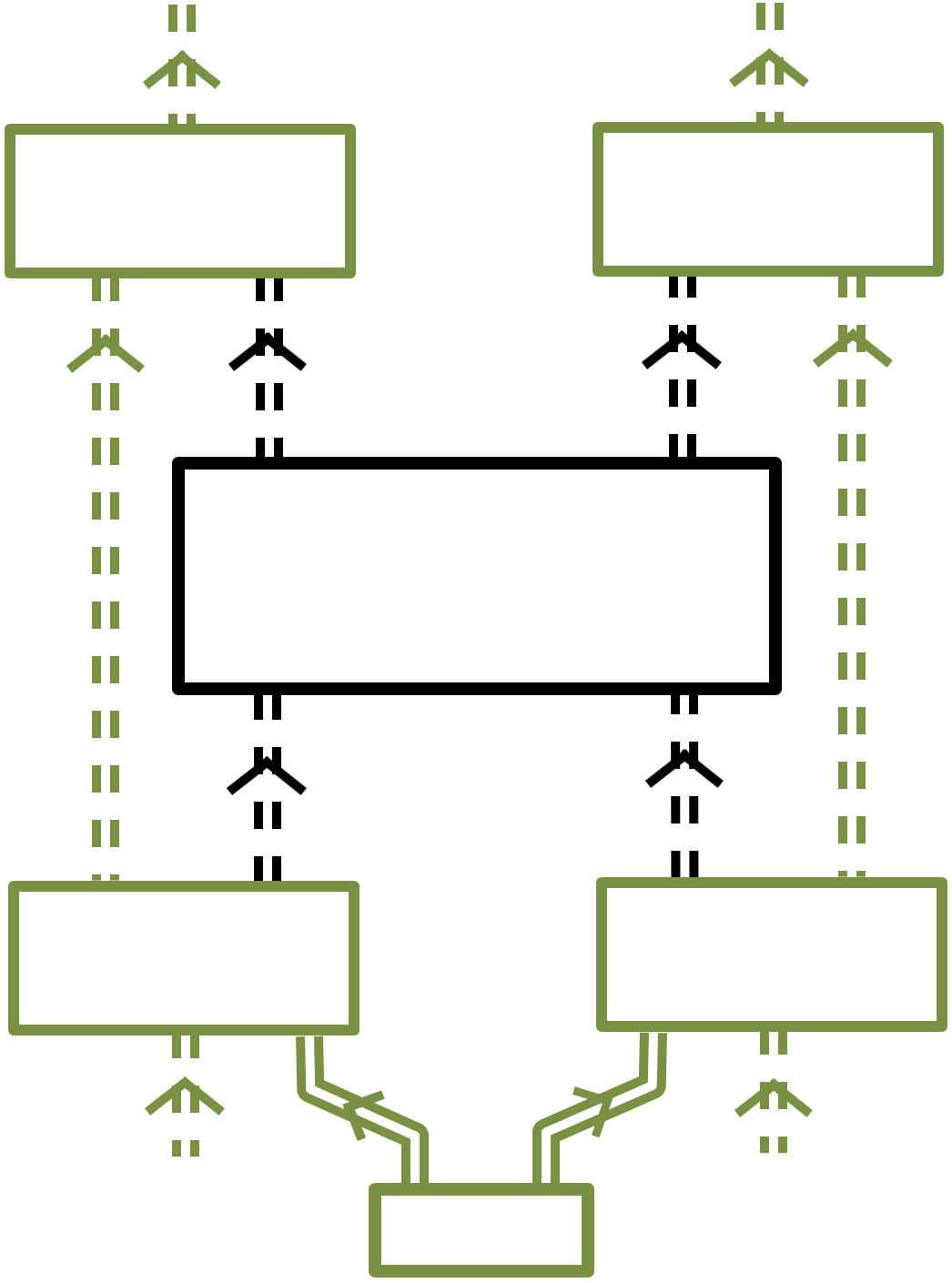}
\caption{ 
A generic $\LOSE$ transformation (in green) on a bipartite resource (in black) consists of a local comb for each party, where each local comb may have as input one subsystem of an arbitrary shared entangled state.
} \label{typechange}
\end{figure}

Critically, note that the local combs can change the type of a resource. That is, the types of the input and output systems of the transformed resource (in green in Fig.~\ref{typechange}) need not be the same types as the input and output systems of the source resource (in black in Fig.~\ref{typechange}). The type of systems of the source and target resource merely determine what sort of pre-and-post-processings make up the local combs.
We give three specific examples of type-changing transformations in Appendix~\ref{sec:examples}; in particular, see Fig.~\ref{Piani}, Fig.~\ref{PRtoBWI}, Fig.~\ref{PHHHtoDFP}, and the surrounding discussions.
 
This set of free transformations induces a preorder over the set of resources in the enveloping theory. 
\begin{defn} \label{LOSEorder}
For resources $R$ and $R'$ of two arbitrary types, we say that $R \succeq_{\rm LOSE} R'$, or equivalently that $R \LOSEconv R'$, iff there exists some $\tau \in \LOSE$ such that $R'= \tau \circ R$.
\end{defn}
\noindent In such cases, we say that $R$ is {\bf at least as postquantum} (or `as valuable') as $R'$. 
This is indeed a preorder, since the identity transformation is LOSE-free (reflexivity) and since $\LOSE$ transformations are closed under composition (transitivity).

Resources $R$ and $R'$ are {\bf equally postquantum} if they are freely interconvertible under $\LOSE$ transformations; that is, if $R \LOSEconv R'$ and $R' \LOSEconv R$. We denote this $R \LOSEinterconv R'$ and say that $R$ and $R'$ are in the same $\LOSE$ equivalence class.

Resources $R$ and $R'$ are {\bf incomparable} if neither can convert to the other under $\LOSE$ transformations.

\subsection{Comparing postquantumness across resource types} \label{comppq}
With this resource theory framework in hand, one can go about quantitatively comparing the postquantumness of resources, even across different types. In this section, we discuss what can be said about the relative value of some known examples of postquantum resources (of various different types) from the literature. 

The resources we discuss here are defined explicitly in Appendix~\ref{sec:examples}. There are five of these, of four different types, and we will refer to them based on their type and the initials of the authors who first considered them.
These include the {\em PR box}~\cite{Popescu1994}, the {\em PHHH ensemble-preparation}~\cite{nosigboxes}, the {\em SHSA Bob-with-input steering assemblage}~\cite{BobWI},
and two nonsignaling channels of type $\mathsf{QQ} \too \mathsf{QQ}$, namely the {\em BGNP channel}~\cite{causallocaliz} and the {\em DFP channel}~\cite{PerinottiLOSEex}. 
In Appendix~\ref{sec:examples}, we prove that four conversions between these resources are possible using LOSE operations, which in turn teaches us about their relative postquantumness. For instance, we show that the PR box is interconvertible with the PHHH ensemble-preparation; hence, the two are equally postquantum. 
Additionally, we show that the PHHH ensemble-preparation can be converted into the SHSA Bob-with-input-assemblage and the DFP channel. By transitivity, of course, this also implies that the PR box can be converted into either of these resources.

  These results (and some remaining open questions) are summarized in Fig.~\ref{knownconv}. 

\begin{figure}[htb!]
\centering
\includegraphics[width=0.48\textwidth]{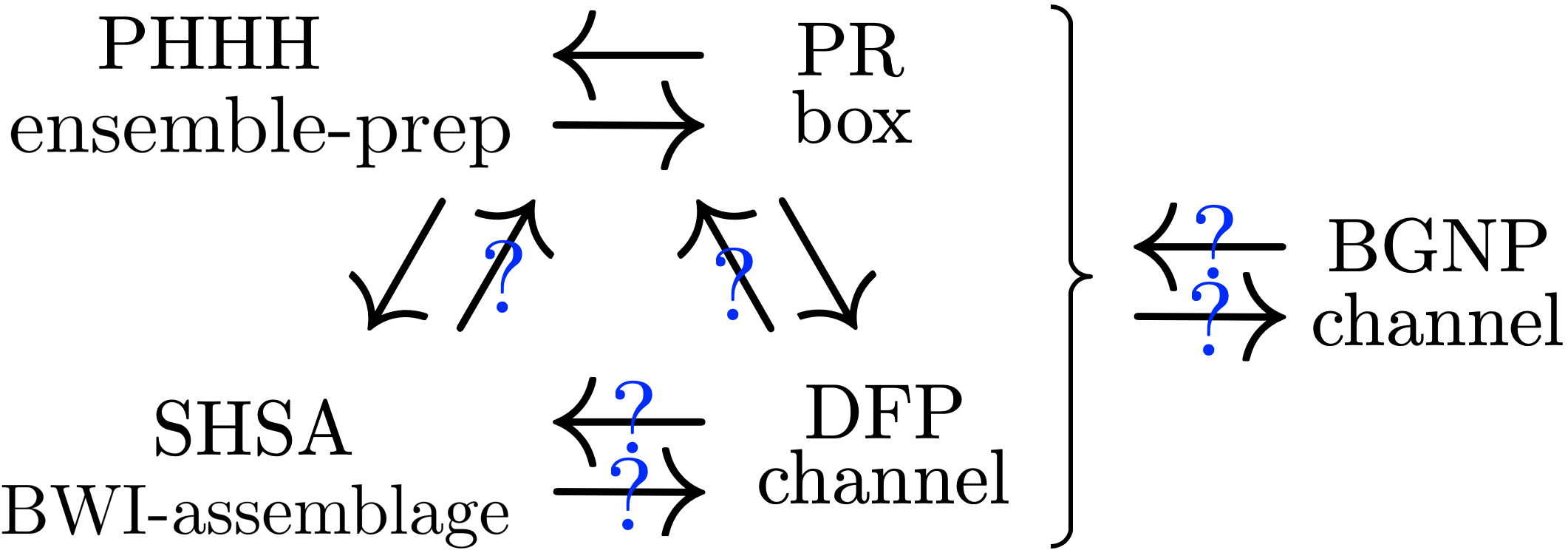}
\caption{ 
We prove that the PR box and the PHHH ensemble-preparation are interconvertible, and either of these can be converted into the SHSA Bob-with-input assemblage or into the DFP assemblage. The remaining conversion relations are unknown (although see the main text for some relevant facts).} \label{knownconv}
\end{figure}

We were not able to resolve the remaining conversion relations, indicated by the blue question marks in the figure. However, there are some hints as to whether or not some of these conversions are possible.
For example, the form of the BGNP channel\footnote{E.g., the fact that the dephasing that is applied must be identical in three subspaces but distinct in the fourth.} suggests that it can also be obtained from the PR box. Also, it was shown in Ref.~\cite{PerinottiLOSEex} that the DFP channel exhibits a non-maximal violation of the Clauser-Horne-Shimony-Holt (CHSH) inequality, which suggests that one might be able to construct a monotone from the CHSH functional (analogous to the constructions in Ref.~\cite{wolfe2020quantifying}), such that the value achieved by the PR box is higher than that achieved by the DFP channel. If this were the case, it would establish that the DFP channel could not be converted to the PR box, which in turn would prove that the PR box is {\em strictly} more postquantum. Finally, we note that  Ref.~\cite{BobWI} establishes that the SHSA Bob-with-input assemblage cannot be converted to any postquantum box (including the PR box) using operations which can be extended to include all local operations and shared randomness. This implies that, if the SHSA Bob-with-input assemblage can be converted to the PR box, the conversion will definitely require shared entanglement.

These results give us some insight into the structure of the type-independent resource theory of LOSE. First of all, it shows that the property of entanglement-breaking is not critical for postquantumness: the PR box, despite being completely entanglement-breaking, is at least as postquantum as the DFP channel, which is not entanglement-breaking. Additionally, we see that there are no examples of postquantum channels that have been proven to be inequivalent to the PR box, nor that have been proven to be strictly more resourceful than the PR box. 
Hence, it remains a possibility that {\em all} postquantum channels can be generated from box-type postquantum channels. This would be a striking result, implying that all manifestations of postquantumness in nature are extremely simple---and limited. It would also imply that the box-type encodes all other types, according to the definition of encoding in the next section. 
While we suspect that not all postquantum channels can be generated from postquantum boxes, this remains a critical open question. 
\begin{opq}
Do there exist LOSE-nonfree channels which cannot be freely generated using LOSE operations and arbitrary nonsignaling boxes?
\end{opq}

Finally, we learn that resources of more limited type (like boxes) can sometimes exhibit more postquantumness than resources of a more general type (like generic quantum channels). We elaborate on this in the next section.

\section{Encoding postquantumness of one type into postquantumness of another type} \label{encodingsec}

It is interesting to consider transformations from one type of resource to another which do not degrade the value of the initial resource. We will prove some general results that do not rely on the particular details of the initial resource, but depend only on its type. To do so, we require the following definition, which defines a preorder over types.
\begin{defn}
Global type $T$ {\bf encodes} global type $T'$, denoted $T \succeq_{\rm type} T'$, if for every resource $R'$ of type $T'$, there exists at least one resource $R$ of type $T$ such that $R \LOSEinterconv R'$.
\end{defn}
\noindent In other words, there exists at least one resource of the higher type in every equivalence class of resources of the lower type. The higher type is able to exhibit all the same manifestations of postquantum common-cause resourcefulness as the lower type. Furthermore, we will show that some resource of the higher type is always able to perform at least as well as any given resource of the lower type at any distributed game.

\subsection{Semiquantum channels encode all types of postquantumness} \label{sec:sq}

We now show that resources with only classical outputs are able to encode the postquantumness of all other types of resources. We focus on the bipartite case for simplicity, but the result generalizes immediately to arbitrarily many parties.

\begin{thm} \label{squniv}
Every bipartite resource (of arbitrary type) can be converted to a bipartite distributed measurement (of type $\mathsf{QQ} \too \mathsf{CC}$) which is in the same equivalence class.
\end{thm}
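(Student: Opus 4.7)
The plan is to exhibit, for any bipartite nonsignaling resource $R$, an explicit distributed measurement $M$ together with LOSE transformations in each direction showing $R \LOSEinterconv M$. The construction is a wire-by-wire translation that handles each party independently.

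First, I would reduce the problem to the only nontrivial ingredient: converting quantum output wires into classical ones in a reversible way. The other direction-changes are immediate. A classical input of $R$ can be promoted to a quantum input of $M$ by measuring in the computational basis on the way in (forward direction) and by preparing the computational basis state $|X\rangle$ encoding the classical input on the way in (reverse direction). A trivial input is padded to a one-dimensional quantum input. A classical output of $R$ is simply passed through as a classical output of $M$. A trivial output becomes a constant classical output in $M$, which is discarded in the reverse direction.

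The key step is to handle quantum output wires by local gate teleportation within each party. For each quantum output system $A$ on (say) Alice's side of $R$, the distributed measurement $M$ is given an extra quantum input $A_{\rm ref}$ of the same dimension, and is defined by running $R$, then having Alice perform a measurement on $A \otimes A_{\rm ref}$ in a maximally entangled basis and outputting the classical outcome $k_A$. Bob does the symmetric thing for any quantum outputs of his own. The resulting $M$ has only classical outputs, so it has type $\mathsf{QQ}\too\mathsf{CC}$, and the wrapping of $R$ into $M$ consists solely of local pre- and post-processings on each side, so $R \LOSEconv M$. For the reverse, each party locally prepares a maximally entangled state $|\Phi\rangle_{A A_{\rm ref}}$, feeds $A_{\rm ref}$ into $M$, and, upon receiving the classical outcome $k_A$, applies the Pauli correction $\sigma_{k_A}^\dagger$ to the remaining half. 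By the standard teleportation identity, the post-correction state on Alice's side is exactly the $A$-output of $R$, and since the teleportation is a local operation on Alice's subsystem, the joint state over $AB$ is preserved even when $R$'s outputs are entangled across the bipartition. No shared entanglement is needed for either direction, so $M \LOSEconv R$ as well.

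The main point to verify carefully is that two independent, locally-executed teleportations---one by Alice, one by Bob---jointly reproduce the bipartite state that $R$ outputs, including any cross-party entanglement. This follows because the Pauli corrections act trivially on the complementary subsystem, so the joint post-teleportation state is obtained by applying the teleportation identity on each side separately; no communication between parties is required, which is precisely what makes the whole construction fall inside LOSE. All other verifications, namely that the local combs described above compose into valid LOSE transformations and that $M$ has the claimed type, are routine.
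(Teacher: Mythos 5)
Your proposal is correct and follows essentially the same route as the paper's proof: the quantum output is paired with a fresh quantum reference input and measured in a maximally entangled basis, and the inverse LOSE map feeds in half of a maximally entangled state and applies the outcome-conditioned correction, i.e., standard teleportation performed locally by each party. The extra bookkeeping you include for classical and trivial wires is routine and consistent with the paper's (more tersely stated) argument.
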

\noindent The proof is constructive, and is given in Appendix~\ref{proof1}. Furthermore, one can immediately see from the proof that the transformation from the given resource to a distributed measurement in its equivalence class can be achieved by a single fixed transformation, independent of the specific initial resource. 

This fact implies that all resources of arbitrary types can have their postquantumness quantified in a measurement-device-independent~\cite{steer} fashion, since classical systems may be probed even without well-characterized quantum measurement devices.

\subsection{Partition-type encodings}

There are a large variety of global types, especially in multipartite scenarios. A useful tool for studying the preorder over these is the notion of a preorder over partition-types, that is, over the nine types $T_A := \mathbf{Type}[\mathsf{X}] \too \mathbf{Type}[\mathsf{A}]$ of (e.g.) Alice's share of a resource. 
\begin{defn} \label{typeorder}
Partition-type $T_1$ is above partition-type $T_1'$, denoted $T_1 \succeq_{\rm type} T_1'$, if for every resource $R'$ of type $T_1'  T_2 ...  T_n $ (as one ranges over all $T_2, ..., T_n$ and all $n$), there exists a resource $R$ of type $T_1  T_2 ...  T_n $ satisfying $R' \LOSEinterconv R $. 
\end{defn}
\noindent 
This is useful because if every partition-type of a given global type is higher than the corresponding partition-type of a second global type, then the first global type is higher in the preorder over global types. 
Hence, it is useful to systematically characterize which partition-types encode which others. We summarize the known (and unknown) encodings in Table~\ref{partitiontypes}.
\begin{figure}[htb!]
\centering
\includegraphics[width=0.48\textwidth]{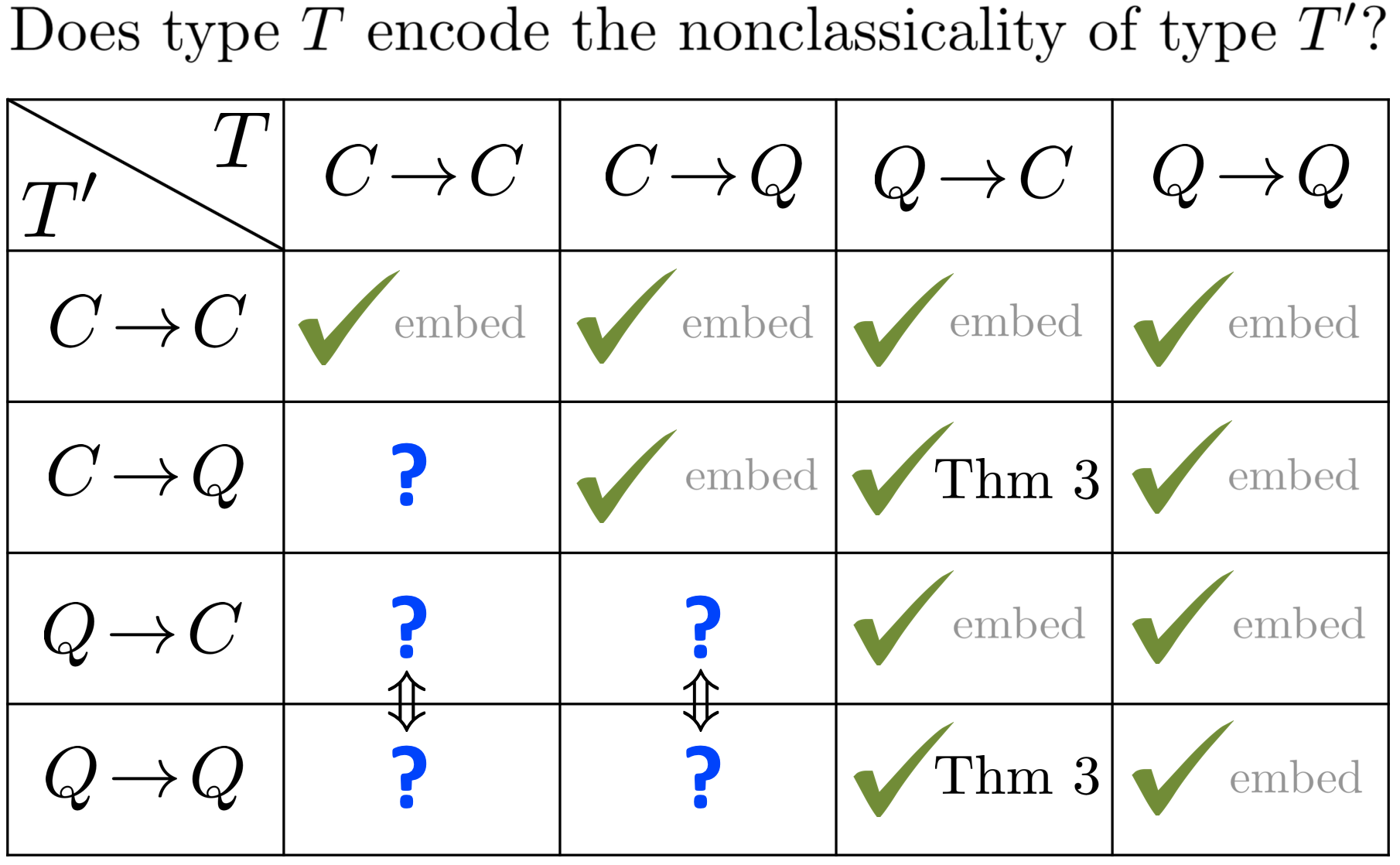}
\caption{ 
A green check mark indicates that the column partition-type $T$ is higher in the order than the row partition-type $T'$, that is, $T \succeq_{\rm type} T'$. 
The question marks indicate open questions. In two cases (indicated by $\Leftrightarrow$), a pair of these must have the same answer.} \label{partitiontypes}
\end{figure}

Clearly, partition-type $\mathsf{Q} \too \mathsf{Q}$ is at the top of the preorder over partition-types, since every other partition-type is a special case of this partition-type. We refer to such trivial encodings as {\em embeddings} of one type into another.

An immediate consequence of Theorem~\ref{trivialtypes} is that the partition-types $\mathsf{I} \too \mathsf{I}$, $\mathsf{I} \too \mathsf{C}$, $\mathsf{I} \too \mathsf{Q}$, $\mathsf{C} \too \mathsf{I}$, and $\mathsf{Q} \too \mathsf{I}$ are all at the bottom of the preorder over partition-types.

Noting that the proof of Theorem~\ref{squniv} required only local operations on each individual party, it follows immediately that the distributed measurement (or semiquantum) partition-type is at the top of the preorder. Since partition-types $\mathsf{Q} \too \mathsf{Q}$ and $\mathsf{Q} \too \mathsf{C}$ encode each other, they are in the same equivalence class at the top of the preorder.


There remain several open questions, as indicated in the table by question marks.

First, consider the questions of whether partition-type $\mathsf{C} \too \mathsf{C}$ encodes either $\mathsf{Q} \too \mathsf{C}$ or $\mathsf{Q} \too \mathsf{Q}$. Recalling that $\mathsf{Q} \too \mathsf{C}$ and $\mathsf{Q} \too \mathsf{Q}$ are in the same equivalence class, the answer to these two questions must be the same.
This question has deep implications for the task of characterizing resources in practice, since a positive answer would imply the possibility of device-independent~\cite{scarani} characterizations of the postquantumness of arbitrary resources.

Second, consider the questions of whether partition-type $\mathsf{C} \too \mathsf{Q}$ encodes either $\mathsf{Q} \too \mathsf{C}$ or $\mathsf{Q} \too \mathsf{Q}$. For the reason given just above, the answer to these two questions must again be the same.
A positive answer to this question would imply the possibility of {\em preparation}-device-independent characterizations of the postquantumness of arbitrary resources.

Finally, consider the question of whether partition-type $\mathsf{C} \too \mathsf{C}$ encodes partition-type $\mathsf{C} \too \mathsf{Q}$. A positive answer to this question would imply the possibility of device-independent characterizations of the postquantumness of any resources whose inputs are all classical.
It is also relevant for the question of whether or not postquantum Bob-with-input steering assemblages constitute a novel form of postquantumness, a question first raised in Refs.~\cite{Hoban_2018,BobWI}. 
There, the authors argued that some Bob-with-input assemblages are indeed novel, in the sense that they cannot be converted into any postquantum box by applying sets of quantum measurements to the states in the Bob-with-input assemblage. 

In light of our framework, this argument is not conclusive. Our framework provides a more principled notion of what it means for a given type $T$ to be `truly novel' with respect to another type (in this case, box-type): there should exist at least one resource of type $T$ which is in an LOSE-equivalence class which is not shared by any box-type resource. In other words, it should be that the box-type does not encode type $T$. Refs.~\cite{Hoban_2018,BobWI} do not resolve this open question, since they do not establish that there exist postquantum Bob-with-input assemblages that cannot be converted to postquantum boxes {\em by LOSE operations}. Rather, the set of operations allowed in their proof (sets of quantum measurements on Bob's steered state) is a strict subset of the full set of LOSE operations from Bob-with-input assemblages to boxes. It is not difficult to extend their argument to show that the nonfree Bob-with-input assemblage of Eq.~\eqref{assemblageex}) cannot generate any nonfree box using local operations and shared randomness. However, it remains possible that under LOSE operations, one could generate some nonfree box. 

Indeed, we prove in Appendix~\ref{sec:examples} that the PR box is {\em at least as postquantum} as the primary example of a postquantum Bob-with-input assemblage in Ref.~\cite{BobWI}. From this fact and the point of view endorsed above, it remains an open question whether or not there exist forms of postquantum steering that are truly novel.

\begin{opq} \label{opqsteer} 
Do there exist Bob-with-input steering assemblages that constitute genuinely novel forms of postquantumness---that is, assemblages whose LOSE-equivalence class contains no postquantum boxes?
\end{opq}

\section{Distributed games }
\label{unifiedgames}

Resources of various types are often used to perform distributed information-processing tasks, such as teleportation, certified randomness generation via violation of Bell inequalities, and so on. Conversely, distributed experiments such as entanglement witnessing, nonlocal games, semiquantum games, steering experiments, and teleportation games are often used to characterize the sense in which various processes are resources. Ref.~\cite{schmid2020type} defined an abstract framework of {\bf distributed games} which subsumes all of these types of tasks as special cases. Within this framework, there is a natural set of games associated to every global type $T$.

\begin{defn} \label{defn:Tgame}
For a given global type $T$, we define a distributed $\mathbf{T}${\bf -game} as a linear map from resources of type $T$ to the real numbers. A resource of type $T$ is then naturally said to be a {\bf strategy} for a $T$-game.
\end{defn}

The referee and players are all assumed to know the complete specification of the game (that is, the types of inputs and outputs, as well as the particular linear map to the real numbers) in advance. To play the game, the players process their initial shared resources (which may be of arbitrary types) using arbitrary LOSE processes, ultimately generating a resource of the appropriate type for the game---that is, a strategy for the game. Based only on the strategy they generate, the referee assigns them a score. We depict three types of $T$-games abstractly in Fig.~\ref{games}, where a closed diagram is to be interpreted as a real number, which in this case is the score assigned to the strategy (in light grey) by the particular game (in black). Note that the processing of the player's initial resource into a strategy is not depicted here.

\begin{figure}[htb!]
\centering
\includegraphics[width=0.48\textwidth]{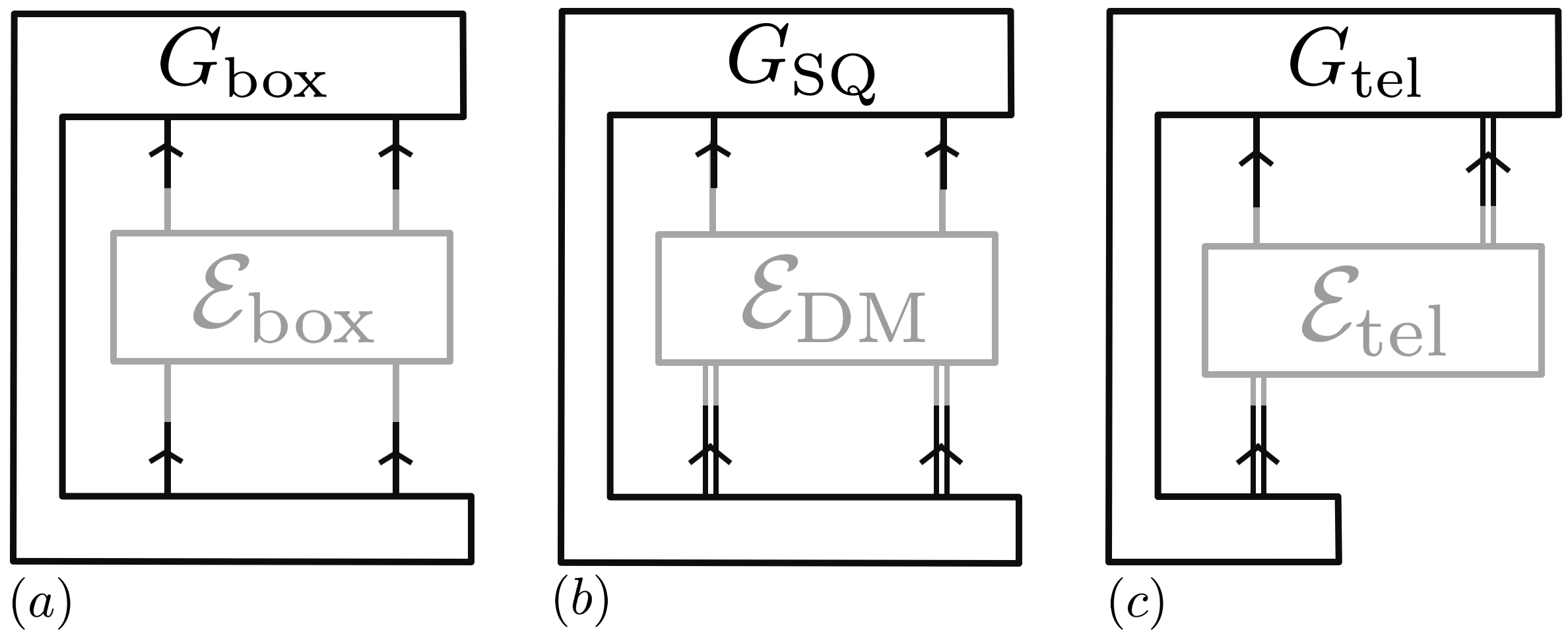}
\caption{ Some games and their strategies. (a) Boxes are strategies for nonlocal games. (b) Distributed measurements are strategies for semiquantum games. (c) Teleportages are strategies for teleportation games. As we prove, all strategies for teleportation games are LOSE-free.
} \label{games}
\end{figure}

See Ref.~\cite{schmid2020type} for a prescription by which one can experimentally implement any game of any type, as well as for a discussion of how one can interpret these games less abstractly.\footnote{In Ref.~\cite{schmid2020type}, the players are only allowed LOSR operations to generate their strategies.} Since the resources we consider are ultimately just quantum channels, one can also give a convenient mathematical decomposition of the map corresponding to any such game in terms of the trace of the given resource's Choi state with an operator in the dual space in which that Choi state lives; see e.g. Refs.~\cite{PhysRevX.8.021033,PhysRevX.9.031053,PhysRevA.101.052306}.  For our purposes, however, we merely note that the abstract definition of games above subsumes most of the usual distributed tasks, including entanglement witnesses, Bell tests, steering witnesses, teleportation games, and so on. 

Given a shared resource of any type, the distributed parties can apply arbitrary free operations to transform that resource into a strategy for the game they are playing, with the aim of optimizing their score. Denote a particular game of type $T$ by $G_{\!T}$, and the score it assigns to a strategy $\mathcal{E}$ (a specific resource of type $T$) by $G_{\!T}(\mathcal{E})$.
\begin{defn}
The {\bf optimal performance} of a resource $R$ (of arbitrary type) with respect to a game $G_{\! T}$ of arbitrary type $T$ is given by
\beq
\omega_{G_{\! T}}(R) = \max_{\tau: \mathbf{Type}[R] \! \to \! T} G_{\! T}(\tau \circ R),
\eeq
where $\tau \in \LOSE$ takes resources of the given type $\mathbf{Type}[R]$ to type $T$.
\end{defn}

Next, we define a second preorder over resources, based on their performance at games of a given type. Denote the set of all games of type $T$ by $\mathcal{G}_{\! T}$.
\begin{defn} \label{orderwrtgame}
For resources $R$ and $R'$ of different and arbitrary type, we say that $R \succeq_{\mathcal{G}_{\! T}} R'$ iff $\omega_{G_{\! T}}(R) \ge \omega_{G_{\! T}}(R')$ for every $G_{\! T} \in \mathcal{G}_{\! T}$. 
\end{defn}
\noindent This is not to be confused with the preorder over resources induced by the free LOSE-transformations, which is generally a distinct preorder.  However, Theorem~\ref{proporder} below provides a condition under which the two preorders coincide.

\subsection{Implications from the type of a resource to its performance at games} \label{implictypetoperf}

We now establish various connections between the type of a resource and how well it can perform at distributed games when the players make use of LOSE operations.

First, consider games of a given type $T$ that is LOSE-trivial. Since {\em every} resource of this type (and hence every strategy for such games) can be generated freely, one can achieve the optimal score for any such game at no cost. Hence, postquantum resources are not useful for them, and conversely, no game of type $T$ can be used to witness the postquantumness of any resource.
 The two most interesting examples of this were stated in Corollary~\ref{nopqadv}, which stated that there is no postquantum advantage for steering or teleporting quantum states, as made quantitative by steering games~\cite{schmid2020type,Cavalcanti2017} or teleportation games~\cite{schmid2020type,telep,LipkaBartosik2019}.
These two facts were pointed out in Refs.~\cite{Hoban_2018,PhysRevA.99.032334}. They follow more formally as an immediate consequence of Theorem~\ref{proporder} below.

Note that if it were the case that the task of teleportation were {\em only} concerned with establishing an identity channel between two parties who share entanglement, then it is clear that quantum theory can achieve this task perfectly, and the observation that postquantum common causes give no advantage would be trivial. However, this observation is not trivial, since many teleportation tasks are not of this form, as pointed out in Refs.~\cite{telep,LipkaBartosik2019}.

On the other hand, any nonsignaling resource of any type $T$ which is nonfree in our resource theory provides a postquantum advantage for at least one game of type $T$. 
This follows as an immediate consequence of the following (more general) theorem.
\begin{thm} \label{proporder}
If $T \succeq_{\rm type} T'$, then for resources $R_1,R_2$ of type $T'$, $R_1 \succeq_{\rm LOSE} R_2$ iff $R_1 \succeq_{\mathcal{G}_{\! T}} R_2$. 
\end{thm}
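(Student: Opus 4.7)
The plan is to prove each direction separately. The forward implication $R_1 \succeq_{\rm LOSE} R_2 \Rightarrow R_1 \succeq_{\mathcal{G}_T} R_2$ is easy and does not use the encoding hypothesis: if $\tau_{12} \in \LOSE$ converts $R_1$ to $R_2$, then for any game $G_T \in \mathcal{G}_T$ and any free $\tau:T' \to T$ achieving the supremum in $\omega_{G_T}(R_2)$, the composite $\tau \circ \tau_{12}$ is again in $\LOSE$ (by closure under composition) and takes $R_1$ to the same type-$T$ resource. Hence the set of type-$T$ strategies obtainable from $R_2$ is contained in that obtainable from $R_1$, and so $\omega_{G_T}(R_1) \ge \omega_{G_T}(R_2)$ for every $G_T$.

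For the converse, I would argue by contrapositive: assume $R_1 \not\succeq_{\rm LOSE} R_2$ and construct a witnessing game. The encoding hypothesis $T \succeq_{\rm type} T'$ lets me pick a type-$T$ resource $R_2^\star$ with $R_2 \LOSEinterconv R_2^\star$. Define
\[
\mathcal{C}(R_1) := \{\tau \circ R_1 : \tau \in \LOSE, \; \tau:T'\to T\},
\]
a convex set of type-$T$ resources (convexity using that $\LOSE$ is itself convex, as noted in Section~\ref{freetrans}). The key observation is $R_2^\star \notin \mathcal{C}(R_1)$: otherwise composing a LOSE map $R_1 \to R_2^\star$ with the LOSE map $R_2^\star \to R_2$ would yield $R_1 \succeq_{\rm LOSE} R_2$, contradicting the assumption. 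A separating-hyperplane argument then supplies a linear functional $G_T$ on type-$T$ resources with $\sup_{R \in \mathcal{C}(R_1)} G_T(R) < G_T(R_2^\star)$; by Definition~\ref{defn:Tgame} any such linear functional qualifies as a $T$-game. Applying the already-established forward direction to $R_2 \LOSEinterconv R_2^\star$ yields $\omega_{G_T}(R_2) = \omega_{G_T}(R_2^\star) \ge G_T(R_2^\star)$, whence
\[
\omega_{G_T}(R_1) = \sup_{R \in \mathcal{C}(R_1)} G_T(R) < G_T(R_2^\star) \le \omega_{G_T}(R_2),
\]
so $R_1 \not\succeq_{\mathcal{G}_T} R_2$, as required.

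The main obstacle will be making the separation step fully rigorous. Strict separation of a point from a convex set requires the set to be closed (or the point to lie outside its closure), which in turn demands some care about the family of LOSE operations $T' \to T$, since these a priori allow ancillary entangled states of unbounded dimension. In the finite-dimensional setting relevant here I expect this to be handled either by bounding the Schmidt rank of sufficient ancilla via a Carath\'{e}odory-type argument, or by replacing $\mathcal{C}(R_1)$ by its closure and verifying that $R_2^\star$ still lies outside -- which should follow, as in the analogous type-independent LOSR result of Ref.~\cite{schmid2020type}, from compactness of the relevant channel sets once the ancilla dimension is bounded. Beyond this technicality, the argument is purely structural and closely mirrors the standard duality between resource preorders and witness functionals in convex resource theories.
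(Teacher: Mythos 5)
Your argument is correct and shares the paper's overall architecture---an easy forward implication plus a converse that uses the encoding hypothesis to replace $R_2$ by a type-$T$ resource $R_2^\star$ in its LOSE-equivalence class and then builds a witnessing game by hyperplane separation---but it executes the separation differently. The paper routes the converse through Theorem~\ref{subsumestrat}: there the separating hyperplane is applied one level down, in the space of correlations obtained by composing strategies with a fixed tomographically complete analyzer, and tomographic completeness is then used to lift the conclusion from correlations back to strategies. You instead separate $R_2^\star$ from the convex set $\mathcal{C}(R_1)$ directly in the (finite-dimensional) space of type-$T$ channels, exploiting the fact that Definition~\ref{defn:Tgame} admits \emph{every} linear functional as a game; this makes the analyzer/payoff machinery unnecessary and shortens the argument, at the cost of not establishing the independently useful strategy-containment statement of Theorem~\ref{subsumestrat} along the way. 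The closure issue you flag for strict separation is genuine, but the paper's own separation step (separating $P_{Z\circ\mathcal{E}_T}$ from the convex set $S(R)$) silently makes the same assumption, so your proposal is no less rigorous than the published proof; your suggested fixes (bounding ancilla dimension, or passing to the closure and checking $R_2^\star$ stays outside) are the right way to close it in either version.
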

\noindent The proof is given in Appendix~\ref{neccond}. 

In other words, if type $T$ is above type $T'$, then for resources of type $T'$, the preorder over resources defined by LOSE-conversion coincides with the preorder over resources defined by performance at all games of type $T$. This implies that games of a higher type perfectly characterize the postquantumness of resources of a lower type. Conversely, it implies that every nonfree resource is useful for some game of the same type, and also for some game of every higher type.


Combining Theorem~\ref{proporder} with Theorem~\ref{trivialtypes}, it follows that there {\em are} postquantum advantages for nonlocal games, semiquantum games, measurement-device-independent steering games, channel steering games, Bob-with-input steering games, and so on. To our knowledge, these advantages have only been studied in the context of nonlocal games~\cite{Bellreview} and Bob-with-input-steering~\cite{BobWI}, while the rest remain to be studied. 

Combining Theorem~\ref{proporder} with Theorem~\ref{squniv}, it follows that the set $\mathcal{G}_{\rm SQ}$ of all semiquantum games witnesses (and exactly quantifies) the resourcefulness of any nonfree resource.
\begin{cor} \label{Buscgen}
For any resources $R$ and $R'$ (which may be of arbitrary and different types), $R \succeq_{\rm LOSE} R'$ if and only if $R \succeq_{\mathcal{G}_{\rm SQ}}R'$.  
\end{cor}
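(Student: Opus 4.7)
The plan is to reduce to a same-type comparison on the semiquantum side, where Theorem~\ref{proporder} applies directly, and then to import that equivalence back to the original (possibly different-typed) resources $R$ and $R'$ via the equivalence classes guaranteed by Theorem~\ref{squniv}. The core idea is that both the LOSE preorder $\succeq_{\rm LOSE}$ and the game-performance preorder $\succeq_{\mathcal{G}_{\rm SQ}}$ are invariant under passage to an LOSE-equivalent resource.

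First, I would invoke Theorem~\ref{squniv} to produce, for each of $R$ and $R'$, semiquantum resources $\tilde R$ and $\tilde R'$ of global type $\mathsf{QQ} \too \mathsf{CC}$ satisfying $R \LOSEinterconv \tilde R$ and $R' \LOSEinterconv \tilde R'$. Next, I would verify two straightforward invariance claims: (i) $R \succeq_{\rm LOSE} R'$ iff $\tilde R \succeq_{\rm LOSE} \tilde R'$, which follows from composing the free conversions in the two equivalences with any candidate conversion $R \LOSEconv R'$ (and vice versa), using that $\LOSE$ is closed under composition; and (ii) $\omega_{G_{\! T_{\rm SQ}}}(R) = \omega_{G_{\! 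T_{\rm SQ}}}(\tilde R)$ for every semiquantum game $G_{\! T_{\rm SQ}}$, because any free transformation $\tau$ from $R$ to a semiquantum strategy factors through $\tilde R$ (and conversely), so the maxima defining the optimal performance coincide. Claim (ii) implies $R \succeq_{\mathcal{G}_{\rm SQ}} R'$ iff $\tilde R \succeq_{\mathcal{G}_{\rm SQ}} \tilde R'$.

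Having reduced both preorder comparisons to the pair $(\tilde R, \tilde R')$, which are both of the common type $T_{\rm SQ}$, I would then apply Theorem~\ref{proporder} with $T = T' = T_{\rm SQ}$ (using the tautological $T_{\rm SQ} \succeq_{\rm type} T_{\rm SQ}$) to conclude $\tilde R \succeq_{\rm LOSE} \tilde R'$ iff $\tilde R \succeq_{\mathcal{G}_{\rm SQ}} \tilde R'$. Chaining the three equivalences delivers the corollary.

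The main obstacle, such as it is, is the invariance claim (ii): one must be careful that LOSE-equivalence between resources of different types really does preserve optimal game scores, since $\omega_{G_{\! T}}$ is defined by a maximization over type-changing free transformations with a fixed source type. The point is simply that the composition of a free transformation $R \LOSEconv \tilde R$ with any $\tilde R \LOSEconv \tau\circ \tilde R$ is again free, so every strategy reachable from $\tilde R$ is reachable from $R$, and symmetrically; everything else in the argument is bookkeeping.
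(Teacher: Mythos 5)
Your proposal is correct and follows essentially the same route the paper intends: the paper states the corollary as an immediate consequence of combining Theorem~\ref{squniv} (pass to LOSE-equivalent semiquantum resources $\tilde R,\tilde R'$) with Theorem~\ref{proporder} (applied at the common type $\mathsf{QQ}\too\mathsf{CC}$), and your invariance claims (i) and (ii) are exactly the bookkeeping needed to make that combination rigorous. No gaps.
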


We close by noting a useful result (e.g., it is used in the Appendix for proving Theorem~\ref{proporder}), which states that if one resource outperforms a second at all possible games of a given type, then it can also generate {\em any specific strategy} of that type which the second resource can generate. 
\begin{thm} \label{subsumestrat}
For resources $R$ and $R'$ of arbitrary types and a resource $\mathcal{E}_{T}$ of arbitrary type $T$, 
$R \succeq_{\mathcal{G}_{\! T}} R'$ iff $R' \LOSEconv \mathcal{E}_{T} \implies R \LOSEconv \mathcal{E}_{T}$, that is, iff any strategy $\mathcal{E}_{T}$ for games of type $T$ that can be freely generated from $R'$ can {\em also} be freely generated from $R$. 
\end{thm}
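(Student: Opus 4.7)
The plan is to prove the two directions of the biconditional separately, with the forward direction using a separating-hyperplane argument analogous to standard results in resource-theoretic monotone characterization.

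The easy direction ($\Leftarrow$) proceeds as follows. Fix any game $G_{\!T} \in \mathcal{G}_{\!T}$. By definition of $\omega_{G_{\!T}}(R')$, there exists some $\tau^* \in \LOSE$ with $\tau^* \colon \mathbf{Type}[R'] \to T$ such that $\omega_{G_{\!T}}(R') = G_{\!T}(\tau^* \circ R')$. Set $\mathcal{E}_T := \tau^* \circ R'$; then $R' \LOSEconv \mathcal{E}_T$, so by hypothesis $R \LOSEconv \mathcal{E}_T$, i.e.\ there is some $\tau' \in \LOSE$ with $\tau' \circ R = \mathcal{E}_T$. Hence $\omega_{G_{\!T}}(R) \geq G_{\!T}(\tau' \circ R) = G_{\!T}(\mathcal{E}_T) = \omega_{G_{\!T}}(R')$, as required.

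For the hard direction ($\Rightarrow$), I would argue by contrapositive. Suppose there exists $\mathcal{E}_T$ with $R' \LOSEconv \mathcal{E}_T$ but $R \not\LOSEconv \mathcal{E}_T$, and consider the set $\mathcal{F}_{\!T}(R) := \{\tau \circ R : \tau \in \LOSE,\ \tau\colon \mathbf{Type}[R] \to T\}$ of all type-$T$ strategies reachable from $R$. This set is convex, because any convex combination of LOSE operations is itself a LOSE operation (shared randomness being a special case of shared entanglement), and it lies in the finite-dimensional real affine space of CPTP maps of type $T$ (identified with their Choi operators). Granted closedness (see below), the hypothesis $\mathcal{E}_T \notin \mathcal{F}_{\!T}(R)$ combined with the Hahn--Banach separating-hyperplane theorem yields a linear functional $G_{\!T}$ on this space---which, by Definition~\ref{defn:Tgame}, is precisely a $T$-game---and a real number $c$ with $G_{\!T}(\mathcal{E}_T) > c \geq G_{\!T}(\sigma)$ for every $\sigma \in \mathcal{F}_{\!T}(R)$. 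The right inequality gives $\omega_{G_{\!T}}(R) \leq c$; the left, together with $R' \LOSEconv \mathcal{E}_T$, gives $\omega_{G_{\!T}}(R') \geq G_{\!T}(\mathcal{E}_T) > c$. Hence $\omega_{G_{\!T}}(R') > \omega_{G_{\!T}}(R)$, contradicting $R \succeq_{\mathcal{G}_{\!T}} R'$.

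The main obstacle is justifying that $\mathcal{F}_{\!T}(R)$ is closed, since a priori LOSE operations may invoke shared entanglement of arbitrarily large Hilbert-space dimension. The key point is that, once the source type $\mathbf{Type}[R]$ and target type $T$ are fixed, any LOSE transformation on $R$ is the contraction of the local quantum combs with the shared state, and by a Carath\'eodory-style dimension bound on the Choi representation of the resulting channel one may restrict to combs and shared states of bounded dimension without changing $\mathcal{F}_{\!T}(R)$. The image is then that of a compact set under a continuous map, hence closed. With closedness in hand, the separation argument applies verbatim, and the rest of the argument is routine.
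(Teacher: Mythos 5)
Your overall strategy is the same as the paper's: the $\Leftarrow$ direction is the same one-line observation, and the $\Rightarrow$ direction is a separating-hyperplane argument applied to the convex set of type-$T$ strategies reachable from $R$. The one structural difference is that you separate directly in the space of strategies (Choi operators), invoking Definition~\ref{defn:Tgame} to identify the separating functional with a game, whereas the paper first fixes a tomographically complete analyzer $Z$, separates in the finite-dimensional space of correlations $Z\circ\tau\circ R$ (where the hyperplane is literally a payoff function $F_{\rm payoff}(abxy)$), and then uses tomographic completeness of $Z$ to argue that equality of correlations forces equality of strategies. Your route is slightly more economical and is legitimate given that the paper defines a $T$-game abstractly as an arbitrary linear functional on resources of type $T$; the paper's detour buys an operational realization of the separating game but is mathematically equivalent.

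The place where your write-up is on shakier ground is exactly the point you flag: closedness of $\mathcal{F}_{\!T}(R)$. You are right that this is needed (if $\mathcal{E}_T$ lay in the closure of $\mathcal{F}_{\!T}(R)$ but not in $\mathcal{F}_{\!T}(R)$ itself, then $R$ would match $R'$ on every game in the supremum sense while still failing to reach $\mathcal{E}_T$, and strict separation would be impossible), and it is to your credit that you notice it, since the paper's own proof silently assumes the separating hyperplane exists. However, your proposed repair does not work: a Carath\'eodory-type bound controls the number of extreme points needed in a convex decomposition, not the dimension of the shared entangled system, and it is known that sets of this kind are \emph{not} closed in general --- already for box-type strategies, the set of correlations achievable with finite-dimensional shared entanglement fails to be closed (Slofstra's results and the resolution of Tsirelson's problem). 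So "restrict to shared states of bounded dimension without changing $\mathcal{F}_{\!T}(R)$" is false as stated. The honest fix is either to define the free operations $\LOSE$ as a topologically closed set (i.e., include limits of LOSE protocols), or to state the theorem with $\mathcal{F}_{\!T}(R)$ replaced by its closure; this caveat applies equally to the paper's published proof, so your argument is no worse off, but the specific justification you offer for closedness should be removed or replaced.
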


\section{Open questions and Conclusion} \label{secopq}

We have presented the type-independent resource theory of local operations and shared entanglement, or LOSE operations. We showed how this unifies various types of postquantum resources, and discussed how these resources can be transformed into each other, regardless of their types.
This allows us to rigorously quantify the amount of postquantumness in a given resource, and to quantitatively compare the postquantumness of arbitrary resources of arbitrary types. Next, we began the systematic characterization of which types of resources are able to express all the same manifestations of postquantumness as which other types, as made formal by the idea of encodings of types of resources. We then discussed how players can use any given resource to generate strategies for distributed games of arbitrary types, and how the type of a given resource helps determine what advantage, if any, the players can hope to gain over strategies that are achievable by quantum common causes.

There remain a great deal of interesting questions to be understood about this resource theory; we have already highlighted some of these in the main text.
Attaining a complete understanding of postquantum common cause resources will require answering all of the usual resource theoretic questions: what is the structure of the single-copy deterministic preorder, what monotones and witnesses provide valuable information about this preorder, what asymptotic and catalytic conversions are possible, and so on.
A modest starting point would be to finish characterizing which conversions between known postquantum resources (such as those in Appendix~\ref{sec:examples}) are possible, and to find type-independent monotones that witness the interconversions that are not possible. Note that for the type-independent LOSR resource theory, Ref.~\cite{rosset2019characterizing} developed technical tools for answering such questions, and it seems likely that many of these techniques (such as the robustness monotones defined therein) would be transferrable with minor modifications to the LOSE resource theory. It would also be interesting to study detailed properties of the LOSE preorder even for specific types of resources, in analogy to what was done in Ref.~\cite{wolfe2020quantifying} for box-type resources in the resource theory of LOSR. It would also be helpful to find some algorithmic methods for determining if a given channel is LOSE-free or not; see Appendix~\ref{neccond} for three partial results along these lines from Ref.~\cite{causallocaliz}. 
See also Appendix~\ref{bennettq} for an example of a well-known channel which we conjecture is LOSE-nonfree. Another interesting question is whether the generalized notion of self-testing of arbitrary types of resources introduced in Ref.~\cite{LOSRvsLOCCentang} could be applied in this resource theory.
Finally, it is an interesting question whether or not resource theories like ours will be of use at constraining the set of quantum correlations (or more general processes) from physical principles.
\section{Acknowledgments}
D.S. is supported by a Vanier Canada Graduate Scholarship.
Research at Perimeter Institute is supported in part by the Government of Canada through the Department of Innovation, Science and Economic Development Canada and by the Province of Ontario through the Ministry of Colleges and Universities. MJH acknowledges the FQXi large grant The Emergence of Agents from Causal Order. This publication was made possible through the support of a grant from the John Templeton Foundation. The opinions expressed in this publication are those of the authors and do not necessarily reflect the views of the John Templeton Foundation.

\setlength{\bibsep}{2pt plus 1pt minus 2pt}
\bibliographystyle{apsrev4-1}
\nocite{apsrev41Control}
\bibliography{bib}

\begin{thebibliography}{79}%
\makeatletter
\providecommand \@ifxundefined [1]{%
 \@ifx{#1\undefined}
}%
\providecommand \@ifnum [1]{%
 \ifnum #1\expandafter \@firstoftwo
 \else \expandafter \@secondoftwo
 \fi
}%
\providecommand \@ifx [1]{%
 \ifx #1\expandafter \@firstoftwo
 \else \expandafter \@secondoftwo
 \fi
}%
\providecommand \natexlab [1]{#1}%
\providecommand \enquote  [1]{``#1''}%
\providecommand \bibnamefont  [1]{#1}%
\providecommand \bibfnamefont [1]{#1}%
\providecommand \citenamefont [1]{#1}%
\providecommand \href@noop [0]{\@secondoftwo}%
\providecommand \href [0]{\begingroup \@sanitize@url \@href}%
\providecommand \@href[1]{\@@startlink{#1}\@@href}%
\providecommand \@@href[1]{\endgroup#1\@@endlink}%
\providecommand \@sanitize@url [0]{\catcode `\\12\catcode `\$12\catcode
  `\&12\catcode `\#12\catcode `\^12\catcode `\_12\catcode `\%12\relax}%
\providecommand \@@startlink[1]{}%
\providecommand \@@endlink[0]{}%
\providecommand \url  [0]{\begingroup\@sanitize@url \@url }%
\providecommand \@url [1]{\endgroup\@href {#1}{\urlprefix }}%
\providecommand \urlprefix  [0]{URL }%
\providecommand \Eprint [0]{\href }%
\providecommand \doibase [0]{http://dx.doi.org/}%
\providecommand \selectlanguage [0]{\@gobble}%
\providecommand \bibinfo  [0]{\@secondoftwo}%
\providecommand \bibfield  [0]{\@secondoftwo}%
\providecommand \translation [1]{[#1]}%
\providecommand \BibitemOpen [0]{}%
\providecommand \bibitemStop [0]{}%
\providecommand \bibitemNoStop [0]{.\EOS\space}%
\providecommand \EOS [0]{\spacefactor3000\relax}%
\providecommand \BibitemShut  [1]{\csname bibitem#1\endcsname}%
\let\auto@bib@innerbib\@empty
\bibitem [{\citenamefont {Brunner}\ \emph
  {et~al.}(2014{\natexlab{a}})\citenamefont {Brunner}, \citenamefont
  {Cavalcanti}, \citenamefont {Pironio}, \citenamefont {Scarani},\ and\
  \citenamefont {Wehner}}]{Bellreview}%
  \BibitemOpen
  \bibfield  {author} {\bibinfo {author} {\bibfnamefont {N.}~\bibnamefont
  {Brunner}}, \bibinfo {author} {\bibfnamefont {D.}~\bibnamefont {Cavalcanti}},
  \bibinfo {author} {\bibfnamefont {S.}~\bibnamefont {Pironio}}, \bibinfo
  {author} {\bibfnamefont {V.}~\bibnamefont {Scarani}}, \ and\ \bibinfo
  {author} {\bibfnamefont {S.}~\bibnamefont {Wehner}},\ }\href
  {https://doi.org/10.1103/RevModPhys.86.419} {\bibfield  {journal} {\bibinfo
  {journal} {Rev. Mod. Phys.}\ }\textbf {\bibinfo {volume} {86}},\ \bibinfo
  {pages} {419} (\bibinfo {year} {2014}{\natexlab{a}})}\BibitemShut {NoStop}%
\bibitem [{\citenamefont {Wood}\ and\ \citenamefont
  {Spekkens}(2015)}]{Wood2015}%
  \BibitemOpen
  \bibfield  {author} {\bibinfo {author} {\bibfnamefont {C.~J.}\ \bibnamefont
  {Wood}}\ and\ \bibinfo {author} {\bibfnamefont {R.~W.}\ \bibnamefont
  {Spekkens}},\ }\href {https://doi.org/10.1088/1367-2630/17/3/033002}
  {\bibfield  {journal} {\bibinfo  {journal} {New J. Phys.}\ }\textbf {\bibinfo
  {volume} {17}},\ \bibinfo {pages} {033002} (\bibinfo {year}
  {2015})}\BibitemShut {NoStop}%
\bibitem [{\citenamefont {Allen}\ \emph {et~al.}(2017)\citenamefont {Allen},
  \citenamefont {Barrett}, \citenamefont {Horsman}, \citenamefont {Lee},\ and\
  \citenamefont {Spekkens}}]{Allenetal}%
  \BibitemOpen
  \bibfield  {author} {\bibinfo {author} {\bibfnamefont {J.-M.~A.}\
  \bibnamefont {Allen}}, \bibinfo {author} {\bibfnamefont {J.}~\bibnamefont
  {Barrett}}, \bibinfo {author} {\bibfnamefont {D.~C.}\ \bibnamefont
  {Horsman}}, \bibinfo {author} {\bibfnamefont {C.~M.}\ \bibnamefont {Lee}}, \
  and\ \bibinfo {author} {\bibfnamefont {R.~W.}\ \bibnamefont {Spekkens}},\
  }\href {https://link.aps.org/doi/10.1103/PhysRevX.7.031021} {\bibfield
  {journal} {\bibinfo  {journal} {Phys. Rev. X}\ }\textbf {\bibinfo {volume}
  {7}},\ \bibinfo {pages} {031021} (\bibinfo {year} {2017})}\BibitemShut
  {NoStop}%
\bibitem [{\citenamefont {{Barrett}}\ \emph {et~al.}(2019)\citenamefont
  {{Barrett}}, \citenamefont {{Lorenz}},\ and\ \citenamefont
  {{Oreshkov}}}]{Lorenz2019}%
  \BibitemOpen
  \bibfield  {author} {\bibinfo {author} {\bibfnamefont {J.}~\bibnamefont
  {{Barrett}}}, \bibinfo {author} {\bibfnamefont {R.}~\bibnamefont {{Lorenz}}},
  \ and\ \bibinfo {author} {\bibfnamefont {O.}~\bibnamefont {{Oreshkov}}},\
  }\href@noop {} {\  (\bibinfo {year} {2019})},\ \Eprint
  {http://arxiv.org/abs/1906.10726} {arXiv:1906.10726} \BibitemShut {NoStop}%
\bibitem [{\citenamefont {Costa}\ and\ \citenamefont
  {Shrapnel}(2016)}]{CostaShrapnel}%
  \BibitemOpen
  \bibfield  {author} {\bibinfo {author} {\bibfnamefont {F.}~\bibnamefont
  {Costa}}\ and\ \bibinfo {author} {\bibfnamefont {S.}~\bibnamefont
  {Shrapnel}},\ }\href {https://doi.org/10.1088/1367-2630/18/6/063032}
  {\bibfield  {journal} {\bibinfo  {journal} {New J. Phys}\ }\textbf {\bibinfo
  {volume} {18}},\ \bibinfo {pages} {063032} (\bibinfo {year}
  {2016})}\BibitemShut {NoStop}%
\bibitem [{\citenamefont {Cirel'son}(1980)}]{Cirel'son1980}%
  \BibitemOpen
  \bibfield  {author} {\bibinfo {author} {\bibfnamefont {B.~S.}\ \bibnamefont
  {Cirel'son}},\ }\href {\doibase 10.1007/BF00417500} {\bibfield  {journal}
  {\bibinfo  {journal} {Letters in Mathematical Physics}\ }\textbf {\bibinfo
  {volume} {4}},\ \bibinfo {pages} {93} (\bibinfo {year} {1980})}\BibitemShut
  {NoStop}%
\bibitem [{\citenamefont {Popescu}\ and\ \citenamefont
  {Rohrlich}(1994)}]{Popescu1994}%
  \BibitemOpen
  \bibfield  {author} {\bibinfo {author} {\bibfnamefont {S.}~\bibnamefont
  {Popescu}}\ and\ \bibinfo {author} {\bibfnamefont {D.}~\bibnamefont
  {Rohrlich}},\ }\href {\doibase 10.1007/BF02058098} {\bibfield  {journal}
  {\bibinfo  {journal} {Foundations of Physics}\ }\textbf {\bibinfo {volume}
  {24}},\ \bibinfo {pages} {379} (\bibinfo {year} {1994})}\BibitemShut
  {NoStop}%
\bibitem [{\citenamefont {Barrett}(2007)}]{barrettGPT}%
  \BibitemOpen
  \bibfield  {author} {\bibinfo {author} {\bibfnamefont {J.}~\bibnamefont
  {Barrett}},\ }\href {https://doi.org/10.1103/PhysRevA.75.032304} {\bibfield
  {journal} {\bibinfo  {journal} {Phys. Rev. A}\ }\textbf {\bibinfo {volume}
  {75}},\ \bibinfo {pages} {032304} (\bibinfo {year} {2007})}\BibitemShut
  {NoStop}%
\bibitem [{\citenamefont {Hardy}(2001)}]{hardy01}%
  \BibitemOpen
  \bibfield  {author} {\bibinfo {author} {\bibfnamefont {L.}~\bibnamefont
  {Hardy}},\ }\href {https://arxiv.org/abs/quant-ph/0101012} {\bibfield
  {journal} {\bibinfo  {journal} {arXiv:quant-ph/0101012}\ } (\bibinfo {year}
  {2001})}\BibitemShut {NoStop}%
\bibitem [{\citenamefont {Wolfe}\ \emph {et~al.}(2020)\citenamefont {Wolfe},
  \citenamefont {Schmid}, \citenamefont {Sainz}, \citenamefont {Kunjwal},\ and\
  \citenamefont {Spekkens}}]{wolfe2020quantifying}%
  \BibitemOpen
  \bibfield  {author} {\bibinfo {author} {\bibfnamefont {E.}~\bibnamefont
  {Wolfe}}, \bibinfo {author} {\bibfnamefont {D.}~\bibnamefont {Schmid}},
  \bibinfo {author} {\bibfnamefont {A.~B.}\ \bibnamefont {Sainz}}, \bibinfo
  {author} {\bibfnamefont {R.}~\bibnamefont {Kunjwal}}, \ and\ \bibinfo
  {author} {\bibfnamefont {R.~W.}\ \bibnamefont {Spekkens}},\ }\href
  {https://doi.org/10.22331/q-2020-06-08-280} {\bibfield  {journal} {\bibinfo
  {journal} {Quantum}\ }\textbf {\bibinfo {volume} {4}},\ \bibinfo {pages}
  {280} (\bibinfo {year} {2020})}\BibitemShut {NoStop}%
\bibitem [{\citenamefont {Beckman}\ \emph {et~al.}(2001)\citenamefont
  {Beckman}, \citenamefont {Gottesman}, \citenamefont {Nielsen},\ and\
  \citenamefont {Preskill}}]{causallocaliz}%
  \BibitemOpen
  \bibfield  {author} {\bibinfo {author} {\bibfnamefont {D.}~\bibnamefont
  {Beckman}}, \bibinfo {author} {\bibfnamefont {D.}~\bibnamefont {Gottesman}},
  \bibinfo {author} {\bibfnamefont {M.~A.}\ \bibnamefont {Nielsen}}, \ and\
  \bibinfo {author} {\bibfnamefont {J.}~\bibnamefont {Preskill}},\ }\href
  {\doibase 10.1103/PhysRevA.64.052309} {\bibfield  {journal} {\bibinfo
  {journal} {Phys. Rev. A}\ }\textbf {\bibinfo {volume} {64}},\ \bibinfo
  {pages} {052309} (\bibinfo {year} {2001})}\BibitemShut {NoStop}%
\bibitem [{\citenamefont {Piani}\ \emph {et~al.}(2006)\citenamefont {Piani},
  \citenamefont {Horodecki}, \citenamefont {Horodecki},\ and\ \citenamefont
  {Horodecki}}]{nosigboxes}%
  \BibitemOpen
  \bibfield  {author} {\bibinfo {author} {\bibfnamefont {M.}~\bibnamefont
  {Piani}}, \bibinfo {author} {\bibfnamefont {M.}~\bibnamefont {Horodecki}},
  \bibinfo {author} {\bibfnamefont {P.}~\bibnamefont {Horodecki}}, \ and\
  \bibinfo {author} {\bibfnamefont {R.}~\bibnamefont {Horodecki}},\ }\href
  {\doibase 10.1103/PhysRevA.74.012305} {\bibfield  {journal} {\bibinfo
  {journal} {Phys. Rev. A}\ }\textbf {\bibinfo {volume} {74}},\ \bibinfo
  {pages} {012305} (\bibinfo {year} {2006})}\BibitemShut {NoStop}%
\bibitem [{\citenamefont {D'Ariano}\ \emph {et~al.}(2011)\citenamefont
  {D'Ariano}, \citenamefont {Facchini},\ and\ \citenamefont
  {Perinotti}}]{PerinottiLOSEex}%
  \BibitemOpen
  \bibfield  {author} {\bibinfo {author} {\bibfnamefont {G.~M.}\ \bibnamefont
  {D'Ariano}}, \bibinfo {author} {\bibfnamefont {S.}~\bibnamefont {Facchini}},
  \ and\ \bibinfo {author} {\bibfnamefont {P.}~\bibnamefont {Perinotti}},\
  }\href {\doibase 10.1103/PhysRevLett.106.010501} {\bibfield  {journal}
  {\bibinfo  {journal} {Phys. Rev. Lett.}\ }\textbf {\bibinfo {volume} {106}},\
  \bibinfo {pages} {010501} (\bibinfo {year} {2011})}\BibitemShut {NoStop}%
\bibitem [{\citenamefont {Hoban}\ and\ \citenamefont
  {Sainz}(2018)}]{Hoban_2018}%
  \BibitemOpen
  \bibfield  {author} {\bibinfo {author} {\bibfnamefont {M.~J.}\ \bibnamefont
  {Hoban}}\ and\ \bibinfo {author} {\bibfnamefont {A.~B.}\ \bibnamefont
  {Sainz}},\ }\href {\doibase 10.1088/1367-2630/aabea8} {\bibfield  {journal}
  {\bibinfo  {journal} {New Journal of Physics}\ }\textbf {\bibinfo {volume}
  {20}},\ \bibinfo {pages} {053048} (\bibinfo {year} {2018})}\BibitemShut
  {NoStop}%
\bibitem [{\citenamefont {{Bel{\'e}n Sainz}}\ \emph {et~al.}(2019)\citenamefont
  {{Bel{\'e}n Sainz}}, \citenamefont {{Hoban}}, \citenamefont {{Skrzypczyk}},\
  and\ \citenamefont {{Aolita}}}]{BobWI}%
  \BibitemOpen
  \bibfield  {author} {\bibinfo {author} {\bibfnamefont {A.}~\bibnamefont
  {{Bel{\'e}n Sainz}}}, \bibinfo {author} {\bibfnamefont {M.~J.}\ \bibnamefont
  {{Hoban}}}, \bibinfo {author} {\bibfnamefont {P.}~\bibnamefont
  {{Skrzypczyk}}}, \ and\ \bibinfo {author} {\bibfnamefont {L.}~\bibnamefont
  {{Aolita}}},\ }\href@noop {} {\  (\bibinfo {year} {2019})},\ \Eprint
  {http://arxiv.org/abs/1907.03705} {arXiv:1907.03705} \BibitemShut {NoStop}%
\bibitem [{\citenamefont {{Gutoski}}(2008)}]{Gutoski2008}%
  \BibitemOpen
  \bibfield  {author} {\bibinfo {author} {\bibfnamefont {G.}~\bibnamefont
  {{Gutoski}}},\ }\href@noop {} {\  (\bibinfo {year} {2008})},\ \Eprint
  {http://arxiv.org/abs/0805.2209} {arXiv:0805.2209} \BibitemShut {NoStop}%
\bibitem [{\citenamefont {Sainz}\ \emph {et~al.}(2015)\citenamefont {Sainz},
  \citenamefont {Brunner}, \citenamefont {Cavalcanti}, \citenamefont
  {Skrzypczyk},\ and\ \citenamefont {V\'ertesi}}]{PostquantumSteering}%
  \BibitemOpen
  \bibfield  {author} {\bibinfo {author} {\bibfnamefont {A.~B.}\ \bibnamefont
  {Sainz}}, \bibinfo {author} {\bibfnamefont {N.}~\bibnamefont {Brunner}},
  \bibinfo {author} {\bibfnamefont {D.}~\bibnamefont {Cavalcanti}}, \bibinfo
  {author} {\bibfnamefont {P.}~\bibnamefont {Skrzypczyk}}, \ and\ \bibinfo
  {author} {\bibfnamefont {T.}~\bibnamefont {V\'ertesi}},\ }\href {\doibase
  10.1103/PhysRevLett.115.190403} {\bibfield  {journal} {\bibinfo  {journal}
  {Phys. Rev. Lett.}\ }\textbf {\bibinfo {volume} {115}},\ \bibinfo {pages}
  {190403} (\bibinfo {year} {2015})}\BibitemShut {NoStop}%
\bibitem [{\citenamefont {Sainz}\ \emph {et~al.}(2018)\citenamefont {Sainz},
  \citenamefont {Aolita}, \citenamefont {Piani}, \citenamefont {Hoban},\ and\
  \citenamefont {Skrzypczyk}}]{LocalMeasurement}%
  \BibitemOpen
  \bibfield  {author} {\bibinfo {author} {\bibfnamefont {A.~B.}\ \bibnamefont
  {Sainz}}, \bibinfo {author} {\bibfnamefont {L.}~\bibnamefont {Aolita}},
  \bibinfo {author} {\bibfnamefont {M.}~\bibnamefont {Piani}}, \bibinfo
  {author} {\bibfnamefont {M.~J.}\ \bibnamefont {Hoban}}, \ and\ \bibinfo
  {author} {\bibfnamefont {P.}~\bibnamefont {Skrzypczyk}},\ }\href {\doibase
  10.1088/1367-2630/aad8df} {\bibfield  {journal} {\bibinfo  {journal} {New
  Journal of Physics}\ }\textbf {\bibinfo {volume} {20}},\ \bibinfo {pages}
  {083040} (\bibinfo {year} {2018})}\BibitemShut {NoStop}%
\bibitem [{\citenamefont {Coecke}\ \emph {et~al.}(2016)\citenamefont {Coecke},
  \citenamefont {Fritz},\ and\ \citenamefont {Spekkens}}]{resthry}%
  \BibitemOpen
  \bibfield  {author} {\bibinfo {author} {\bibfnamefont {B.}~\bibnamefont
  {Coecke}}, \bibinfo {author} {\bibfnamefont {T.}~\bibnamefont {Fritz}}, \
  and\ \bibinfo {author} {\bibfnamefont {R.~W.}\ \bibnamefont {Spekkens}},\
  }\href {\doibase https://doi.org/10.1016/j.ic.2016.02.008} {\bibfield
  {journal} {\bibinfo  {journal} {Information and Computation}\ }\textbf
  {\bibinfo {volume} {250}},\ \bibinfo {pages} {59 } (\bibinfo {year}
  {2016})},\ \bibinfo {note} {{Q}uantum {P}hysics and {L}ogic}\BibitemShut
  {NoStop}%
\bibitem [{\citenamefont {Chiribella}\ and\ \citenamefont
  {Spekkens}(2016)}]{chiribella_spekkens_2016}%
  \BibitemOpen
  \bibfield  {author} {\bibinfo {author} {\bibfnamefont {G.}~\bibnamefont
  {Chiribella}}\ and\ \bibinfo {author} {\bibfnamefont {R.~W.}\ \bibnamefont
  {Spekkens}},\ }\href@noop {} {\emph {\bibinfo {title} {Quantum theory:
  informational foundations and foils}}}\ (\bibinfo  {publisher} {Springer},\
  \bibinfo {year} {2016})\BibitemShut {NoStop}%
\bibitem [{\citenamefont {Hardy}(2005)}]{hardy2005probability}%
  \BibitemOpen
  \bibfield  {author} {\bibinfo {author} {\bibfnamefont {L.}~\bibnamefont
  {Hardy}},\ }\href@noop {} {\enquote {\bibinfo {title} {Probability theories
  with dynamic causal structure: A new framework for quantum gravity},}\ }
  (\bibinfo {year} {2005}),\ \Eprint {http://arxiv.org/abs/gr-qc/0509120}
  {arXiv:gr-qc/0509120 [gr-qc]} \BibitemShut {NoStop}%
\bibitem [{\citenamefont {Navascués}\ and\ \citenamefont
  {Wunderlich}(2010)}]{Navascues2010}%
  \BibitemOpen
  \bibfield  {author} {\bibinfo {author} {\bibfnamefont {M.}~\bibnamefont
  {Navascués}}\ and\ \bibinfo {author} {\bibfnamefont {H.}~\bibnamefont
  {Wunderlich}},\ }\href {\doibase 10.1098/rspa.2009.0453} {\bibfield
  {journal} {\bibinfo  {journal} {Proceedings of the Royal Society A:
  Mathematical, Physical and Engineering Sciences}\ }\textbf {\bibinfo {volume}
  {466}},\ \bibinfo {pages} {881} (\bibinfo {year} {2010})},\ \bibinfo {note}
  {publisher: Royal Society}\BibitemShut {NoStop}%
\bibitem [{\citenamefont {Pawłowski}\ \emph {et~al.}(2009)\citenamefont
  {Pawłowski}, \citenamefont {Paterek}, \citenamefont {Kaszlikowski},
  \citenamefont {Scarani}, \citenamefont {Winter},\ and\ \citenamefont
  {Żukowski}}]{Pawlowski2009}%
  \BibitemOpen
  \bibfield  {author} {\bibinfo {author} {\bibfnamefont {M.}~\bibnamefont
  {Pawłowski}}, \bibinfo {author} {\bibfnamefont {T.}~\bibnamefont {Paterek}},
  \bibinfo {author} {\bibfnamefont {D.}~\bibnamefont {Kaszlikowski}}, \bibinfo
  {author} {\bibfnamefont {V.}~\bibnamefont {Scarani}}, \bibinfo {author}
  {\bibfnamefont {A.}~\bibnamefont {Winter}}, \ and\ \bibinfo {author}
  {\bibfnamefont {M.}~\bibnamefont {Żukowski}},\ }\href {\doibase
  10.1038/nature08400} {\bibfield  {journal} {\bibinfo  {journal} {Nature}\
  }\textbf {\bibinfo {volume} {461}},\ \bibinfo {pages} {1101} (\bibinfo {year}
  {2009})},\ \bibinfo {note} {number: 7267 Publisher: Nature Publishing
  Group}\BibitemShut {NoStop}%
\bibitem [{\citenamefont {Pawłowski}\ and\ \citenamefont
  {Scarani}(2016)}]{Pawlowski2016}%
  \BibitemOpen
  \bibfield  {author} {\bibinfo {author} {\bibfnamefont {M.}~\bibnamefont
  {Pawłowski}}\ and\ \bibinfo {author} {\bibfnamefont {V.}~\bibnamefont
  {Scarani}},\ }in\ \href {\doibase 10.1007/978-94-017-7303-4_12} {\emph
  {\bibinfo {booktitle} {Quantum {Theory}: {Informational} {Foundations} and
  {Foils}}}},\ \bibinfo {series and number} {Fundamental {Theories} of
  {Physics}},\ \bibinfo {editor} {edited by\ \bibinfo {editor} {\bibfnamefont
  {G.}~\bibnamefont {Chiribella}}\ and\ \bibinfo {editor} {\bibfnamefont
  {R.~W.}\ \bibnamefont {Spekkens}}}\ (\bibinfo  {publisher} {Springer
  Netherlands},\ \bibinfo {address} {Dordrecht},\ \bibinfo {year} {2016})\ pp.\
  \bibinfo {pages} {423--438}\BibitemShut {NoStop}%
\bibitem [{\citenamefont {Brassard}\ \emph {et~al.}(2006)\citenamefont
  {Brassard}, \citenamefont {Buhrman}, \citenamefont {Linden}, \citenamefont
  {Méthot}, \citenamefont {Tapp},\ and\ \citenamefont {Unger}}]{Brassard2006}%
  \BibitemOpen
  \bibfield  {author} {\bibinfo {author} {\bibfnamefont {G.}~\bibnamefont
  {Brassard}}, \bibinfo {author} {\bibfnamefont {H.}~\bibnamefont {Buhrman}},
  \bibinfo {author} {\bibfnamefont {N.}~\bibnamefont {Linden}}, \bibinfo
  {author} {\bibfnamefont {A.~A.}\ \bibnamefont {Méthot}}, \bibinfo {author}
  {\bibfnamefont {A.}~\bibnamefont {Tapp}}, \ and\ \bibinfo {author}
  {\bibfnamefont {F.}~\bibnamefont {Unger}},\ }\href {\doibase
  10.1103/PhysRevLett.96.250401} {\bibfield  {journal} {\bibinfo  {journal}
  {Physical Review Letters}\ }\textbf {\bibinfo {volume} {96}},\ \bibinfo
  {pages} {250401} (\bibinfo {year} {2006})},\ \bibinfo {note} {publisher:
  American Physical Society}\BibitemShut {NoStop}%
\bibitem [{\citenamefont {Linden}\ \emph {et~al.}(2007)\citenamefont {Linden},
  \citenamefont {Popescu}, \citenamefont {Short},\ and\ \citenamefont
  {Winter}}]{Linden2007}%
  \BibitemOpen
  \bibfield  {author} {\bibinfo {author} {\bibfnamefont {N.}~\bibnamefont
  {Linden}}, \bibinfo {author} {\bibfnamefont {S.}~\bibnamefont {Popescu}},
  \bibinfo {author} {\bibfnamefont {A.~J.}\ \bibnamefont {Short}}, \ and\
  \bibinfo {author} {\bibfnamefont {A.}~\bibnamefont {Winter}},\ }\href
  {\doibase 10.1103/PhysRevLett.99.180502} {\bibfield  {journal} {\bibinfo
  {journal} {Physical Review Letters}\ }\textbf {\bibinfo {volume} {99}},\
  \bibinfo {pages} {180502} (\bibinfo {year} {2007})},\ \bibinfo {note}
  {publisher: American Physical Society}\BibitemShut {NoStop}%
\bibitem [{\citenamefont {Broadbent}(2016)}]{Broadbent2016}%
  \BibitemOpen
  \bibfield  {author} {\bibinfo {author} {\bibfnamefont {A.}~\bibnamefont
  {Broadbent}},\ }\href {\doibase 10.1103/PhysRevA.94.022318} {\bibfield
  {journal} {\bibinfo  {journal} {Physical Review A}\ }\textbf {\bibinfo
  {volume} {94}},\ \bibinfo {pages} {022318} (\bibinfo {year} {2016})},\
  \bibinfo {note} {publisher: American Physical Society}\BibitemShut {NoStop}%
\bibitem [{\citenamefont {Fritz}\ \emph {et~al.}(2013)\citenamefont {Fritz},
  \citenamefont {Sainz}, \citenamefont {Augusiak}, \citenamefont {Brask},
  \citenamefont {Chaves}, \citenamefont {Leverrier},\ and\ \citenamefont
  {Acín}}]{Fritz2013a}%
  \BibitemOpen
  \bibfield  {author} {\bibinfo {author} {\bibfnamefont {T.}~\bibnamefont
  {Fritz}}, \bibinfo {author} {\bibfnamefont {A.~B.}\ \bibnamefont {Sainz}},
  \bibinfo {author} {\bibfnamefont {R.}~\bibnamefont {Augusiak}}, \bibinfo
  {author} {\bibfnamefont {J.~B.}\ \bibnamefont {Brask}}, \bibinfo {author}
  {\bibfnamefont {R.}~\bibnamefont {Chaves}}, \bibinfo {author} {\bibfnamefont
  {A.}~\bibnamefont {Leverrier}}, \ and\ \bibinfo {author} {\bibfnamefont
  {A.}~\bibnamefont {Acín}},\ }\href {\doibase 10.1038/ncomms3263} {\bibfield
  {journal} {\bibinfo  {journal} {Nature Communications}\ }\textbf {\bibinfo
  {volume} {4}},\ \bibinfo {pages} {1} (\bibinfo {year} {2013})},\ \bibinfo
  {note} {number: 1 Publisher: Nature Publishing Group}\BibitemShut {NoStop}%
\bibitem [{\citenamefont {Cabello}(2012)}]{Cabello2012}%
  \BibitemOpen
  \bibfield  {author} {\bibinfo {author} {\bibfnamefont {A.}~\bibnamefont
  {Cabello}},\ }\href {http://arxiv.org/abs/1212.1756} {\bibfield  {journal}
  {\bibinfo  {journal} {arXiv:1212.1756}\ } (\bibinfo {year} {2012})},\
  \bibinfo {note} {arXiv: 1212.1756}\BibitemShut {NoStop}%
\bibitem [{\citenamefont {Gonda}\ \emph {et~al.}(2018)\citenamefont {Gonda},
  \citenamefont {Kunjwal}, \citenamefont {Schmid}, \citenamefont {Wolfe},\ and\
  \citenamefont {Sainz}}]{Gonda2018}%
  \BibitemOpen
  \bibfield  {author} {\bibinfo {author} {\bibfnamefont {T.}~\bibnamefont
  {Gonda}}, \bibinfo {author} {\bibfnamefont {R.}~\bibnamefont {Kunjwal}},
  \bibinfo {author} {\bibfnamefont {D.}~\bibnamefont {Schmid}}, \bibinfo
  {author} {\bibfnamefont {E.}~\bibnamefont {Wolfe}}, \ and\ \bibinfo {author}
  {\bibfnamefont {A.~B.}\ \bibnamefont {Sainz}},\ }\href {\doibase
  10.22331/q-2018-08-27-87} {\bibfield  {journal} {\bibinfo  {journal}
  {Quantum}\ }\textbf {\bibinfo {volume} {2}} (\bibinfo {year} {2018}),\
  10.22331/q-2018-08-27-87}\BibitemShut {NoStop}%
\bibitem [{\citenamefont {Navascués}\ \emph {et~al.}(2015)\citenamefont
  {Navascués}, \citenamefont {Guryanova}, \citenamefont {Hoban},\ and\
  \citenamefont {Acín}}]{Navascues2015}%
  \BibitemOpen
  \bibfield  {author} {\bibinfo {author} {\bibfnamefont {M.}~\bibnamefont
  {Navascués}}, \bibinfo {author} {\bibfnamefont {Y.}~\bibnamefont
  {Guryanova}}, \bibinfo {author} {\bibfnamefont {M.~J.}\ \bibnamefont
  {Hoban}}, \ and\ \bibinfo {author} {\bibfnamefont {A.}~\bibnamefont
  {Acín}},\ }\href {\doibase 10.1038/ncomms7288} {\bibfield  {journal}
  {\bibinfo  {journal} {Nature Communications}\ }\textbf {\bibinfo {volume}
  {6}},\ \bibinfo {pages} {6288} (\bibinfo {year} {2015})}\BibitemShut
  {NoStop}%
\bibitem [{\citenamefont {Henson}\ and\ \citenamefont
  {Sainz}(2015)}]{Henson2015}%
  \BibitemOpen
  \bibfield  {author} {\bibinfo {author} {\bibfnamefont {J.}~\bibnamefont
  {Henson}}\ and\ \bibinfo {author} {\bibfnamefont {A.~B.}\ \bibnamefont
  {Sainz}},\ }\href {\doibase 10.1103/PhysRevA.91.042114} {\bibfield  {journal}
  {\bibinfo  {journal} {Physical Review A}\ }\textbf {\bibinfo {volume} {91}},\
  \bibinfo {pages} {042114} (\bibinfo {year} {2015})}\BibitemShut {NoStop}%
\bibitem [{\citenamefont {Navascu\'es}\ \emph {et~al.}(2007)\citenamefont
  {Navascu\'es}, \citenamefont {Pironio},\ and\ \citenamefont
  {Ac\'{\i}n}}]{NPA}%
  \BibitemOpen
  \bibfield  {author} {\bibinfo {author} {\bibfnamefont {M.}~\bibnamefont
  {Navascu\'es}}, \bibinfo {author} {\bibfnamefont {S.}~\bibnamefont
  {Pironio}}, \ and\ \bibinfo {author} {\bibfnamefont {A.}~\bibnamefont
  {Ac\'{\i}n}},\ }\href {\doibase 10.1103/PhysRevLett.98.010401} {\bibfield
  {journal} {\bibinfo  {journal} {Phys. Rev. Lett.}\ }\textbf {\bibinfo
  {volume} {98}},\ \bibinfo {pages} {010401} (\bibinfo {year}
  {2007})}\BibitemShut {NoStop}%
\bibitem [{\citenamefont {Yang}\ \emph {et~al.}(2014)\citenamefont {Yang},
  \citenamefont {V\'ertesi}, \citenamefont {Bancal}, \citenamefont {Scarani},\
  and\ \citenamefont {Navascu\'es}}]{branciard}%
  \BibitemOpen
  \bibfield  {author} {\bibinfo {author} {\bibfnamefont {T.~H.}\ \bibnamefont
  {Yang}}, \bibinfo {author} {\bibfnamefont {T.}~\bibnamefont {V\'ertesi}},
  \bibinfo {author} {\bibfnamefont {J.-D.}\ \bibnamefont {Bancal}}, \bibinfo
  {author} {\bibfnamefont {V.}~\bibnamefont {Scarani}}, \ and\ \bibinfo
  {author} {\bibfnamefont {M.}~\bibnamefont {Navascu\'es}},\ }\href {\doibase
  10.1103/PhysRevLett.113.040401} {\bibfield  {journal} {\bibinfo  {journal}
  {Phys. Rev. Lett.}\ }\textbf {\bibinfo {volume} {113}},\ \bibinfo {pages}
  {040401} (\bibinfo {year} {2014})}\BibitemShut {NoStop}%
\bibitem [{\citenamefont {{\v S}upi{\'c}}\ \emph {et~al.}(2019)\citenamefont
  {{\v S}upi{\'c}}, \citenamefont {Hoban}, \citenamefont {Colomer},\ and\
  \citenamefont {Ac\'{i}n}}]{shca}%
  \BibitemOpen
  \bibfield  {author} {\bibinfo {author} {\bibfnamefont {I.}~\bibnamefont {{\v
  S}upi{\'c}}}, \bibinfo {author} {\bibfnamefont {M.~J.}\ \bibnamefont
  {Hoban}}, \bibinfo {author} {\bibfnamefont {L.~D.}\ \bibnamefont {Colomer}},
  \ and\ \bibinfo {author} {\bibfnamefont {A.}~\bibnamefont {Ac\'{i}n}},\
  }\href {arXiv:quant-ph/1911.09395} {\bibfield  {journal} {\bibinfo  {journal}
  {arXiv:quant-ph/1911.09395}\ } (\bibinfo {year} {2019})}\BibitemShut
  {NoStop}%
\bibitem [{\citenamefont {Colbeck}(2009)}]{Colbeckthesis}%
  \BibitemOpen
  \bibfield  {author} {\bibinfo {author} {\bibfnamefont {R.}~\bibnamefont
  {Colbeck}},\ }\href {https://arxiv.org/abs/0911.3814} {\bibfield  {journal}
  {\bibinfo  {journal} {arXiv:quant-ph/0911.3814}\ } (\bibinfo {year}
  {2009})}\BibitemShut {NoStop}%
\bibitem [{\citenamefont {Colbeck}\ and\ \citenamefont
  {Kent}(2011)}]{Colbeck_2011}%
  \BibitemOpen
  \bibfield  {author} {\bibinfo {author} {\bibfnamefont {R.}~\bibnamefont
  {Colbeck}}\ and\ \bibinfo {author} {\bibfnamefont {A.}~\bibnamefont {Kent}},\
  }\href {\doibase 10.1088/1751-8113/44/9/095305} {\bibfield  {journal}
  {\bibinfo  {journal} {Journal of Physics A: Mathematical and Theoretical}\
  }\textbf {\bibinfo {volume} {44}},\ \bibinfo {pages} {095305} (\bibinfo
  {year} {2011})}\BibitemShut {NoStop}%
\bibitem [{\citenamefont {Pironio}\ \emph {et~al.}(2010)\citenamefont
  {Pironio}, \citenamefont {Ac{\'\i}n}, \citenamefont {Massar}, \citenamefont
  {de~la Giroday}, \citenamefont {Matsukevich}, \citenamefont {Maunz},
  \citenamefont {Olmschenk}, \citenamefont {Hayes}, \citenamefont {Luo},
  \citenamefont {Manning},\ and\ \citenamefont {Monroe}}]{pironio}%
  \BibitemOpen
  \bibfield  {author} {\bibinfo {author} {\bibfnamefont {S.}~\bibnamefont
  {Pironio}}, \bibinfo {author} {\bibfnamefont {A.}~\bibnamefont {Ac{\'\i}n}},
  \bibinfo {author} {\bibfnamefont {S.}~\bibnamefont {Massar}}, \bibinfo
  {author} {\bibfnamefont {A.~B.}\ \bibnamefont {de~la Giroday}}, \bibinfo
  {author} {\bibfnamefont {D.~N.}\ \bibnamefont {Matsukevich}}, \bibinfo
  {author} {\bibfnamefont {P.}~\bibnamefont {Maunz}}, \bibinfo {author}
  {\bibfnamefont {S.}~\bibnamefont {Olmschenk}}, \bibinfo {author}
  {\bibfnamefont {D.}~\bibnamefont {Hayes}}, \bibinfo {author} {\bibfnamefont
  {L.}~\bibnamefont {Luo}}, \bibinfo {author} {\bibfnamefont {T.~A.}\
  \bibnamefont {Manning}}, \ and\ \bibinfo {author} {\bibfnamefont
  {C.}~\bibnamefont {Monroe}},\ }\href {\doibase 10.1038/nature09008}
  {\bibfield  {journal} {\bibinfo  {journal} {Nature}\ }\textbf {\bibinfo
  {volume} {464}},\ \bibinfo {pages} {1021} (\bibinfo {year}
  {2010})}\BibitemShut {NoStop}%
\bibitem [{\citenamefont {{\v S}upi{\'c}}\ \emph {et~al.}(2017)\citenamefont
  {{\v S}upi{\'c}}, \citenamefont {Skrzypczyk},\ and\ \citenamefont
  {Cavalcanti}}]{Supic2017}%
  \BibitemOpen
  \bibfield  {author} {\bibinfo {author} {\bibfnamefont {I.}~\bibnamefont {{\v
  S}upi{\'c}}}, \bibinfo {author} {\bibfnamefont {P.}~\bibnamefont
  {Skrzypczyk}}, \ and\ \bibinfo {author} {\bibfnamefont {D.}~\bibnamefont
  {Cavalcanti}},\ }\href {\doibase 10.1103/PhysRevA.95.042340} {\bibfield
  {journal} {\bibinfo  {journal} {Physical Review A}\ }\textbf {\bibinfo
  {volume} {95}},\ \bibinfo {pages} {042340} (\bibinfo {year}
  {2017})}\BibitemShut {NoStop}%
\bibitem [{\citenamefont {{Doherty}}\ \emph {et~al.}(2008)\citenamefont
  {{Doherty}}, \citenamefont {{Liang}}, \citenamefont {{Toner}},\ and\
  \citenamefont {{Wehner}}}]{Wehner}%
  \BibitemOpen
  \bibfield  {author} {\bibinfo {author} {\bibfnamefont {A.~C.}\ \bibnamefont
  {{Doherty}}}, \bibinfo {author} {\bibfnamefont {Y.}~\bibnamefont {{Liang}}},
  \bibinfo {author} {\bibfnamefont {B.}~\bibnamefont {{Toner}}}, \ and\
  \bibinfo {author} {\bibfnamefont {S.}~\bibnamefont {{Wehner}}},\ }in\
  \href@noop {} {\emph {\bibinfo {booktitle} {2008 23rd Annual IEEE Conference
  on Computational Complexity}}}\ (\bibinfo {year} {2008})\ pp.\ \bibinfo
  {pages} {199--210}\BibitemShut {NoStop}%
\bibitem [{\citenamefont {Mazurek}\ \emph {et~al.}(2019)\citenamefont
  {Mazurek}, \citenamefont {Pusey}, \citenamefont {Resch},\ and\ \citenamefont
  {Spekkens}}]{Mazurek2019}%
  \BibitemOpen
  \bibfield  {author} {\bibinfo {author} {\bibfnamefont {M.~D.}\ \bibnamefont
  {Mazurek}}, \bibinfo {author} {\bibfnamefont {M.~F.}\ \bibnamefont {Pusey}},
  \bibinfo {author} {\bibfnamefont {K.~J.}\ \bibnamefont {Resch}}, \ and\
  \bibinfo {author} {\bibfnamefont {R.~W.}\ \bibnamefont {Spekkens}},\ }\href
  {http://arxiv.org/abs/1710.05948} {\bibfield  {journal} {\bibinfo  {journal}
  {arXiv:1710.05948}\ } (\bibinfo {year} {2019})},\ \bibinfo {note} {arXiv:
  1710.05948}\BibitemShut {NoStop}%
\bibitem [{\citenamefont {Liang}\ and\ \citenamefont
  {Zhang}(2019)}]{Liang2019}%
  \BibitemOpen
  \bibfield  {author} {\bibinfo {author} {\bibfnamefont {Y.-C.}\ \bibnamefont
  {Liang}}\ and\ \bibinfo {author} {\bibfnamefont {Y.}~\bibnamefont {Zhang}},\
  }\href {\doibase 10.3390/e21020185} {\bibfield  {journal} {\bibinfo
  {journal} {Entropy}\ }\textbf {\bibinfo {volume} {21}},\ \bibinfo {pages}
  {185} (\bibinfo {year} {2019})},\ \bibinfo {note} {number: 2 Publisher:
  Multidisciplinary Digital Publishing Institute}\BibitemShut {NoStop}%
\bibitem [{\citenamefont {Schmid}\ \emph
  {et~al.}(2020{\natexlab{a}})\citenamefont {Schmid}, \citenamefont {Rosset},\
  and\ \citenamefont {Buscemi}}]{schmid2020type}%
  \BibitemOpen
  \bibfield  {author} {\bibinfo {author} {\bibfnamefont {D.}~\bibnamefont
  {Schmid}}, \bibinfo {author} {\bibfnamefont {D.}~\bibnamefont {Rosset}}, \
  and\ \bibinfo {author} {\bibfnamefont {F.}~\bibnamefont {Buscemi}},\ }\href
  {https://doi.org/10.22331/q-2020-04-30-262} {\bibfield  {journal} {\bibinfo
  {journal} {Quantum}\ }\textbf {\bibinfo {volume} {4}},\ \bibinfo {pages}
  {262} (\bibinfo {year} {2020}{\natexlab{a}})}\BibitemShut {NoStop}%
\bibitem [{\citenamefont {Schmid}\ \emph
  {et~al.}(2020{\natexlab{b}})\citenamefont {Schmid}, \citenamefont {Selby},\
  and\ \citenamefont {Spekkens}}]{schmid2020unscrambling}%
  \BibitemOpen
  \bibfield  {author} {\bibinfo {author} {\bibfnamefont {D.}~\bibnamefont
  {Schmid}}, \bibinfo {author} {\bibfnamefont {J.~H.}\ \bibnamefont {Selby}}, \
  and\ \bibinfo {author} {\bibfnamefont {R.~W.}\ \bibnamefont {Spekkens}},\
  }\href {https://arxiv.org/abs/2009.03297} {\enquote {\bibinfo {title}
  {Unscrambling the omelette of causation and inference: The framework of
  causal-inferential theories},}\ } (\bibinfo {year} {2020}{\natexlab{b}}),\
  \Eprint {http://arxiv.org/abs/2009.03297} {arXiv:2009.03297 [quant-ph]}
  \BibitemShut {NoStop}%
\bibitem [{\citenamefont {Rosset}\ \emph {et~al.}(2020)\citenamefont {Rosset},
  \citenamefont {Schmid},\ and\ \citenamefont
  {Buscemi}}]{rosset2019characterizing}%
  \BibitemOpen
  \bibfield  {author} {\bibinfo {author} {\bibfnamefont {D.}~\bibnamefont
  {Rosset}}, \bibinfo {author} {\bibfnamefont {D.}~\bibnamefont {Schmid}}, \
  and\ \bibinfo {author} {\bibfnamefont {F.}~\bibnamefont {Buscemi}},\ }\href
  {\doibase 10.1103/PhysRevLett.125.210402} {\bibfield  {journal} {\bibinfo
  {journal} {Phys. Rev. Lett.}\ }\textbf {\bibinfo {volume} {125}},\ \bibinfo
  {pages} {210402} (\bibinfo {year} {2020})}\BibitemShut {NoStop}%
\bibitem [{\citenamefont {Nielsen}\ and\ \citenamefont
  {Chuang}(2010)}]{NielsenAndChuang}%
  \BibitemOpen
  \bibfield  {author} {\bibinfo {author} {\bibfnamefont {M.~A.}\ \bibnamefont
  {Nielsen}}\ and\ \bibinfo {author} {\bibfnamefont {I.~L.}\ \bibnamefont
  {Chuang}},\ }\href {\doibase 10.1017/CBO9780511976667} {\emph {\bibinfo
  {title} {{Quantum Computation and Quantum Information}}}}\ (\bibinfo
  {publisher} {Cambridge University Press},\ \bibinfo {year}
  {2010})\BibitemShut {NoStop}%
\bibitem [{\citenamefont {Schmid}\ \emph
  {et~al.}(2019{\natexlab{a}})\citenamefont {Schmid}, \citenamefont {Ried},\
  and\ \citenamefont {Spekkens}}]{Schmidcausal}%
  \BibitemOpen
  \bibfield  {author} {\bibinfo {author} {\bibfnamefont {D.}~\bibnamefont
  {Schmid}}, \bibinfo {author} {\bibfnamefont {K.}~\bibnamefont {Ried}}, \ and\
  \bibinfo {author} {\bibfnamefont {R.~W.}\ \bibnamefont {Spekkens}},\ }\href
  {\doibase 10.1103/PhysRevA.100.022112} {\bibfield  {journal} {\bibinfo
  {journal} {Phys. Rev. A}\ }\textbf {\bibinfo {volume} {100}},\ \bibinfo
  {pages} {022112} (\bibinfo {year} {2019}{\natexlab{a}})}\BibitemShut
  {NoStop}%
\bibitem [{\citenamefont {Einstein}\ \emph {et~al.}(1935)\citenamefont
  {Einstein}, \citenamefont {Podolsky},\ and\ \citenamefont
  {Rosen}}]{Einstein1935}%
  \BibitemOpen
  \bibfield  {author} {\bibinfo {author} {\bibfnamefont {A.}~\bibnamefont
  {Einstein}}, \bibinfo {author} {\bibfnamefont {B.}~\bibnamefont {Podolsky}},
  \ and\ \bibinfo {author} {\bibfnamefont {N.}~\bibnamefont {Rosen}},\ }\href
  {\doibase 10.1103/PhysRev.47.777} {\bibfield  {journal} {\bibinfo  {journal}
  {Physical Review}\ }\textbf {\bibinfo {volume} {47}},\ \bibinfo {pages} {777}
  (\bibinfo {year} {1935})}\BibitemShut {NoStop}%
\bibitem [{\citenamefont {Schrodinger}(1935)}]{schrodinger_1935}%
  \BibitemOpen
  \bibfield  {author} {\bibinfo {author} {\bibfnamefont {E.}~\bibnamefont
  {Schrodinger}},\ }\href {\doibase 10.1017/S0305004100013554} {\bibfield
  {journal} {\bibinfo  {journal} {Mathematical Proceedings of the Cambridge
  Philosophical Society}\ }\textbf {\bibinfo {volume} {31}},\ \bibinfo {pages}
  {555} (\bibinfo {year} {1935})}\BibitemShut {NoStop}%
\bibitem [{\citenamefont {Wiseman}\ \emph {et~al.}(2007)\citenamefont
  {Wiseman}, \citenamefont {Jones},\ and\ \citenamefont {Doherty}}]{wisesteer}%
  \BibitemOpen
  \bibfield  {author} {\bibinfo {author} {\bibfnamefont {H.~M.}\ \bibnamefont
  {Wiseman}}, \bibinfo {author} {\bibfnamefont {S.~J.}\ \bibnamefont {Jones}},
  \ and\ \bibinfo {author} {\bibfnamefont {A.~C.}\ \bibnamefont {Doherty}},\
  }\href {\doibase 10.1103/PhysRevLett.98.140402} {\bibfield  {journal}
  {\bibinfo  {journal} {Phys. Rev. Lett.}\ }\textbf {\bibinfo {volume} {98}},\
  \bibinfo {pages} {140402} (\bibinfo {year} {2007})}\BibitemShut {NoStop}%
\bibitem [{\citenamefont {Skrzypczyk}\ \emph {et~al.}(2014)\citenamefont
  {Skrzypczyk}, \citenamefont {Navascu{\'e}s},\ and\ \citenamefont
  {Cavalcanti}}]{Skrzypczyk2014}%
  \BibitemOpen
  \bibfield  {author} {\bibinfo {author} {\bibfnamefont {P.}~\bibnamefont
  {Skrzypczyk}}, \bibinfo {author} {\bibfnamefont {M.}~\bibnamefont
  {Navascu{\'e}s}}, \ and\ \bibinfo {author} {\bibfnamefont {D.}~\bibnamefont
  {Cavalcanti}},\ }\href {\doibase 10.1103/PhysRevLett.112.180404} {\bibfield
  {journal} {\bibinfo  {journal} {Physical Review Letters}\ }\textbf {\bibinfo
  {volume} {112}},\ \bibinfo {pages} {180404} (\bibinfo {year}
  {2014})}\BibitemShut {NoStop}%
\bibitem [{\citenamefont {Gallego}\ and\ \citenamefont
  {Aolita}(2015)}]{Gallego2015}%
  \BibitemOpen
  \bibfield  {author} {\bibinfo {author} {\bibfnamefont {R.}~\bibnamefont
  {Gallego}}\ and\ \bibinfo {author} {\bibfnamefont {L.}~\bibnamefont
  {Aolita}},\ }\href {\doibase 10.1103/PhysRevX.5.041008} {\bibfield  {journal}
  {\bibinfo  {journal} {Physical Review X}\ }\textbf {\bibinfo {volume} {5}},\
  \bibinfo {pages} {041008} (\bibinfo {year} {2015})}\BibitemShut {NoStop}%
\bibitem [{\citenamefont {Piani}\ and\ \citenamefont
  {Watrous}(2015)}]{Piani2015a}%
  \BibitemOpen
  \bibfield  {author} {\bibinfo {author} {\bibfnamefont {M.}~\bibnamefont
  {Piani}}\ and\ \bibinfo {author} {\bibfnamefont {J.}~\bibnamefont
  {Watrous}},\ }\href {\doibase 10.1103/PhysRevLett.114.060404} {\bibfield
  {journal} {\bibinfo  {journal} {Physical Review Letters}\ }\textbf {\bibinfo
  {volume} {114}},\ \bibinfo {pages} {060404} (\bibinfo {year}
  {2015})}\BibitemShut {NoStop}%
\bibitem [{\citenamefont {Cavalcanti}\ and\ \citenamefont
  {Skrzypczyk}(2017)}]{Cavalcanti2017}%
  \BibitemOpen
  \bibfield  {author} {\bibinfo {author} {\bibfnamefont {D.}~\bibnamefont
  {Cavalcanti}}\ and\ \bibinfo {author} {\bibfnamefont {P.}~\bibnamefont
  {Skrzypczyk}},\ }\href {\doibase 10.1088/1361-6633/80/2/024001} {\bibfield
  {journal} {\bibinfo  {journal} {Reports on Progress in Physics}\ }\textbf
  {\bibinfo {volume} {80}},\ \bibinfo {pages} {024001} (\bibinfo {year}
  {2017})}\BibitemShut {NoStop}%
\bibitem [{\citenamefont {Uola}\ \emph {et~al.}(2019)\citenamefont {Uola},
  \citenamefont {Costa}, \citenamefont {Nguyen},\ and\ \citenamefont
  {G{\"u}hne}}]{Uola2019}%
  \BibitemOpen
  \bibfield  {author} {\bibinfo {author} {\bibfnamefont {R.}~\bibnamefont
  {Uola}}, \bibinfo {author} {\bibfnamefont {A.~C.~S.}\ \bibnamefont {Costa}},
  \bibinfo {author} {\bibfnamefont {H.~C.}\ \bibnamefont {Nguyen}}, \ and\
  \bibinfo {author} {\bibfnamefont {O.}~\bibnamefont {G{\"u}hne}},\ }\href@noop
  {} {\  (\bibinfo {year} {2019})},\ \Eprint {http://arxiv.org/abs/1903.06663}
  {arXiv:1903.06663} \BibitemShut {NoStop}%
\bibitem [{\citenamefont {Pusey}(2013)}]{Pusey2013}%
  \BibitemOpen
  \bibfield  {author} {\bibinfo {author} {\bibfnamefont {M.~F.}\ \bibnamefont
  {Pusey}},\ }\href {\doibase 10.1103/PhysRevA.88.032313} {\bibfield  {journal}
  {\bibinfo  {journal} {Phys. Rev. A}\ }\textbf {\bibinfo {volume} {88}},\
  \bibinfo {pages} {032313} (\bibinfo {year} {2013})}\BibitemShut {NoStop}%
\bibitem [{\citenamefont {Cavalcanti}\ \emph {et~al.}(2017)\citenamefont
  {Cavalcanti}, \citenamefont {Skrzypczyk},\ and\ \citenamefont {\ifmmode
  \check{S}\else \v{S}\fi{}upi\ifmmode~\acute{c}\else \'{c}\fi{}}}]{telep}%
  \BibitemOpen
  \bibfield  {author} {\bibinfo {author} {\bibfnamefont {D.}~\bibnamefont
  {Cavalcanti}}, \bibinfo {author} {\bibfnamefont {P.}~\bibnamefont
  {Skrzypczyk}}, \ and\ \bibinfo {author} {\bibfnamefont {I.}~\bibnamefont
  {\ifmmode \check{S}\else \v{S}\fi{}upi\ifmmode~\acute{c}\else \'{c}\fi{}}},\
  }\href {\doibase 10.1103/PhysRevLett.119.110501} {\bibfield  {journal}
  {\bibinfo  {journal} {Phys. Rev. Lett.}\ }\textbf {\bibinfo {volume} {119}},\
  \bibinfo {pages} {110501} (\bibinfo {year} {2017})}\BibitemShut {NoStop}%
\bibitem [{\citenamefont {\ifmmode \check{S}\else
  \v{S}\fi{}upi\ifmmode~\acute{c}\else \'{c}\fi{}}\ \emph
  {et~al.}(2019)\citenamefont {\ifmmode \check{S}\else
  \v{S}\fi{}upi\ifmmode~\acute{c}\else \'{c}\fi{}}, \citenamefont
  {Skrzypczyk},\ and\ \citenamefont {Cavalcanti}}]{PhysRevA.99.032334}%
  \BibitemOpen
  \bibfield  {author} {\bibinfo {author} {\bibfnamefont {I.}~\bibnamefont
  {\ifmmode \check{S}\else \v{S}\fi{}upi\ifmmode~\acute{c}\else \'{c}\fi{}}},
  \bibinfo {author} {\bibfnamefont {P.}~\bibnamefont {Skrzypczyk}}, \ and\
  \bibinfo {author} {\bibfnamefont {D.}~\bibnamefont {Cavalcanti}},\ }\href
  {\doibase 10.1103/PhysRevA.99.032334} {\bibfield  {journal} {\bibinfo
  {journal} {Phys. Rev. A}\ }\textbf {\bibinfo {volume} {99}},\ \bibinfo
  {pages} {032334} (\bibinfo {year} {2019})}\BibitemShut {NoStop}%
\bibitem [{\citenamefont {Barrett}\ \emph {et~al.}(2005)\citenamefont
  {Barrett}, \citenamefont {Linden}, \citenamefont {Massar}, \citenamefont
  {Pironio}, \citenamefont {Popescu},\ and\ \citenamefont
  {Roberts}}]{Barrett2005}%
  \BibitemOpen
  \bibfield  {author} {\bibinfo {author} {\bibfnamefont {J.}~\bibnamefont
  {Barrett}}, \bibinfo {author} {\bibfnamefont {N.}~\bibnamefont {Linden}},
  \bibinfo {author} {\bibfnamefont {S.}~\bibnamefont {Massar}}, \bibinfo
  {author} {\bibfnamefont {S.}~\bibnamefont {Pironio}}, \bibinfo {author}
  {\bibfnamefont {S.}~\bibnamefont {Popescu}}, \ and\ \bibinfo {author}
  {\bibfnamefont {D.}~\bibnamefont {Roberts}},\ }\href {\doibase
  10.1103/PhysRevA.71.022101} {\bibfield  {journal} {\bibinfo  {journal}
  {Physical Review A}\ }\textbf {\bibinfo {volume} {71}},\ \bibinfo {pages}
  {022101} (\bibinfo {year} {2005})}\BibitemShut {NoStop}%
\bibitem [{\citenamefont {Brunner}\ \emph
  {et~al.}(2014{\natexlab{b}})\citenamefont {Brunner}, \citenamefont
  {Cavalcanti}, \citenamefont {Pironio}, \citenamefont {Scarani},\ and\
  \citenamefont {Wehner}}]{brunner2013Bell}%
  \BibitemOpen
  \bibfield  {author} {\bibinfo {author} {\bibfnamefont {N.}~\bibnamefont
  {Brunner}}, \bibinfo {author} {\bibfnamefont {D.}~\bibnamefont {Cavalcanti}},
  \bibinfo {author} {\bibfnamefont {S.}~\bibnamefont {Pironio}}, \bibinfo
  {author} {\bibfnamefont {V.}~\bibnamefont {Scarani}}, \ and\ \bibinfo
  {author} {\bibfnamefont {S.}~\bibnamefont {Wehner}},\ }\href
  {https://doi.org/10.1103/RevModPhys.86.419} {\bibfield  {journal} {\bibinfo
  {journal} {Rev. Mod. Phys.}\ }\textbf {\bibinfo {volume} {86}},\ \bibinfo
  {pages} {419} (\bibinfo {year} {2014}{\natexlab{b}})}\BibitemShut {NoStop}%
\bibitem [{\citenamefont {Buscemi}(2012)}]{sq}%
  \BibitemOpen
  \bibfield  {author} {\bibinfo {author} {\bibfnamefont {F.}~\bibnamefont
  {Buscemi}},\ }\href {\doibase 10.1103/PhysRevLett.108.200401} {\bibfield
  {journal} {\bibinfo  {journal} {Phys. Rev. Lett.}\ }\textbf {\bibinfo
  {volume} {108}},\ \bibinfo {pages} {200401} (\bibinfo {year}
  {2012})}\BibitemShut {NoStop}%
\bibitem [{\citenamefont {Cavalcanti}\ \emph {et~al.}(2013)\citenamefont
  {Cavalcanti}, \citenamefont {Hall},\ and\ \citenamefont {Wiseman}}]{steer}%
  \BibitemOpen
  \bibfield  {author} {\bibinfo {author} {\bibfnamefont {E.~G.}\ \bibnamefont
  {Cavalcanti}}, \bibinfo {author} {\bibfnamefont {M.~J.~W.}\ \bibnamefont
  {Hall}}, \ and\ \bibinfo {author} {\bibfnamefont {H.~M.}\ \bibnamefont
  {Wiseman}},\ }\href {\doibase 10.1103/PhysRevA.87.032306} {\bibfield
  {journal} {\bibinfo  {journal} {Phys. Rev. A}\ }\textbf {\bibinfo {volume}
  {87}},\ \bibinfo {pages} {032306} (\bibinfo {year} {2013})}\BibitemShut
  {NoStop}%
\bibitem [{\citenamefont {Piani}(2015)}]{channelsteer}%
  \BibitemOpen
  \bibfield  {author} {\bibinfo {author} {\bibfnamefont {M.}~\bibnamefont
  {Piani}},\ }\href {\doibase 10.1364/JOSAB.32.0000A1} {\bibfield  {journal}
  {\bibinfo  {journal} {J. Opt. Soc. Am. B}\ }\textbf {\bibinfo {volume}
  {32}},\ \bibinfo {pages} {A1} (\bibinfo {year} {2015})}\BibitemShut {NoStop}%
\bibitem [{\citenamefont {Henson}\ \emph {et~al.}(2014)\citenamefont {Henson},
  \citenamefont {Lal},\ and\ \citenamefont {Pusey}}]{Henson2014}%
  \BibitemOpen
  \bibfield  {author} {\bibinfo {author} {\bibfnamefont {J.}~\bibnamefont
  {Henson}}, \bibinfo {author} {\bibfnamefont {R.}~\bibnamefont {Lal}}, \ and\
  \bibinfo {author} {\bibfnamefont {M.~F.}\ \bibnamefont {Pusey}},\ }\href
  {https://doi.org/10.1088/1367-2630/16/11/113043} {\bibfield  {journal}
  {\bibinfo  {journal} {New J. Phys.}\ }\textbf {\bibinfo {volume} {16}},\
  \bibinfo {pages} {113043} (\bibinfo {year} {2014})}\BibitemShut {NoStop}%
\bibitem [{\citenamefont {John~Selby}()}]{gptccchannels}%
  \BibitemOpen
  \bibfield  {author} {\bibinfo {author} {\bibfnamefont {A.~B.~S.}\
  \bibnamefont {John~Selby}, \bibfnamefont {Paulo~Cavalcanti}},\ }\href@noop {}
  {\enquote {\bibinfo {title} {Extended boxworld: a generalised probabilistic
  theory for type independent common cause resources},}\ }\bibinfo {note}
  {Forthcoming.}\BibitemShut {Stop}%
\bibitem [{\citenamefont {Eggeling}\ \emph {et~al.}(2002)\citenamefont
  {Eggeling}, \citenamefont {Schlingemann},\ and\ \citenamefont
  {Werner}}]{Eggeling_2002}%
  \BibitemOpen
  \bibfield  {author} {\bibinfo {author} {\bibfnamefont {T.}~\bibnamefont
  {Eggeling}}, \bibinfo {author} {\bibfnamefont {D.}~\bibnamefont
  {Schlingemann}}, \ and\ \bibinfo {author} {\bibfnamefont {R.~F.}\
  \bibnamefont {Werner}},\ }\href {\doibase 10.1209/epl/i2002-00579-4}
  {\bibfield  {journal} {\bibinfo  {journal} {Europhysics Letters ({EPL})}\
  }\textbf {\bibinfo {volume} {57}},\ \bibinfo {pages} {782} (\bibinfo {year}
  {2002})}\BibitemShut {NoStop}%
\bibitem [{\citenamefont {{Lipka-Bartosik}}\ and\ \citenamefont
  {Skrzypczyk}(2019)}]{LipkaBartosik2019}%
  \BibitemOpen
  \bibfield  {author} {\bibinfo {author} {\bibfnamefont {P.}~\bibnamefont
  {{Lipka-Bartosik}}}\ and\ \bibinfo {author} {\bibfnamefont {P.}~\bibnamefont
  {Skrzypczyk}},\ }\href@noop {} {\  (\bibinfo {year} {2019})},\ \Eprint
  {http://arxiv.org/abs/1908.05107} {arXiv:1908.05107} \BibitemShut {NoStop}%
\bibitem [{\citenamefont {Chiribella}\ \emph {et~al.}(2008)\citenamefont
  {Chiribella}, \citenamefont {D'Ariano},\ and\ \citenamefont
  {Perinotti}}]{qcombs08}%
  \BibitemOpen
  \bibfield  {author} {\bibinfo {author} {\bibfnamefont {G.}~\bibnamefont
  {Chiribella}}, \bibinfo {author} {\bibfnamefont {G.~M.}\ \bibnamefont
  {D'Ariano}}, \ and\ \bibinfo {author} {\bibfnamefont {P.}~\bibnamefont
  {Perinotti}},\ }\href {https://doi.org/10.1103/PhysRevLett.101.060401}
  {\bibfield  {journal} {\bibinfo  {journal} {Phys. Rev. Lett.}\ }\textbf
  {\bibinfo {volume} {101}},\ \bibinfo {pages} {060401} (\bibinfo {year}
  {2008})}\BibitemShut {NoStop}%
\bibitem [{\citenamefont {Chiribella}\ \emph {et~al.}(2009)\citenamefont
  {Chiribella}, \citenamefont {D'Ariano},\ and\ \citenamefont
  {Perinotti}}]{qcombs09}%
  \BibitemOpen
  \bibfield  {author} {\bibinfo {author} {\bibfnamefont {G.}~\bibnamefont
  {Chiribella}}, \bibinfo {author} {\bibfnamefont {G.~M.}\ \bibnamefont
  {D'Ariano}}, \ and\ \bibinfo {author} {\bibfnamefont {P.}~\bibnamefont
  {Perinotti}},\ }\href {https://doi.org/10.1103/PhysRevA.80.022339} {\bibfield
   {journal} {\bibinfo  {journal} {Phys. Rev. A}\ }\textbf {\bibinfo {volume}
  {80}},\ \bibinfo {pages} {022339} (\bibinfo {year} {2009})}\BibitemShut
  {NoStop}%
\bibitem [{\citenamefont {Scarani}(2019)}]{scarani}%
  \BibitemOpen
  \bibfield  {author} {\bibinfo {author} {\bibfnamefont {V.}~\bibnamefont
  {Scarani}},\ }\href@noop {} {\emph {\bibinfo {title} {Bell Nonlocality}}}\
  (\bibinfo  {publisher} {OUP},\ \bibinfo {year} {2019})\BibitemShut {NoStop}%
\bibitem [{\citenamefont {Rosset}\ \emph {et~al.}(2018)\citenamefont {Rosset},
  \citenamefont {Buscemi},\ and\ \citenamefont {Liang}}]{PhysRevX.8.021033}%
  \BibitemOpen
  \bibfield  {author} {\bibinfo {author} {\bibfnamefont {D.}~\bibnamefont
  {Rosset}}, \bibinfo {author} {\bibfnamefont {F.}~\bibnamefont {Buscemi}}, \
  and\ \bibinfo {author} {\bibfnamefont {Y.-C.}\ \bibnamefont {Liang}},\ }\href
  {\doibase 10.1103/PhysRevX.8.021033} {\bibfield  {journal} {\bibinfo
  {journal} {Phys. Rev. X}\ }\textbf {\bibinfo {volume} {8}},\ \bibinfo {pages}
  {021033} (\bibinfo {year} {2018})}\BibitemShut {NoStop}%
\bibitem [{\citenamefont {Takagi}\ and\ \citenamefont
  {Regula}(2019)}]{PhysRevX.9.031053}%
  \BibitemOpen
  \bibfield  {author} {\bibinfo {author} {\bibfnamefont {R.}~\bibnamefont
  {Takagi}}\ and\ \bibinfo {author} {\bibfnamefont {B.}~\bibnamefont
  {Regula}},\ }\href {\doibase 10.1103/PhysRevX.9.031053} {\bibfield  {journal}
  {\bibinfo  {journal} {Phys. Rev. X}\ }\textbf {\bibinfo {volume} {9}},\
  \bibinfo {pages} {031053} (\bibinfo {year} {2019})}\BibitemShut {NoStop}%
\bibitem [{\citenamefont {Uola}\ \emph {et~al.}(2020)\citenamefont {Uola},
  \citenamefont {Kraft},\ and\ \citenamefont {Abbott}}]{PhysRevA.101.052306}%
  \BibitemOpen
  \bibfield  {author} {\bibinfo {author} {\bibfnamefont {R.}~\bibnamefont
  {Uola}}, \bibinfo {author} {\bibfnamefont {T.}~\bibnamefont {Kraft}}, \ and\
  \bibinfo {author} {\bibfnamefont {A.~A.}\ \bibnamefont {Abbott}},\ }\href
  {\doibase 10.1103/PhysRevA.101.052306} {\bibfield  {journal} {\bibinfo
  {journal} {Phys. Rev. A}\ }\textbf {\bibinfo {volume} {101}},\ \bibinfo
  {pages} {052306} (\bibinfo {year} {2020})}\BibitemShut {NoStop}%
\bibitem [{\citenamefont {Schmid}\ \emph
  {et~al.}(2019{\natexlab{b}})\citenamefont {Schmid}, \citenamefont {Fraser},
  \citenamefont {Kunjwal}, \citenamefont {Sainz}, \citenamefont {Wolfe},\ and\
  \citenamefont {Spekkens}}]{LOSRvsLOCCentang}%
  \BibitemOpen
  \bibfield  {author} {\bibinfo {author} {\bibfnamefont {D.}~\bibnamefont
  {Schmid}}, \bibinfo {author} {\bibfnamefont {T.~C.}\ \bibnamefont {Fraser}},
  \bibinfo {author} {\bibfnamefont {R.}~\bibnamefont {Kunjwal}}, \bibinfo
  {author} {\bibfnamefont {A.~B.}\ \bibnamefont {Sainz}}, \bibinfo {author}
  {\bibfnamefont {E.}~\bibnamefont {Wolfe}}, \ and\ \bibinfo {author}
  {\bibfnamefont {R.~W.}\ \bibnamefont {Spekkens}},\ }\href
  {https://arxiv.org/abs/2004.09194} {\enquote {\bibinfo {title} {Why standard
  entanglement theory is inappropriate for the study of bell scenarios},}\ }
  (\bibinfo {year} {2019}{\natexlab{b}}),\ \Eprint
  {http://arxiv.org/abs/2004.09194} {arXiv:2004.09194 [quant-ph]} \BibitemShut
  {NoStop}%
\bibitem [{\citenamefont {Clauser}\ \emph {et~al.}(1969)\citenamefont
  {Clauser}, \citenamefont {Horne}, \citenamefont {Shimony},\ and\
  \citenamefont {Holt}}]{CHSH}%
  \BibitemOpen
  \bibfield  {author} {\bibinfo {author} {\bibfnamefont {J.~F.}\ \bibnamefont
  {Clauser}}, \bibinfo {author} {\bibfnamefont {M.~A.}\ \bibnamefont {Horne}},
  \bibinfo {author} {\bibfnamefont {A.}~\bibnamefont {Shimony}}, \ and\
  \bibinfo {author} {\bibfnamefont {R.~A.}\ \bibnamefont {Holt}},\ }\href
  {https://doi.org/10.1103/PhysRevLett.23.880} {\bibfield  {journal} {\bibinfo
  {journal} {Phys. Rev. Lett.}\ }\textbf {\bibinfo {volume} {23}},\ \bibinfo
  {pages} {880} (\bibinfo {year} {1969})}\BibitemShut {NoStop}%
\bibitem [{\citenamefont {Bennett}\ \emph {et~al.}(1999)\citenamefont
  {Bennett}, \citenamefont {DiVincenzo}, \citenamefont {Fuchs}, \citenamefont
  {Mor}, \citenamefont {Rains}, \citenamefont {Shor}, \citenamefont {Smolin},\
  and\ \citenamefont {Wootters}}]{Bennett}%
  \BibitemOpen
  \bibfield  {author} {\bibinfo {author} {\bibfnamefont {C.~H.}\ \bibnamefont
  {Bennett}}, \bibinfo {author} {\bibfnamefont {D.~P.}\ \bibnamefont
  {DiVincenzo}}, \bibinfo {author} {\bibfnamefont {C.~A.}\ \bibnamefont
  {Fuchs}}, \bibinfo {author} {\bibfnamefont {T.}~\bibnamefont {Mor}}, \bibinfo
  {author} {\bibfnamefont {E.}~\bibnamefont {Rains}}, \bibinfo {author}
  {\bibfnamefont {P.~W.}\ \bibnamefont {Shor}}, \bibinfo {author}
  {\bibfnamefont {J.~A.}\ \bibnamefont {Smolin}}, \ and\ \bibinfo {author}
  {\bibfnamefont {W.~K.}\ \bibnamefont {Wootters}},\ }\href {\doibase
  10.1103/PhysRevA.59.1070} {\bibfield  {journal} {\bibinfo  {journal} {Phys.
  Rev. A}\ }\textbf {\bibinfo {volume} {59}},\ \bibinfo {pages} {1070}
  (\bibinfo {year} {1999})}\BibitemShut {NoStop}%
\bibitem [{\citenamefont {Bennett}\ \emph {et~al.}(1993)\citenamefont
  {Bennett}, \citenamefont {Brassard}, \citenamefont {Cr\'epeau}, \citenamefont
  {Jozsa}, \citenamefont {Peres},\ and\ \citenamefont {Wootters}}]{Bennett93}%
  \BibitemOpen
  \bibfield  {author} {\bibinfo {author} {\bibfnamefont {C.~H.}\ \bibnamefont
  {Bennett}}, \bibinfo {author} {\bibfnamefont {G.}~\bibnamefont {Brassard}},
  \bibinfo {author} {\bibfnamefont {C.}~\bibnamefont {Cr\'epeau}}, \bibinfo
  {author} {\bibfnamefont {R.}~\bibnamefont {Jozsa}}, \bibinfo {author}
  {\bibfnamefont {A.}~\bibnamefont {Peres}}, \ and\ \bibinfo {author}
  {\bibfnamefont {W.~K.}\ \bibnamefont {Wootters}},\ }\href {\doibase
  10.1103/PhysRevLett.70.1895} {\bibfield  {journal} {\bibinfo  {journal}
  {Phys. Rev. Lett.}\ }\textbf {\bibinfo {volume} {70}},\ \bibinfo {pages}
  {1895} (\bibinfo {year} {1993})}\BibitemShut {NoStop}%
\bibitem [{\citenamefont {Sanders}(2012)}]{Beckmanthesis}%
  \BibitemOpen
  \bibfield  {author} {\bibinfo {author} {\bibfnamefont {Y.~R.}\ \bibnamefont
  {Sanders}},\ }\emph {\bibinfo {title} {{Resource Theories in Quantum
  Information}}},\ \href {\doibase 10.11575/PRISM/3875} {Ph.D. thesis},\
  \bibinfo  {school} {University of Calgary} (\bibinfo {year}
  {2012})\BibitemShut {NoStop}%
\bibitem [{\citenamefont {{Gonda}}\ and\ \citenamefont
  {{Spekkens}}(2019)}]{Gonda2019}%
  \BibitemOpen
  \bibfield  {author} {\bibinfo {author} {\bibfnamefont {T.}~\bibnamefont
  {{Gonda}}}\ and\ \bibinfo {author} {\bibfnamefont {R.~W.}\ \bibnamefont
  {{Spekkens}}},\ }\href@noop {} {\  (\bibinfo {year} {2019})},\ \Eprint
  {http://arxiv.org/abs/1912.07085} {arXiv:1912.07085} \BibitemShut {NoStop}%
\end{thebibliography}%

\appendix

\section{Explicit examples of nonfree resources} \label{sec:examples}\


The majority of this article focuses on what can be said about resources by considering only their type, and hence does not focus on particular instances of resources. Nonetheless, it is illustrative to see some concrete examples of postquantum resources. We now describe five explicit examples of four different types. We denote Alice's input (output) system by $A$ ($X$) and Bob's input (output) system by $B$ ($Y$). If any of these systems are classical, then we will identify the corresponding capital letter with the classical variable corresponding to that system, and will denote the possible values that it can take by $a,x,b,y$, respectively, where $a,x,b,y \in \{0,1\}$ unless otherwise specified. We will also denote the three Pauli matrices by $\sigma_1,\sigma_2,$ and $\sigma_3$.

\subsection{The PR box}
Our first explicit example is the celebrated Popescu-Rohrlich (PR) box~\cite{Popescu1994}, defined as the resource that achieves the logically maximal violation of the (CHSH) inequality~\cite{CHSH}. The PR box is a resource of type $\mathsf{CC} \too \mathsf{CC}$, and is completely specified by the following conditional probability distribution:
\begin{equation} 
    \mbox{p}(ab|xy) =
    \begin{cases}
        1/2 \quad &\text{if } a \oplus b = xy, \\
        0 \quad &\text{otherwise}. 
    \end{cases}
\end{equation} 
The PR box is well-known precisely because it cannot be generated by local operations and shared entanglement. More specifically, it achieves a value of 4 for the CHSH functional, while the largest value achievable by local measurements on a shared entangled state is $2 \sqrt{2}$.

\subsection{The PHHH ensemble-preparation} ~\label{ensembex}
Our second example of a postquantum resource is an ensemble-preparing channel introduced in Ref.~\cite{nosigboxes}, which we refer to as the PHHH ensemble-preparation.\footnote{ Ref.~\cite{nosigboxes} presented this resource as one with quantum inputs, but since these are immediately dephased in the computational basis, one can simply treat them as classical. Note also that an incoherent version of this channel was shown to be postquantum in Ref.~\cite{causallocaliz}.}
This resource has type $\mathsf{CC} \too \mathsf{QQ}$, and is constructed as follows. If the product of the inputs $xy=0$, then the channel outputs the Bell state $\ket{\phi^+} = (\ket{00}+\ket{11})/\sqrt2$, but if $xy = 1$ it outputs the Bell state $\ket{\psi^+} = (\ket{01}+\ket{10})/\sqrt2$. 

A quantum circuit that realizes the PHHH ensemble-preparation using classical communication from Bob to Alice is shown in Fig.~\ref{PHHH}. 

\begin{figure}[htb!]
\centering
\includegraphics[width=0.25\textwidth]{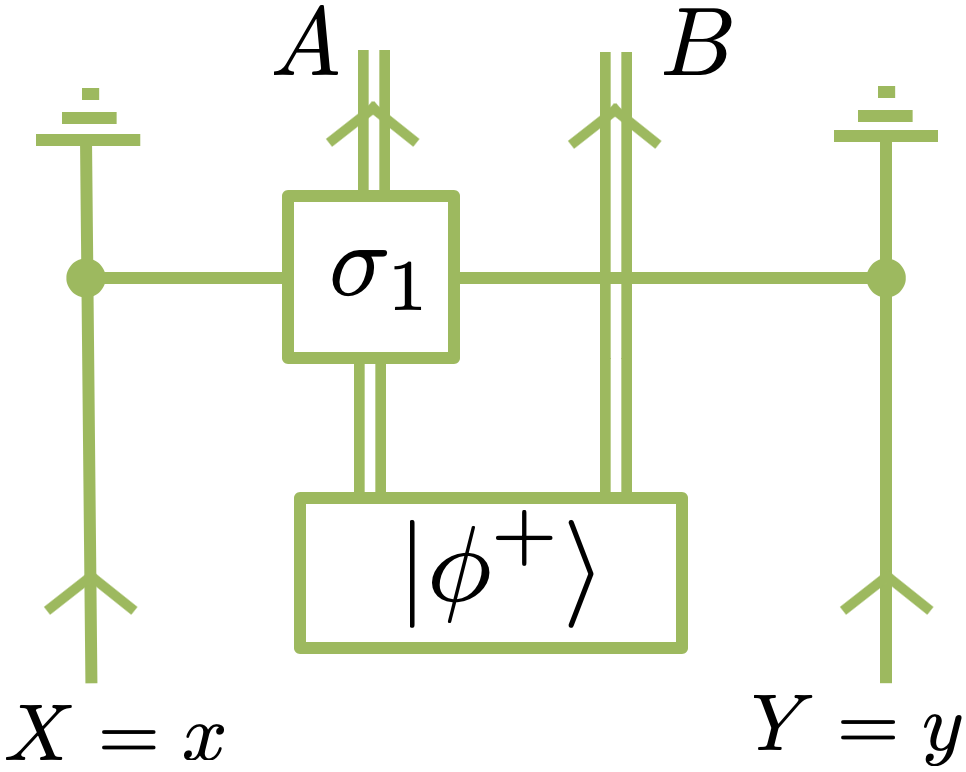}
\caption{ 
A quantum circuit that realizes the PHHH ensemble-preparation. Note that this circuit requires communication from Bob to Alice, and hence does not constitute an LOSE realization of the channel.
 Here and throughout, controlled operations are denoted by a horizontal line ending in a dot on the target system. 
} \label{PHHH}
\end{figure}

{\bf Converting the PR box into the PHHH ensemble-preparation and back---}
To see that the PHHH ensemble-preparation is LOSE-nonfree, we show that it is in the same equivalence class as the PR box; that is, the two are interconvertible using LOSE operations. 

To obtain a PR box from the PHHH ensemble-preparation, it suffices for each party to locally dephase their output system in the computational basis, resulting in a $\mathsf{CC} \too \mathsf{CC}$ channel with exactly the PR box correlations in the computation basis. To obtain the PHHH ensemble-preparation from the PR box, the two parties begin with maximally entangled state $\ket{\phi^+}$, and each party performs a local Pauli $\sigma_1$ transformation if the output they receive from the PR box is $1$. When $xy = 0$, the PR box outputs perfectly correlated bits, and hence either both parties perform the transformation, or neither party does; in either case, the state $\ket{\phi^+}$ is invariant. When $xy = 1$, however, the PR box outputs perfectly anticorrelated bits, and so only one party performs the Pauli $\sigma_1$ transformation; in either case, the resulting state is $\ket{\psi^+}$. The overall channel,
depicted in Fig.~\ref{Piani},
is exactly the PHHH ensemble-preparation.

\begin{figure}[htb!]
\centering
\includegraphics[width=0.3\textwidth]{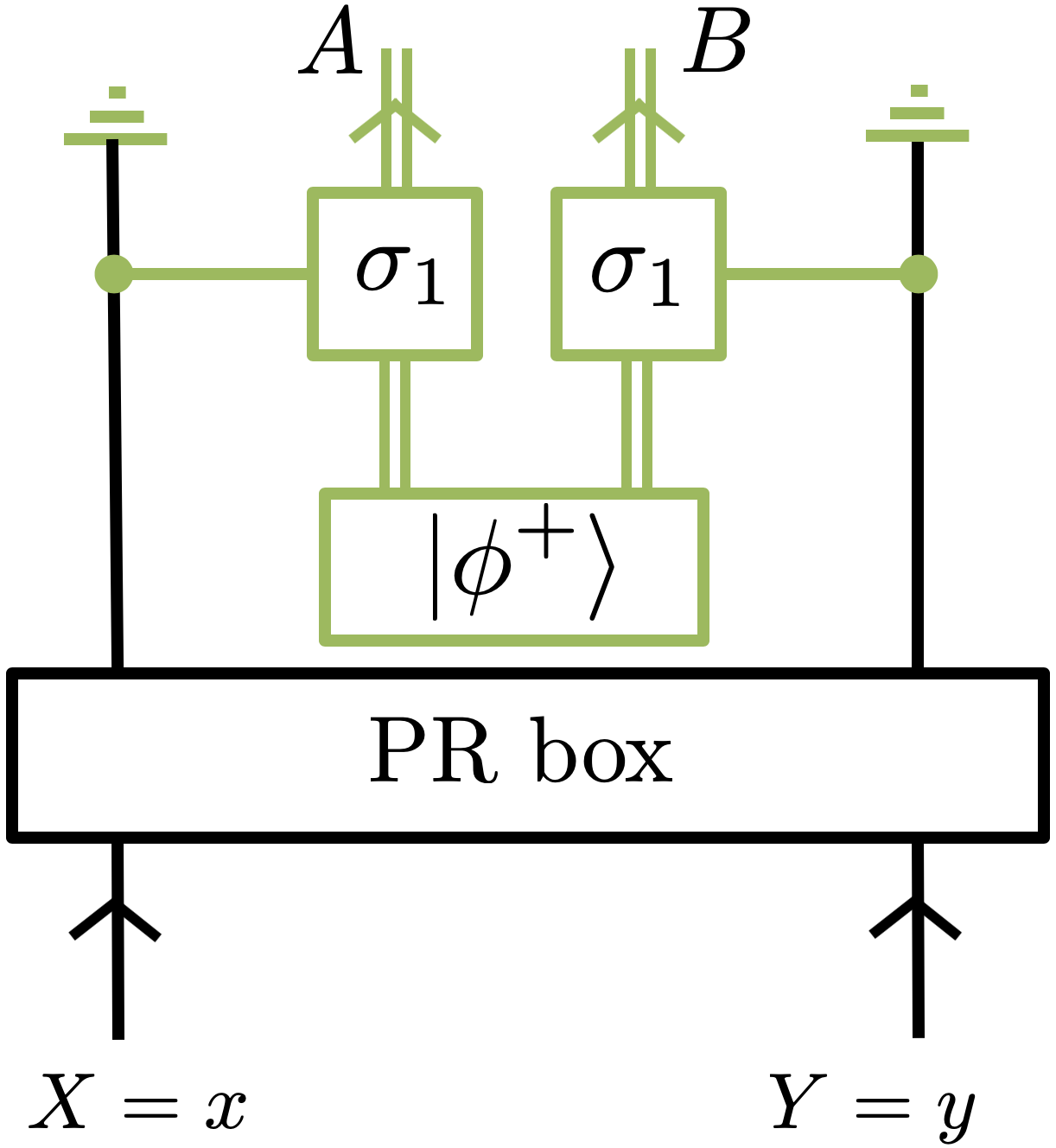}
\caption{ 
Generating the PHHH ensemble-preparation from the PR box using LOSE operations (green). 
} \label{Piani}
\end{figure}

\subsection{The SHSA Bob-with-input steering assemblage} ~\label{BWIex}
Our third example is a postquantum Bob-with-input (BWI) steering assemblage, of type $\mathsf{CC} \too \mathsf{CQ}$, that was introduced in Ref.~\cite{BobWI}. We refer to it as the SHSA BWI-assemblage. A Bob-with-input steering scenario generalizes the traditional steering scenario by allowing Bob a classical (rather than trivial) input. 
A BWI-assemblage is fully specified
by the set  $\{\rho_{a|xy}\}_{a,x,y}$ of unnormalized quantum states (on system $B$) that Bob can be steered into, indexed by the values $X=x$ ($A=a$) of Alice's classical input (output) and the value $Y=y$ of Bob's classical input. 

The SHSA BWI-assemblage is defined in Ref.~\cite{BobWI} by
\begin{equation} \label{assemblageex}
    \rho_{a|xy} = \frac{1}{4} \left( I + (-1)^a \sigma_{x+1} \right)^{T^y}.
\end{equation}
Here, $\sigma_x$ are the three Pauli matrices, depending on the value $x \in \{0,1,2\}$, and $T^y$ is the transpose (in the computational basis) on Bob's state when $y=1$ and the identity map when $y=0$. 

Mathematically (but {\em not} physically), this assemblage can be depicted as in Fig.~\ref{figassemblageex}, where a measurement of the Pauli observable $(-1)^{x}\sigma_{x+1}$ (depending on the input value $x$, and with outcomes $0$ and $1$ corresponding to eigenvalues $+1$ and $-1$, respectively) is performed on half of a Bell state $\ket{\phi^+}$, and $T^y$ is applied on the other half, depending on the input value $y$. The set of states on $B$ conditioned on particular values $X=x$, $A=a$, and $Y=y$ is given by Eq.~\eqref{assemblageex}. Crucially, the transpose is not a valid quantum channel, since it is not completely positive. Hence, Fig.~\ref{figassemblageex} does not represent a valid quantum circuit, much less a set of LOSE operations.

\begin{figure}[htb!]
\centering
\includegraphics[width=0.45\textwidth]{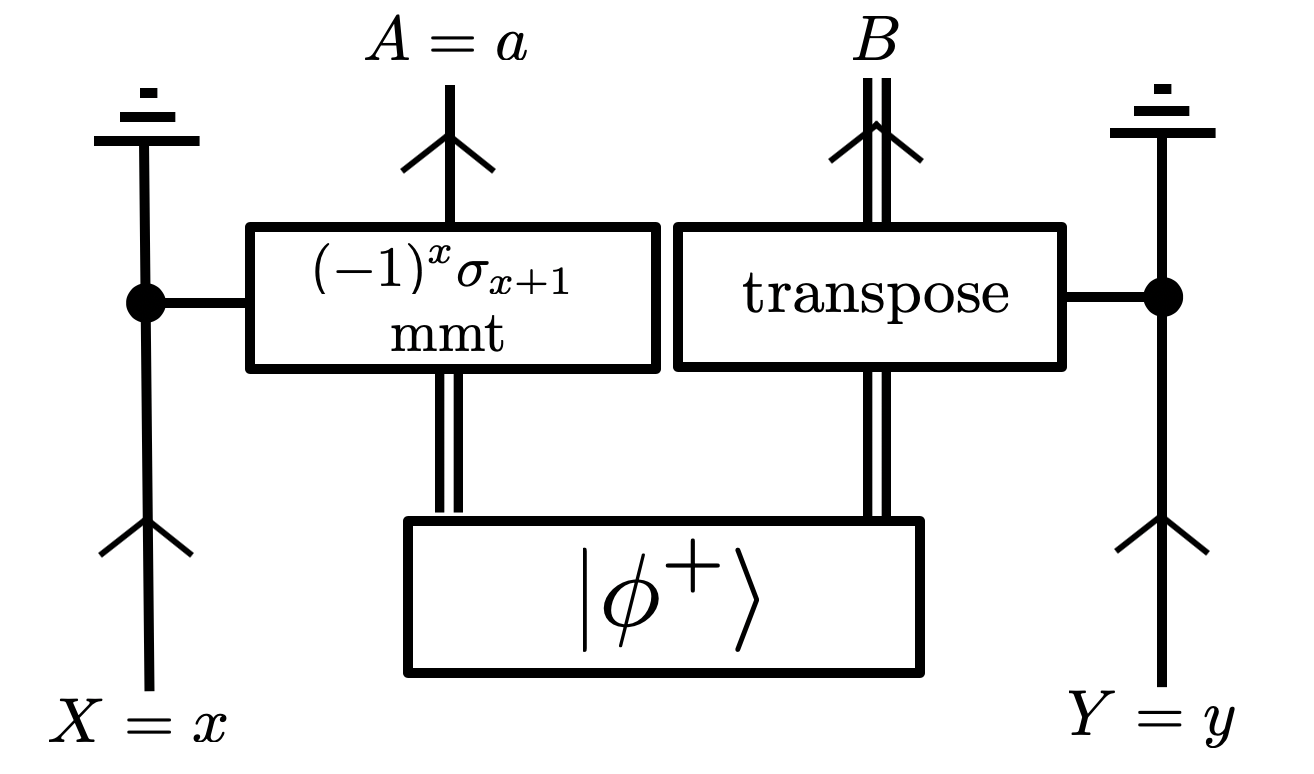}
\caption{ 
An (unphysical) depiction of the assemblage of Eq.~\eqref{assemblageex}. Note that this is not a quantum circuit, much less an LOSE process, since the transpose is not a CPTP map.
} \label{figassemblageex}
\end{figure}

Fig.~\ref{figassemblageex} provides some intuition for why the SHSA BWI-assemblage
is nonfree, as the transpose map is not physical. The formal proof that it is nonfree is more elaborate, and is given in the supplementary material of Ref.~\cite{BobWI}.

{\bf Converting the PHHH ensemble-preparation into the BWI-assemblage---}
Next, we prove that the PHHH ensemble-preparation can be converted into the SHSA BWI-assemblage using LOSE operations. The conversion is shown in Fig.~\ref{PRtoBWI}. By transitivity, this also proves that the PR box can also be converted into the SHSA BWI-assemblage.

\begin{figure}[htb!]
\centering
\includegraphics[width=0.3\textwidth]{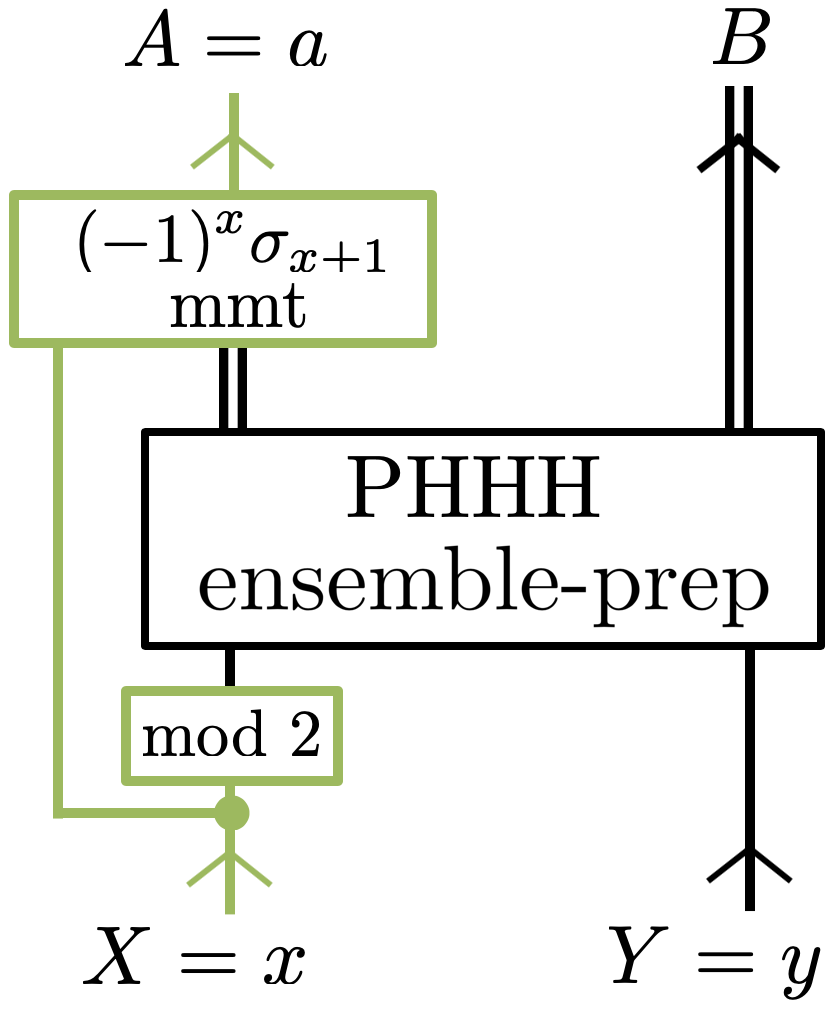}
\caption{ 
Generating the SHSA BWI-assemblage from the PHHH ensemble-preparation using LOSE operations (green).} \label{PRtoBWI}
\end{figure}

First, Alice maps her given input $x\in \{0,1,2\}$ to $x {\rm \ mod \ } 2$ and uses this as her input to the PHHH ensemble-preparation (but keeps a copy of $x$ as an input to her post-processings). On her quantum output, Alice then measures the Pauli observable $(-1)^{x}\sigma_{x+1}$, depending on her input value $x$, where the outcomes $0$ and $1$ correspond to eigenvalues $+1$ and $-1$, respectively. 
It is easy to verify that this reproduces the SHSA BWI-assemblage, e.g. by considering the 12 possible states generated for the 12 possible tuples $(a,x,y)$.

\subsection{The BGNP channel} \label{pqmmttwisted}
Our fourth example is a generic nonsignaling channel proposed in \cite{causallocaliz}, which we term the BGNP channel.\footnote{Strictly speaking, it is a family of channels indexed by a unitary $U_B$.}
It is a resource of type $\mathsf{QQ} \too \mathsf{QQ}$, where every input and output has dimension 4.
Let $\{ \ket{0}_A, \ket{1}_A, \ket{2}_A, \ket{3}_A \}$ be a basis for Alice's Hilbert space, and let $\{ \ket{0}_B, \ket{1}_B, \ket{2}_B, \ket{3}_B \}$ be a basis for Bob's Hilbert space. 
For Alice as well as for Bob, we refer to the subspace spanned by $\{\ket{0},\ket{1}\}$ as the first qubit, labeled $f$, and the subspace spanned by $\{\ket{2},\ket{3}\}$ as the second qubit, labeled $s$. 
One can form a basis for the 16-dimensional joint Hilbert space by considering the 4 Bell states on each of the two-qubit subspaces formed by taking one of Alice's qubits ($f$ or $s$) together with one of Bob's qubits ($f$ or $s$); explicitly, this is the basis
\begin{equation}
    \Big\{ \ket{\phi_{ff}^\pm}\! ,\ket{\psi_{ff}^\pm}\! , \ket{\phi_{fs}^\pm}\! ,\ket{\psi_{fs}^\pm}\! ,\ket{\phi_{sf}^\pm}\! ,\ket{\psi_{sf}^\pm}\! , \ket{\phi_{ss}^\pm}\! ,\ket{\psi_{ss}^\pm} \Big\}.
\end{equation}
It follows that
\begin{align} \label{twistedbasis}
    \Big\{ \ket{\phi_{ff}^\pm}\! ,&\ket{\psi_{ff}^\pm}\! , \ket{\phi_{fs}^\pm}\! ,\ket{\psi_{fs}^\pm}\! ,\ket{\phi_{sf}^\pm}\! ,\ket{\psi_{sf}^\pm}\! ,
    \\ \nonumber
   & (I_A \otimes U_B)\ket{\phi_{ss}^\pm}\! , (I_A \otimes U_B)\ket{\psi_{ss}^\pm} \Big\}
\end{align}
is also a basis, since it amounts to merely rotating the Bell basis of the $ss$ subspace by a unitary $U_B$ acting (only) on Bob's second qubit. This `twisted' basis is depicted schematically in Fig.~\ref{twisted}.

\begin{figure}[htb!]
\centering
\includegraphics[width=0.4\textwidth]{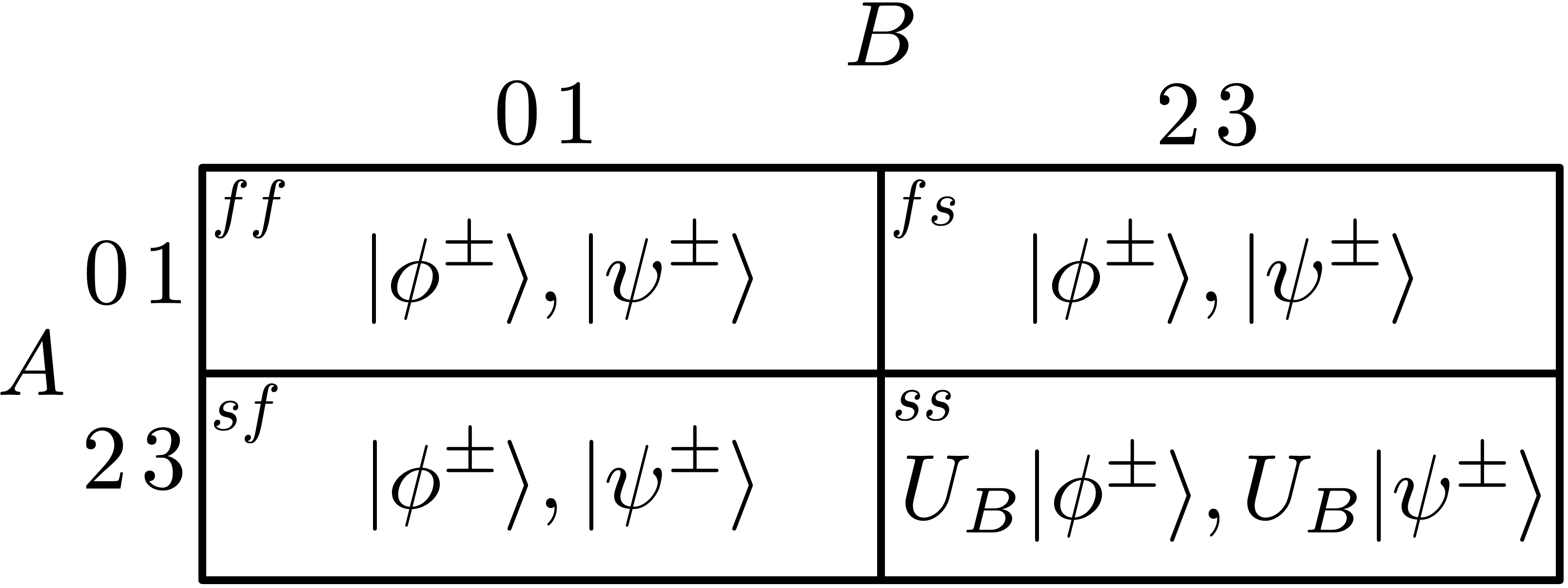}
\caption{ 
The basis of 16 states defined by Eq.~\eqref{twistedbasis}.
} \label{twisted}
\end{figure}

The BGNP channel is defined as
the channel which is fully dephasing in this latter basis (that is, as the channel whose Kraus operators are the projectors onto the states in this basis). It is proved in Ref.~\cite{causallocaliz} that this channel cannot be performed using LOSE operations. The proof relies on a particular necessary condition for a channel to be LOSE-free (which we repeat in Appendix~\ref{neccond}) which is not satisfied by the channel.

Ref.~\cite{causallocaliz} also provides an intuition for why this channel cannot be achieved via LOSE operations. The authors note that the channel which is dephasing in the Bell basis for two qubits can be performed using only local operations and shared randomness, as can the Bell measurement rotated by $U_B$. However, to implement the measurement defined by Eq.~\eqref{twistedbasis}, the parties would have to know whether or not the state lies in the $ss$ subspace (to determine which of these two Bell measurements to perform), and neither Alice nor Bob has access to this information in the absence of communication.

As an aside: this channel is defined by a projective measurement, and it always outputs a quantum state which is diagonal in a fixed basis. As such, one might expect that it can be viewed as a distributed measurement, that is, a resource of type $\mathsf{QQ} \too \mathsf{CC}$. However, it is not natively a resource of that type. This is because the output state is not diagonal {\em in the computational basis}, nor even in a product basis from which Alice and Bob can read off a classical outcome locally.

\subsection{The DFP channel}

All of the previous examples are entanglement-breaking. 
Our fifth example, which we refer to as the DFP channel, is a postquantum channel of type $\mathsf{QQ} \too \mathsf{QQ}$ that was introduced in Ref.~\cite{PerinottiLOSEex} as an example of an LOSE-nonfree resource which is {\em not} entanglement-breaking. (Note that {\em all} bipartite resources of types other than $\mathsf{QQ} \too \mathsf{QQ}$ are necessarily entanglement-breaking.)  The example is essentially a coherent mixture of a bipartite identity channel with the PHHH ensemble-preparation defined above (but where the bipartite quantum state used to implement the coherent mixture is also an output of the channel). 

Explicitly, the DFP channel can be defined by the circuit in Fig.~\ref{Paolo} for $\alpha = \frac{1}{6}$. 
\begin{figure}[htb!]
\centering
\includegraphics[width=0.35\textwidth]{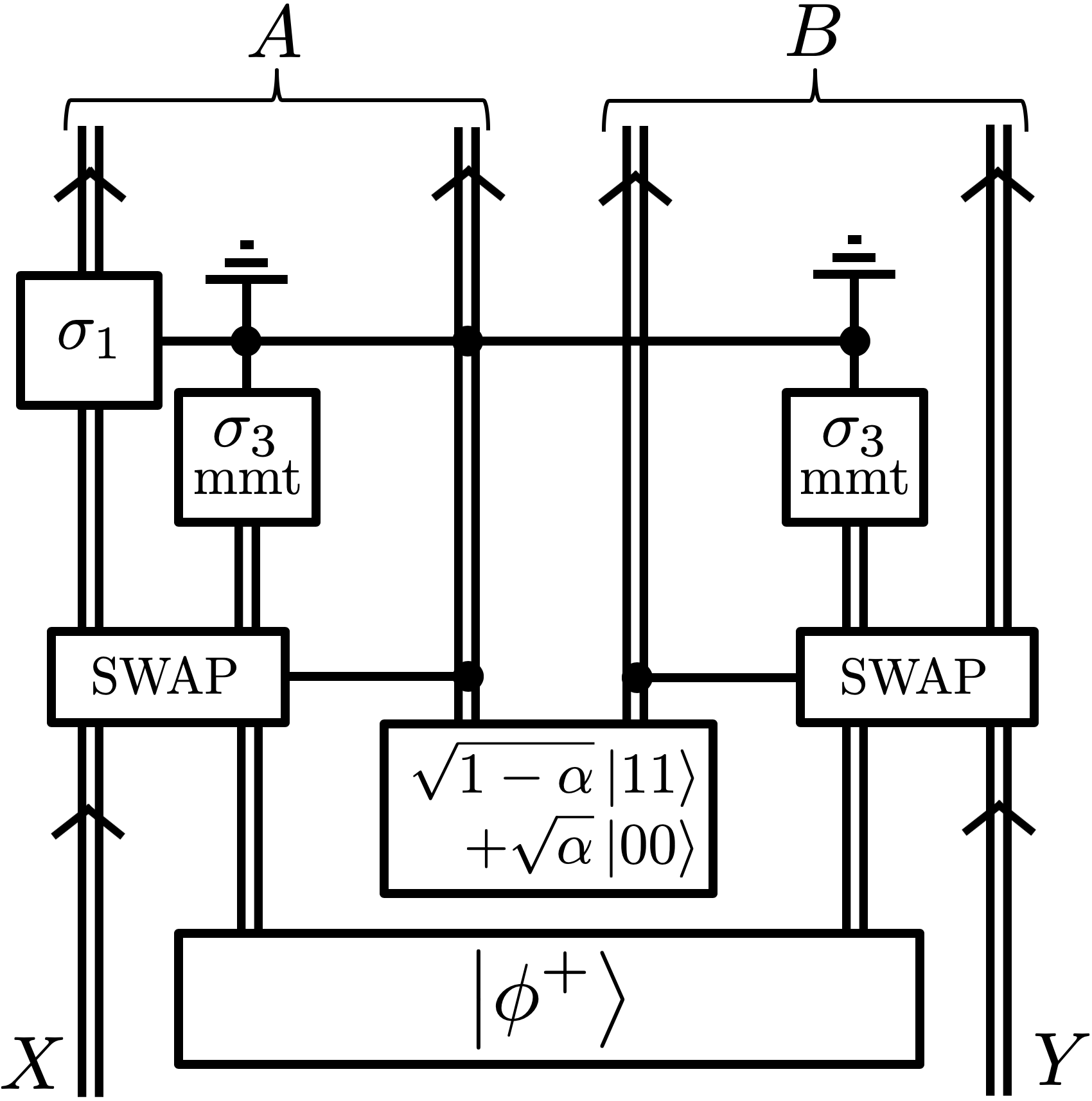}
\caption{ 
A quantum circuit that realizes the DFP channel when $\alpha = \frac{1}{6}$. Note that this circuit requires communication from Bob to Alice, and hence does not constitute an LOSE realization of the channel.
} \label{Paolo}
\end{figure}
Here, the Pauli unitary $\sigma_1$ is controlled on three systems; namely, it is only applied if the outcomes of both computational basis measurements (labeled by `$\sigma_3$ mmt') are $1$ and also if the quantum system on which it is controlled is in state $\ket{1}$. More details on this channel, including its Kraus representation, can be found in Ref.~\cite{PerinottiLOSEex}.

For $\alpha = \frac{1}{6}$, Ref.~\cite{PerinottiLOSEex} proves that this channel can be freely converted into a box-type resource that violates the CHSH inequality beyond the quantum bound of $2\sqrt{2}$, and hence that it is LOSE-nonfree, and also proves that its Choi state has a partial transpose with negative eigenvalues, and hence that it is not entanglement-breaking.  Intuitively, the fact that the channel is LOSE-nonfree is inherited from the component of the PHHH ensemble-preparation, while the fact that the channel is not entanglement-breaking is inherited from the component of the identity channel. They also prove that it is extremal in the space of channels.

{\bf Converting the PR box to the DFP channel}

Fig.~\ref{PHHHtoDFP} presents a construction for converting the PHHH ensemble-preparation into the DFP channel using LOSE operations. 

\begin{figure}[htb!]
\centering
\includegraphics[width=0.475\textwidth]{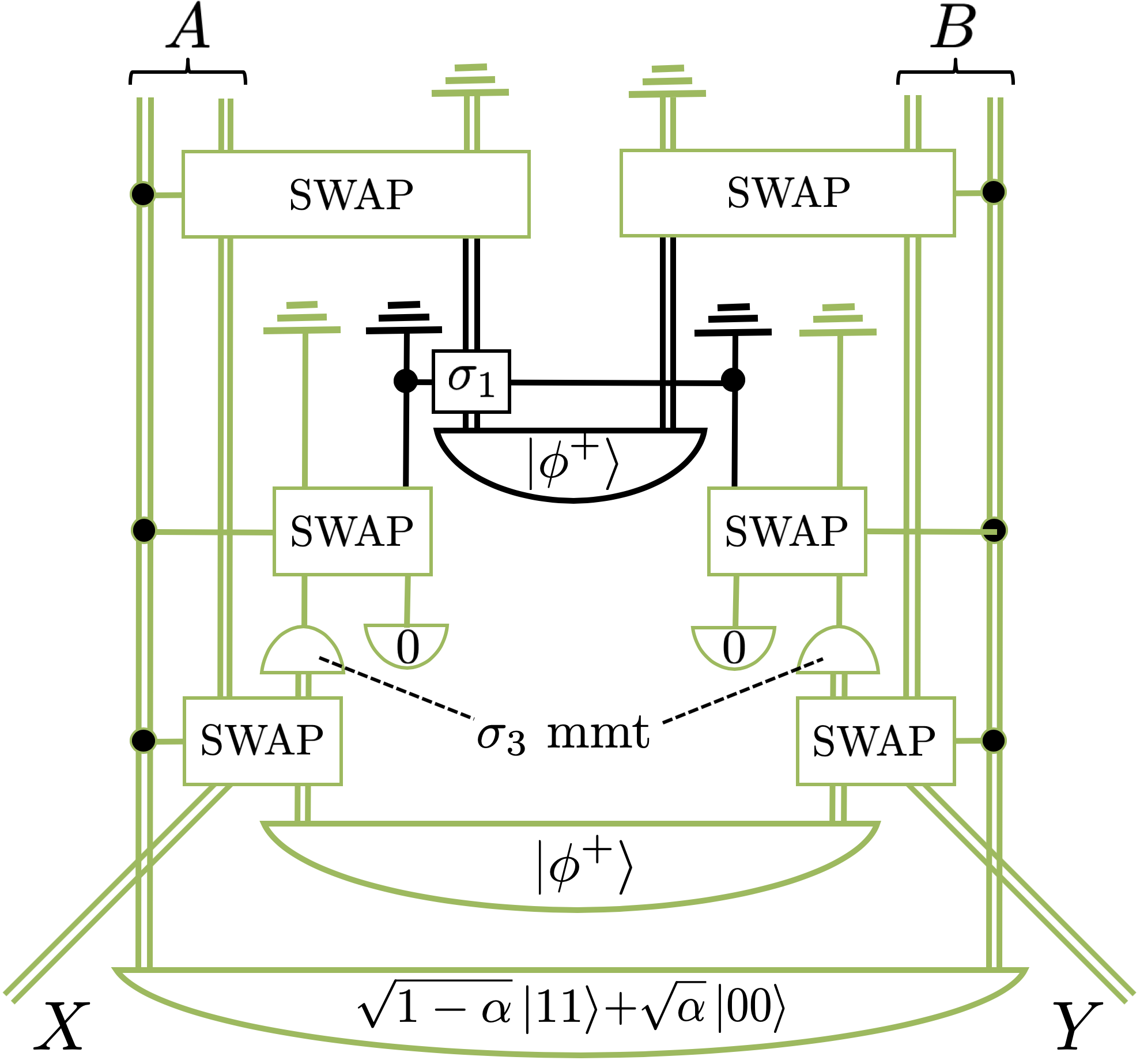}
\caption{ 
Generating the DFP channel from the PHHH ensemble-preparation (in black) using LOSE operations (in green).
} \label{PHHHtoDFP}
\end{figure}

Here, the process labeled $0$ represents a fixed preparation of state $\ket{0}$. That the overall channel in Fig.~\ref{PHHHtoDFP} is equivalent to the DFP channel in Fig.~\ref{Paolo} is straightforward but tedious to check by linear algebra. Intuitively, one can see that the construction works because the controlled swap operations ensure that either both Alice and Bob apply the identity transformation to their inputs, or they measure them in the $\sigma_3$ basis and implement the PHHH ensemble-preparation, controlled on the state of their quantum system being $\ket{0}$ or $\ket{1}$, respectively.

\subsection{The Bennett channel} \label{bennettq}
Our final example is a resource that we suspect is postquantum, though we have not yet been able to prove this. 
The resource is defined by dephasing in the well-known product basis considered in Ref.~\cite{Bennett}. We will refer to it as the Bennett channel. It is a resource of type $\mathsf{QQ} \too \mathsf{QQ}$, where every input and output has dimension 3.

Ref.~\cite{Bennett} considers a particular basis of product states for two qutrits. With the definitions
\begin{align}\label{eq:bennett}
    &\ket{\alpha_1} = \ket{1}& &\ket{\beta_1} = \ket{1}, \nonumber \\
    &\ket{\alpha_2} = \ket{0}& &\ket{\beta_2} = \ket{0} + \ket{1} \nonumber \\
    &\ket{\alpha_3} = \ket{0}& &\ket{\beta_3} = \ket{0} - \ket{1} \nonumber \\
    &\ket{\alpha_4} = \ket{2}& &\ket{\beta_4} = \ket{1} + \ket{2} \nonumber \\
    &\ket{\alpha_5} = \ket{2}& &\ket{\beta_5} = \ket{1} - \ket{2}  \\
    &\ket{\alpha_6} = \ket{1} + \ket{2}& &\ket{\beta_6} = \ket{0} \nonumber \\
    &\ket{\alpha_7} = \ket{1} - \ket{2}& &\ket{\beta_7} = \ket{0} \nonumber \\
    &\ket{\alpha_8} = \ket{0} + \ket{1}& &\ket{\beta_8} = \ket{2} \nonumber \\
    &\ket{\alpha_9} = \ket{0} - \ket{1}& &\ket{\beta_9} = \ket{2} \nonumber 
\end{align}
then  the relevant basis for the 9-dimensional input space defined by the two qubits is given (up to normalization) by 
\begin{equation} \label{bennett1}
\Big\{ \ket{\alpha_i} \otimes \ket{\beta_i} \Big\}_{i = 1,2,...,9}.
\end{equation}
The 9 states are depicted in Fig.~\ref{Bennett}.

\begin{figure}[htb!]
\centering
\includegraphics[width=0.3\textwidth]{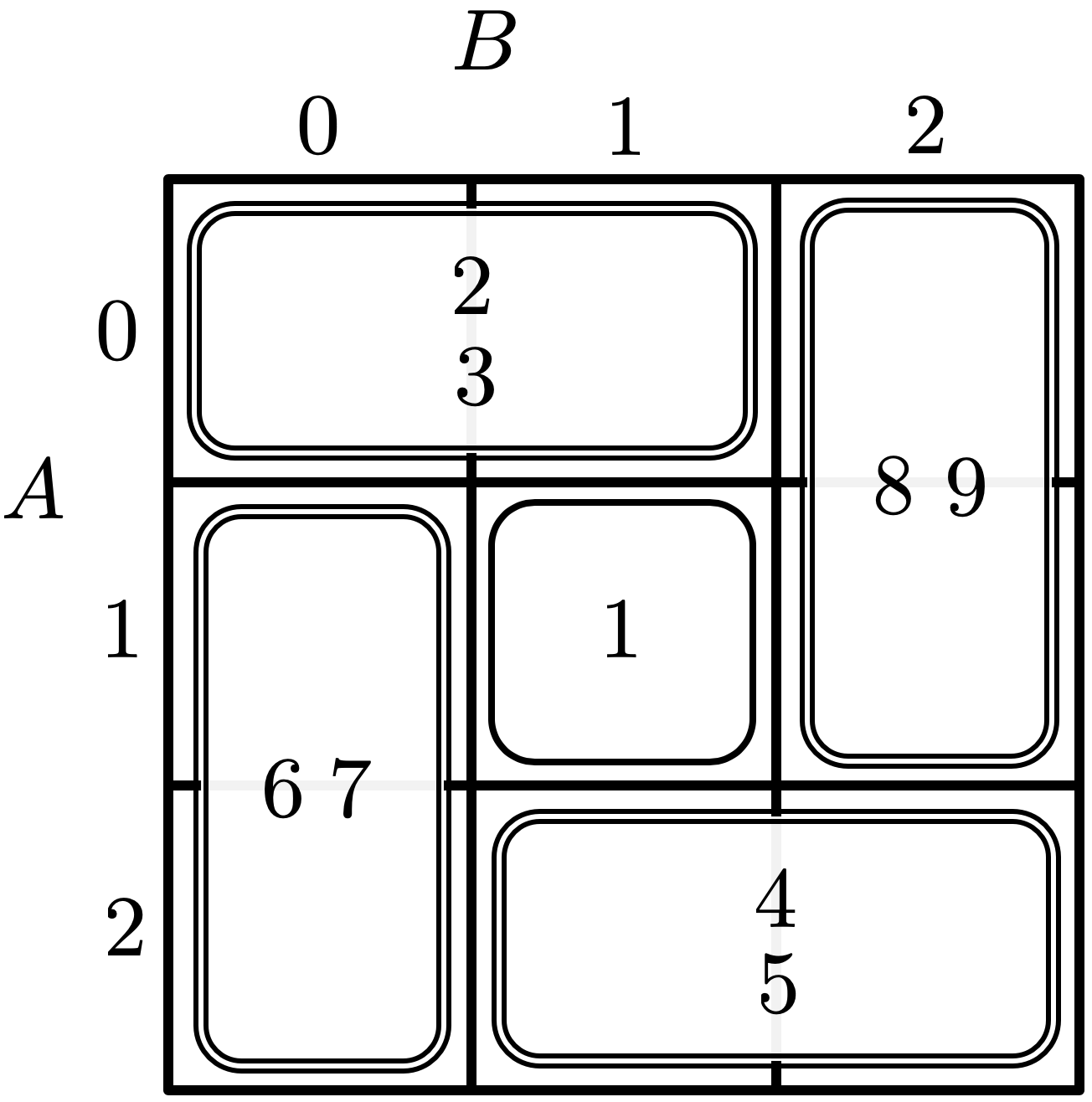}
\caption{ 
The basis of 9 states defined in Eq.~\eqref{bennett1} (via Eq.~\eqref{eq:bennett}).
} \label{Bennett}
\end{figure}

In \cite{Bennett}, it is shown that the channel defined by dephasing in this product basis cannot be performed using local operations and classical communication between the two parties. 
However, it is unclear whether or not this channel is a postquantum resource.
\begin{opq}
Can the Bennett channel be realized using LOSE operations?
\end{opq}



\section{Proofs of theorems from the main text} \label{sec:proofs}

The three proofs in these appendices are exactly the same as the corresponding proofs in Ref.~\cite{schmid2020type}, but with LOSR replaced by LOSE.

\subsection{Proof of Theorem~\ref{squniv}} \label{proof1}

We begin by proving Theorem~\ref{squniv}. 

\begin{proof}
Consider a bipartite channel $\mathcal{E}$ which has a quantum output of dimension $d$, together with arbitrary other outputs and inputs (denoted by dashed double lines), as shown in black in Fig.~\ref{SQandBack}(a). One can transform $\mathcal{E}$ into a resource with a quantum input of dimension $d$ and a classical output of dimension $d^2$ by composing $\mathcal{E}$ with a Bell measurement as shown in green in Fig.~\ref{SQandBack}(a); that is, by performing a measurement in a maximally entangled basis on the quantum output of $\mathcal{E}$ and a new quantum input of the same dimension $d$.
\begin{figure}[htb!]
\centering
\includegraphics[width=0.48\textwidth]{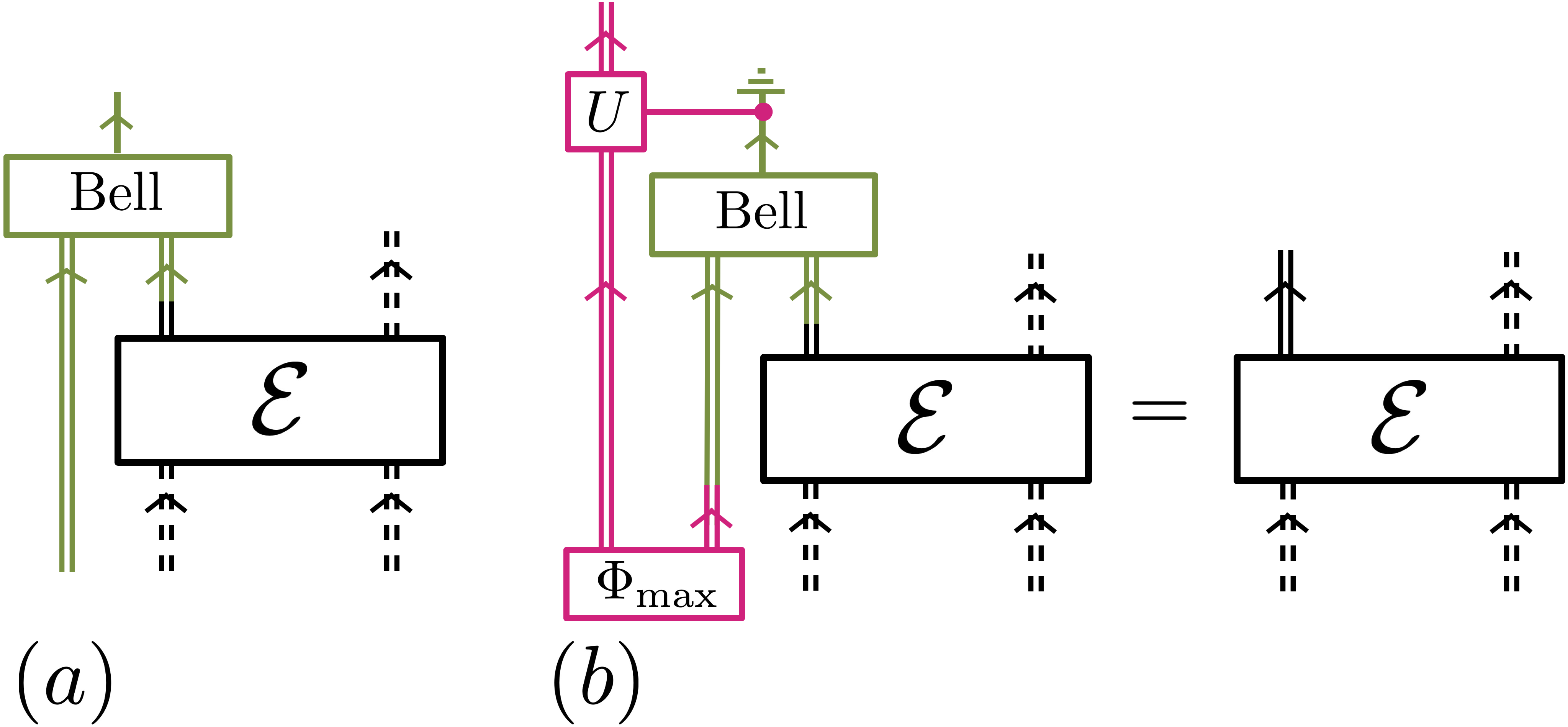}
\caption{  (a) A free transformation (in green) that converts a quantum output to a classical output together with a new quantum input. 
(b) This transformation does not change the LOSE equivalence class, since it has a left inverse (shown in pink) which is a free transformation.} \label{SQandBack}
\end{figure}
To see that this transformation preserves LOSE equivalence class, it suffices to note that there exists a local (and hence free) operation, shown in pink on the left-hand side of Fig.~\ref{SQandBack}(b), which takes the transformed channel back to the original channel $\mathcal{E}$. In particular, this local operation feeds one half of a maximally entangled state $\Phi_{\rm max}$ into the Bell measurement, and then performs a correcting unitary operation $U$ on the other half of the entangled state, conditioned on the classical outcome of the Bell measurement. For the correct choice of correction operations, the overall transformation on $\mathcal{E}$ is just the well-known teleportation protocol~\cite{Bennett93}, and so the equality shown in Fig.~\ref{SQandBack}(b) holds. Hence, the channel in Fig.~\ref{SQandBack}(a) is in the same LOSE equivalence class as $\mathcal{E}$, which implies that every partition of a resource can be transformed to a resource of type $\mathsf{Q} \too \mathsf{C}$ in the same equivalence class.
\end{proof}

\subsection{Proof of Theorem~\ref{subsumestrat}}

We now prove Theorem~\ref{subsumestrat}, since it will be useful for the next proof. 

\begin{proof} \label{proof2}

This proof refers to the decomposition of a given game into an analyzer and a payoff function, as introduced and detailed in Ref.~\cite{schmid2020type}.

If $R' \LOSEconv \mathcal{E}_{T} \implies R \LOSEconv \mathcal{E}_{T}$, then $R$ can generate any strategy for any given game $G_{\! T}$ that $R'$ can, and so always performs at least as well as $R'$ at $T$-games, and so $R \succeq_{\mathcal{G}_{\! T}} R'$.

To prove the converse, consider a set of games of type $T$ defined by ranging over all possible payoff functions $F_{\rm payoff}(abxy)$ for some fixed analyzer $Z$---that is, a specific tomographically complete measurement for each output system of the resource and a specific tomographically complete set of states for each input system of the resource.
Assume that $R' \LOSEconv \mathcal{E}_{T}$ for some strategy $\mathcal{E}_{T}$, and define $P_{Z\circ \mathcal{E}_T}(ab|xy) = Z \circ \mathcal{E}_{T}$. 
For $R \succeq_{\mathcal{G}_{\! T}} R'$, it must be that $R \LOSEconv \mathcal{E}_{T}'$ for at least one strategy $\mathcal{E}_{T}'$ satisfying $P_{Z\circ \mathcal{E}'_T}(ab|xy) = Z \circ \mathcal{E}_{T}'$. If this were {\em not} the case, then the convex set $S(R)$ of all correlations which $R$ can generate in this scenario, $S(R):=\left\{P_{Z \circ \tau \circ R}(ab|xy)=Z \circ \tau \circ R \right\}_{\tau \in {\rm LOSE}}$, would not contain $P_{Z\circ \mathcal{E}_T}(ab|xy)$, and the hyperplane which separated $P_{Z\circ \mathcal{E}_T}(ab|xy)$ from $S$ would constitute a payoff function $F_{\rm payoff}$ for which $R'$ outperformed $R$, which would be in contradiction with the claim that $R \succeq_{\mathcal{G}_{\! T}} R'$. By tomographic completeness, the preimage of every correlation  under $Z$ contains at most one strategy. Hence, if two strategies map to the same correlation, then they must be the same strategy, and so it must be that $\mathcal{E}_{T}=\mathcal{E}_{T}'$ in argument above. That is, we have shown that if $R \succeq_{\rm SQ} R'$ and $R' \LOSEconv \mathcal{E}_{T}$, then $R \LOSEconv \mathcal{E}_{T}$.
\end{proof}

\subsection{Proof of Theorem~\ref{proporder}} \label{proof3}

Finally, we prove Theorem~\ref{proporder}.

\begin{proof}
Consider the set $\mathcal{G}_{\! T}$ of all games of type $T$ and two resources $R_1$ and $R_2$, both of type $T'$, where $T \succeq_{\rm type} T'$.
Clearly $R_1 \succeq_{\rm LOSE} R_2$ implies $R_1 \succeq_{\mathcal{G}_{\! T}} R_2$, since $R_1 \succeq_{\rm LOSE} R_2$ implies that $R_1$ can be used to freely generate $R_2$ and hence to generate any strategy which can be generated using $R_2$.
Next, we prove that $R_1 \succeq_{\mathcal{G}_{\! T}} R_2$ implies $R_1 \succeq_{\rm LOSE} R_2$. By assumption, $T \succeq_{\rm type} T'$, and so for $R_2$ of type $T'$, there exists a strategy $\mathcal{E}_{ T}$ for games of type $T$ such that $R_2 \LOSEinterconv \mathcal{E}_{ T}$.
Since $R_1 \succeq_{\mathcal{G}_{\! T}} R_2$, Theorem~\ref{subsumestrat} tells us that $R_2 \LOSEconv \mathcal{E}_{ T}$ implies $R_1 \LOSEconv \mathcal{E}_{ T}$, and hence $R_1 \LOSEconv R_2$ by transitivity.
Hence we have proven that the two orderings are the same; that is, $R_1 \succeq_{\rm LOSE}R_2$ if and only if $R_1 \succeq_{\mathcal{G}_{\! T}}R_2$. 
\end{proof}

\section{Methods for determining if a given channel is or is not LOSE-free} \label{neccond}

Ref.~\cite{causallocaliz} gives (in Theorem~5 therein) a necessary condition for a channel to be LOSE-free:
\begin{thm} \label{neccond1}
If $\mathcal{E}$ is an LOSE-free superoperator on $\mathcal{H}_A \otimes \mathcal{H}_B$, and $\ket{\psi}$, $A \otimes I \ket{\psi}$, and $I \otimes B \ket{\psi}$ are all eigenstates of $\mathcal{E}$ (where $A$ and $B$ are invertible operators), then $A\otimes B \ket{\psi}$ is also an eigenstate of $\mathcal{E}$.
\end{thm}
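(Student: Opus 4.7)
My plan is to dilate $\mathcal{E}$ via its LOSE-free Stinespring representation, translate the three eigenstate hypotheses into statements about the dilation, and then combine them to force a product form on a fourth state. Since $\mathcal{E}$ is trace-preserving, any pure eigenstate must have eigenvalue $1$, so the three hypothesized eigenstates are actually fixed pure states of $\mathcal{E}$. Writing the LOSE-free dilation as
\[
\mathcal{E}(\rho) \;=\; \Tr_{E_AE_B}\bigl[W(\rho\otimes\ket{\Phi}\bra{\Phi})W^{\dagger}\bigr], \qquad W:=U_A\otimes U_B,
\]
with $U_A$ a unitary on $\mathcal{H}_A\otimes\mathcal{H}_{E_A}$, $U_B$ a unitary on $\mathcal{H}_B\otimes\mathcal{H}_{E_B}$, and $\ket{\Phi}_{E_AE_B}$ an arbitrary shared entangled state, and using that the partial trace of a pure state is pure only when the state factorizes across the bipartition, I conclude that every fixed pure state $\ket{\chi}$ of $\mathcal{E}$ must satisfy $W(\ket{\chi}\otimes\ket{\Phi})=\ket{\chi}\otimes\ket{\eta_{\chi}}$ for some pure $\ket{\eta_{\chi}}_{E_AE_B}$.

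I then exploit that $W$ is a tensor product across the $AE_A\mid BE_B$ partition. Defining $\hat A := U_A(A\otimes I_{E_A})U_A^{\dagger}$ on $AE_A$ and $\hat B := U_B(B\otimes I_{E_B})U_B^{\dagger}$ on $BE_B$, the two operators commute because they act on disjoint subsystems, and a direct check gives the intertwiners $WA_A=\hat A W$ and $WB_B=\hat B W$, where $A_A$ and $B_B$ denote $A$ and $B$ extended by identities. Combining these intertwiners with the product-form outputs of $W$ on the three given fixed states yields the two identities
\[
\hat A(\ket{\psi}\otimes\ket{\eta}) = (A\otimes I)\ket{\psi}\otimes\ket{\eta_A}, \qquad \hat B(\ket{\psi}\otimes\ket{\eta}) = (I\otimes B)\ket{\psi}\otimes\ket{\eta_B},
\]
where $\ket{\eta}\equiv\ket{\eta_{\psi}}$. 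Proving the theorem then reduces to showing that $\hat A\hat B(\ket{\psi}\otimes\ket{\eta})=W\bigl((A\otimes B)\ket{\psi}\otimes\ket{\Phi}\bigr)$ likewise factorizes as $(A\otimes B)\ket{\psi}\otimes\ket{\eta_{AB}}$ for some pure $\ket{\eta_{AB}}$; tracing out $E_AE_B$ then yields $\mathcal{E}$ fixing $(A\otimes B)\ket{\psi}\bra{\psi}(A^{\dagger}\otimes B^{\dagger})$, as required.

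The crux is extracting this factorization from the two identities above. Using the commutativity of $\hat A$ with $\hat B$ and of $A_A$ with $\hat B$ (they act on disjoint subsystems), one obtains $\hat A\hat B(\ket{\psi}\otimes\ket{\eta})=A_A\,\hat B(\ket{\psi}\otimes\ket{\eta_A})$, so the problem collapses to evaluating $\hat B$ on $\ket{\psi}\otimes\ket{\eta_A}$ rather than on the ``source'' state $\ket{\psi}\otimes\ket{\eta}$. The main obstacle is that the hypotheses only directly pin down $\hat B$'s action on the latter. Since $\hat A$ is invertible (because $A$ is), one may write $\ket{\psi}\otimes\ket{\eta_A}=A_A^{-1}\hat A(\ket{\psi}\otimes\ket{\eta})$ and then chase the commutations of $\hat A,\hat B,A_A,B_B$; combined with the companion identity for $\hat B$ and the invertibility of $B$, this should pin $\hat B(\ket{\psi}\otimes\ket{\eta_A})$ down to the required form $(I\otimes B)\ket{\psi}\otimes\ket{\eta_{AB}}$, essentially by a Schmidt-matching argument that uses the product structure of both $W$ and $\ket{\Phi}$ in an essential way. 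This final step is the chief technical hurdle, and closely follows the strategy of the original proof in Ref.~\cite{causallocaliz}.
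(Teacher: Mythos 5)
First, a framing point: the paper does not prove this statement itself---it is quoted verbatim from Theorem~5 of Ref.~\cite{causallocaliz}, so there is no in-paper proof to compare against. Judged on its own terms, your setup is sound and follows the standard route: dilating the \LOSE channel to $W=U_A\otimes U_B$ acting on $\rho\otimes\ket{\Phi}\bra{\Phi}$, using purity of the output marginal to conclude $W(\ket{\chi}\otimes\ket{\Phi})=\ket{\chi}\otimes\ket{\eta_\chi}$ for each fixed pure state, and introducing the conjugated operators $\hat A,\hat B$ on the complementary halves $AE_A$ and $BE_B$ together with the intertwiners $WA_A=\hat AW$ and $WB_B=\hat BW$. (A minor omission: the shared ancilla may be mixed, but its purification can be absorbed into one party's environment without changing the channel, so the pure-$\ket{\Phi}$ form is without loss of generality.)

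The proof is nevertheless incomplete, and the route you sketch for the decisive step is circular. You correctly reduce the problem to evaluating $\hat B(\ket{\psi}\otimes\ket{\eta_A})$, but your proposal---write $\ket{\psi}\otimes\ket{\eta_A}=A_A^{-1}\hat A(\ket{\psi}\otimes\ket{\eta})$ and ``chase the commutations''---only yields $\hat B(\ket{\psi}\otimes\ket{\eta_A})=A_A^{-1}\hat A\hat B(\ket{\psi}\otimes\ket{\eta})$, since $A_A^{-1}$ and $\hat A$ both commute with $\hat B$; this re-expresses the unknown in terms of exactly the quantity $\hat A\hat B\ket{\chi}$ you are trying to compute. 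The missing ingredient is the Schmidt structure of $\ket{\chi}=\ket{\psi}\otimes\ket{\eta}$ across the cut $AE_A\mid BE_B$. Writing $\ket{\psi}=\sum_m\sqrt{p_m}\ket{m}_A\ket{m}_B$ and $\ket{\eta}=\sum_n\sqrt{q_n}\ket{n}_{E_A}\ket{n}_{E_B}$, the Schmidt vectors are $\ket{u_{mn}}=\ket{m}_A\ket{n}_{E_A}$ and $\ket{v_{mn}}=\ket{m}_B\ket{n}_{E_B}$ with coefficients $\sqrt{p_mq_n}$. Because $\hat A$ acts only on $AE_A$, projecting $\hat A\ket{\chi}=(A\otimes I)\ket{\psi}\otimes\ket{\eta_A}$ onto $\bra{v_{mn}}$ gives $\hat A\ket{u_{mn}}=q_n^{-1/2}\,A\ket{m}_A\otimes(I_{E_A}\otimes\bra{n}_{E_B})\ket{\eta_A}$ for every Schmidt vector of nonzero weight, and likewise $\hat B\ket{v_{mn}}=q_n^{-1/2}\,B\ket{m}_B\otimes(\bra{n}_{E_A}\otimes I_{E_B})\ket{\eta_B}$. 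Substituting into $\hat A\hat B\ket{\chi}=\sum_{mn}\sqrt{p_mq_n}\,(\hat A\ket{u_{mn}})(\hat B\ket{v_{mn}})$, the $m$- and $n$-dependences separate and the sum factorizes as $(A\otimes B)\ket{\psi}\otimes\ket{\eta_{AB}}$ with $\ket{\eta_{AB}}=\sum_n q_n^{-1/2}\big((I\otimes\bra{n})\ket{\eta_A}\big)\otimes\big((\bra{n}\otimes I)\ket{\eta_B}\big)$; tracing out the environments and using invertibility of $A\otimes B$ (so that the output state is nonzero) completes the argument. This is precisely the ``Schmidt-matching'' computation you allude to but do not carry out; without it, the theorem has not been proved.
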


This theorem was leveraged in Ref.~\cite{causallocaliz} (as well as above) to prove that the BGNP channel is postquantum.

%
%

Ref.~\cite{Beckmanthesis} gives (in Theorem~3 therein) a sufficient condition for a channel to be LOSE-free:
\begin{thm}
Any bipartite channel $\mathcal{E}$ that can be implemented using local operations and one-way classical communication from Alice to Bob and that has an eigenstate of Schmidt rank equal to the dimension of Alice's input system is LOSE-free.
\end{thm}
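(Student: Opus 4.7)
The plan is to convert the given one-way LOCC implementation of $\mathcal{E}$ into an LOSE implementation that uses the maximal-Schmidt-rank eigenstate $|\psi\rangle$ as the shared entanglement. First I would refine the one-way LOCC implementation into a rank-one Kraus form
\[
\mathcal{E}(\rho) = \sum_\alpha (A_\alpha \otimes B_\alpha)\,\rho\,(A_\alpha \otimes B_\alpha)^\dagger,
\]
where the index $\alpha$ absorbs Alice's classical outcome together with any Kraus refinements of her instrument and Bob's conditional response. In this form $\{A_\alpha^\dagger A_\alpha\}_\alpha$ is an effective POVM on Alice's input and the $B_\alpha$ are Kraus operators for Bob's conditional CPTP response.

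Next I would exploit the eigenstate condition to extract rigid algebraic structure on these Kraus operators. Since $\mathcal{E}(|\psi\rangle\langle\psi|) = |\psi\rangle\langle\psi|$ writes a pure state as a convex combination of positive semidefinite operators, purity forces each nonzero term to be proportional to $|\psi\rangle\langle\psi|$, so $(A_\alpha \otimes B_\alpha)|\psi\rangle = c_\alpha|\psi\rangle$ for each $\alpha$ (trivial terms with $c_\alpha = 0$ may be discarded). Writing the Schmidt decomposition $|\psi\rangle = \sum_i \sqrt{\lambda_i}|i\rangle_A |i\rangle_B$ with all $\lambda_i > 0$ (by the maximal-Schmidt-rank hypothesis) and applying the generalized mirror identity $(M\otimes I)|\psi\rangle = (I\otimes D M^T D^{-1})|\psi\rangle$ with $D=\mathrm{diag}(\sqrt{\lambda_i})$, I would deduce that every $A_\alpha$ is invertible and that $B_\alpha = c_\alpha\, D A_\alpha^{-T} D^{-1}$. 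Thus Bob's Kraus operators are completely determined by Alice's Kraus operators together with the fixed Schmidt data of $|\psi\rangle$.

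With this rigidity in hand I would build the LOSE protocol by having Alice and Bob pre-share $|\psi\rangle$ and each implement a local coherent dilation of their side of the LOCC protocol. The conceptual point is that the relation $B_\alpha = c_\alpha D A_\alpha^{-T} D^{-1}$ lets Bob's half of $|\psi\rangle$ act as a self-decoding register: when Alice coherently applies the Kraus branch $\alpha$ on her side (entangling her outcome register with her input), the mirror structure of $|\psi\rangle$ automatically routes the correct $B_\alpha$ onto Bob's input without any classical communication. The main obstacle I anticipate is making this heuristic fully rigorous, namely showing that the overall local protocol reproduces $\mathcal{E}$ on \emph{all} input states rather than merely on $|\psi\rangle$ itself. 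I expect to overcome this with a Choi-state argument: because $|\psi\rangle$ has full Schmidt rank on Alice's side, the map $M \mapsto (M\otimes I)|\psi\rangle$ is injective, so the eigenstate relation extends uniquely to a structural identity on the full set of Kraus operators, which in turn guarantees that the LOSE dilation implements the correct channel on every input.
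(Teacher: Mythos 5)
The paper does not actually prove this statement---it imports it verbatim as Theorem~3 of Beckman's thesis---so your attempt must stand on its own. Your opening moves are sound and are indeed the standard way into such results: a one-way LOCC channel does admit a product Kraus form $\{A_\alpha\otimes B_\alpha\}$; purity of $\mathcal{E}(\ket{\psi}\!\bra{\psi})$ does force $(A_\alpha\otimes B_\alpha)\ket{\psi}=c_\alpha\ket{\psi}$ for each $\alpha$; and the mirror identity plus full Schmidt rank on Alice's side does force $A_\alpha$ invertible and $B_\alpha=c_\alpha D A_\alpha^{-T}D^{-1}$ whenever $c_\alpha\neq 0$. But two of your steps do not hold up. First, the terms with $c_\alpha=0$ cannot be ``discarded'': such a term annihilates $\ket{\psi}$ but need not vanish as an operator (e.g.\ $A_\alpha$ with nontrivial kernel and $B_\alpha$ annihilating the image of $DA_\alpha^T D^{-1}$), and it still contributes to $\mathcal{E}$ on other inputs; dropping it changes the channel. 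Second, even for $c_\alpha\neq 0$ the relation pins down $B_\alpha$ only on the $d_A$-dimensional support of $\mathrm{tr}_A\ket{\psi}\!\bra{\psi}$; if $\dim\mathcal{H}_B>d_A$ the eigenstate condition says nothing about how $B_\alpha$ acts on the orthocomplement, yet Bob's actual inputs range over all of $\mathcal{H}_B$.

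The more serious gap is that the LOSE protocol is never constructed, and the construction is the entire content of the theorem. The mirror identity is a statement about operators acting on the \emph{shared} state $\ket{\psi}_{\tilde A\tilde B}$; Bob's input is a separate system, and ``the mirror structure automatically routes the correct $B_\alpha$ onto Bob's input'' does not specify the fixed local channel Bob must apply to $(\tilde B, B_{\rm in})$. Note what the protocol has to accomplish: in the LOCC implementation Bob's operation $\mathcal{B}_k$ depends on Alice's outcome $k$, whose distribution depends on Alice's input, yet no-signaling guarantees Bob can never learn $k$ locally. So any LOSE simulation must let Bob effectively apply the right $\mathcal{B}_k$ without ever knowing $k$---this is where all the derived rigidity (e.g.\ that $DA_k D^{-1}$ is proportional to a unitary, which follows from trace preservation of $\mathcal{B}_k$ combined with your mirror relation) must be cashed in, and your proposal never does so. The concluding ``Choi-state argument'' addresses a different question (uniqueness of the Kraus structure given the eigenstate relation), not the existence of local operations $\mathcal{F}_A$, $\mathcal{F}_B$ with $(\mathcal{F}_A\otimes\mathcal{F}_B)(\rho\otimes\ket{\psi}\!\bra{\psi})=\mathcal{E}(\rho)$ for all $\rho$. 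As it stands the argument establishes necessary structure on the Kraus operators but does not prove LOSE-freeness; you would need to either supply the explicit protocol or follow Beckman's original construction.
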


Finally, note that if one can establish that a given distributed channel outperforms all LOSE strategies at some operational task, then it must be nonfree. This is true even for operational tasks that are not subsumed within our framework of distributed games, as long as one allows the players of the game to optimize over all LOSE strategies, since the score in such tasks constitutes an LOSE-monotone.
For example, in Ref.~\cite{causallocaliz}, the BGNP channel is shown to be a valuable postquantum resource by virtue of reducing the communication complexity of the inner product function below what could be achieved by LOSE operations.
In resource theoretic language: the amount of quantum communication required to simulate a postquantum resource (optimized over all LOSE operations) is a monotone~\cite{Gonda2019}, and the value assigned to the BGNP channel by this monotone is larger than can be achieved by any free resource. This is another proof that it is not LOSE-free.

\end{document}